\documentclass[11pt]{article}
\newif\ifdraft
\drafttrue
\newcommand{\ZGnote}[1]{\ifdraft\textcolor{blue}{[Zeyu: #1]}\fi}

\newcommand{\ZSnote}[1]{\ifdraft\textcolor{green}{[Chong: #1]}\fi}

\usepackage{geometry}
\usepackage{xcolor}
\usepackage{amsfonts}
\usepackage{amsmath}
\usepackage{amssymb}
\usepackage{keytheorems}
\usepackage{cellspace} 
\definecolor{myRed}{RGB}{187,0,0}
\usepackage[colorlinks = true, citecolor=myRed,
urlcolor=myRed,
linkcolor=myRed,
]{hyperref}
\usepackage[capitalize]{cleveref}

\newkeytheorem{theorem}[name=Theorem,parent=section]
\newkeytheorem{lemma}[name=Lemma,sibling=theorem]
\newkeytheorem{proposition}[name=Proposition,sibling=theorem]
\newkeytheorem{corollary}[name=Corollary,sibling=theorem]
\newkeytheorem{claim}[name=Claim,sibling=theorem]
\newkeytheorem{fact}[name=Fact,sibling=theorem]
\newkeytheorem{question}[name=Question]

\newkeytheorem{remark}[name=Remark,sibling=theorem]

\newkeytheorem{definition}[name=Definition,sibling=theorem,style=definition]

\newcommand{\eps}{\varepsilon}
\renewcommand{\epsilon}{\eps}

\newcommand{\ex}{\mathop{{}\mathbb{E}}}
\newcommand{\rank}{\mathrm{rank}}

\newcommand{\F}{\mathbb{F}}
\newcommand{\C}{\mathbb{C}}
\newcommand{\Z}{\mathbb{Z}}

\newcommand{\qb}[2]{\genfrac{(}{)}{0pt}{}{#1}{#2}_q}
\newcommand{\Span}{\mathrm{span}}
\newcommand{\poly}{\mathrm{poly}}
\newcommand{\supp}{\mathrm{supp}}
\newcommand{\GL}{\mathrm{GL}}
\newcommand{\Sym}{\mathrm{Sym}}
\newcommand{\Tr}{\mathrm{Tr}}
\newcommand{\VBP}{\mathrm{VBP}}
\newcommand{\VP}[3]{\VBP_{#1,#2,#3}^1}

\DeclareMathOperator{\rowspan}{rowspan}
\DeclareMathOperator{\colspan}{colspan}
\DeclareMathOperator{\PG}{PG}

\begin{document}
\date{}
\title{Explicit Rank Extractors and Subspace Designs via Function Fields, with Applications to Strong Blocking Sets}

\author{
Zeyu Guo\footnotemark[1]
\and
Roshan Raj\footnotemark[1]
\and
Chong Shangguan\footnotemark[2]
\and
Zihan Zhang\footnotemark[1]
}

\footnotetext[1]{The Ohio State University, \texttt{zguotcs@gmail.com}, \texttt{amritanshus128@gmail.com}, \texttt{zhang.13691@osu.edu}}
\footnotetext[2]{Shandong University, \texttt{theoreming@163.com}}
\makeatletter
\def\thanks#1{\protected@xdef\@thanks{\@thanks
        \protect\footnotetext{#1}}}
\makeatother


\maketitle
\begin{abstract}
We give new explicit constructions of several fundamental objects in linear-algebraic pseudorandomness and combinatorics, including lossless rank extractors, subspace designs, and strong $s$-blocking sets over finite fields.

Our focus is on the small-field regime, where the field size depends only on a secondary parameter (such as the rank or codimension) and is independent of the ambient dimension. This regime is central to several applications, yet remains poorly understood from the perspective of explicit constructions.

In this setting, we obtain the first explicit constructions of lossless rank extractors and subspace designs when the target rank or codimension $r$ is much smaller than the ambient dimension $k$, over finite fields $\mathbb{F}_q$ with $q \ge \mathrm{poly}(r)$ and $q$ non-prime, with near-optimal parameters. For other finite fields, including prime fields and small fields, we obtain weaker but still improved bounds.

As a consequence, we construct explicit strong $s$-blocking sets in $\mathrm{PG}(k-1,q)$ of size $O(s(k-s)q^s)$ for all sufficiently large non-prime fields $q \ge \mathrm{poly}(s)$, matching the best known non-explicit bounds up to constant factors. This significantly improves the previous best bound $2^{O(s^2 \log s)} q^s k$ of Bishnoi and Tomon (Combinatorica, 2026), which requires $q \ge 2^{\Omega(s)}$.

Our main algebraic constructions are function-field analogs of the rank-extractor constructions of Gabizon--Raz and Forbes--Shpilka. We further combine these constructions with PIT-based field reduction. In addition, we develop a complementary Fourier-analytic framework based on $\varepsilon$-biased sets, which yields improved explicit constructions of strong $s$-blocking sets over small fields.
\end{abstract}

\section{Introduction}

A central theme in theoretical computer science is the explicit construction of combinatorial and algebraic objects that match the guarantees of the probabilistic method. In recent years, a rich theory of \emph{linear-algebraic pseudorandomness} has emerged, focusing on structured collections of linear maps that behave like random ones with respect to rank and dimension. Prominent examples include lossless rank extractors, subspace designs, and dimension expanders, which play a key role in polynomial identity testing, coding theory, and derandomization.

A recurring challenge in this area is to obtain explicit constructions over \emph{small finite fields}. While the probabilistic method often yields near-optimal parameters over any field, known explicit constructions typically require the field size to grow with the ambient dimension. Bridging this gap, especially in the regime where the field size depends only on a small parameter, is a central open problem with numerous applications.

In this work, we study this problem for several natural objects in linear-algebraic pseudorandomness and combinatorics, including lossless rank extractors, subspace designs, and strong $s$-blocking sets. Our main algebraic approach uses function fields and polynomial identity testing, and for strong $s$-blocking sets we also develop a complementary Fourier-analytic approach via $\eps$-biased sets.

\paragraph{Lossless rank condensers and extractors.}

A central object in the theory of ``linear-algebraic pseudorandomness'' \cite{FG15} is the notion of (seeded) lossless rank condensers and extractors \cite{GR08,FS12,FSS14,FG15}.

\begin{definition}[Lossless rank condensers and extractors]\label{def:rank-condenser}
Let $r \le t \le k$ be positive integers. A finite collection $\mathcal{E}=\{E_i\}_{i=1}^n \subseteq \F^{t\times k}$ of matrices over a field $\F$ is called a $(k,r,t,L)$ weak lossless rank condenser over $\F$ if for every matrix $M \in \F^{k \times r}$ of rank $r$, the number of indices $i\in [n]$ for which $\rank(E_i M) < r$ is at most $L$.
It is called a $(k,r,t,L)$ strong lossless rank condenser over $\F$ if 
\[
\sum_{i=1}^n (r-\rank(E_i M))\leq L.
\] 
Note that a strong lossless rank condenser is also a weak lossless rank condenser with the same parameters.

We call $n$ the size of $\mathcal{E}$.
When $r=t$, a $(k,r,t,L)$ weak (resp. strong) lossless rank condenser is also called a $(k,r,L)$ weak (resp. strong) lossless rank extractor.
\end{definition}


Lossless rank condensers and extractors were first studied in the weak sense. Gabizon and Raz \cite{GR08} constructed explicit lossless rank extractors, corresponding to the special case $r=t$ in our definition, and used them to obtain explicit affine extractors over large fields. In particular, they gave explicit $(k,r,kr^2)$ lossless rank extractors over finite fields $\F_q$ of any size $n\leq q$. Forbes and Shpilka \cite{FS12} gave a different construction over fields $\F$ with $|\F|>k$ using Vandermonde matrices, achieving $L=kr-\binom{r+1}{2}$. Forbes, Saptharishi, and Shpilka \cite{FSS14} further refined the analysis of this construction and achieved $L=r(k-r)$, which is optimal over algebraically closed fields (see \cite{FZ16,Guo24,brakensiek2025random}).

Guruswami and Forbes \cite{FG15} later introduced the distinction between weak and strong lossless rank condensers and established their correspondence with weak and strong subspace designs, respectively. In particular, Wronskian-based and folded-Wronskian-based constructions, including those arising from earlier work of Forbes and Shpilka \cite{FS12,FSS14} and from constructions of subspace designs by Guruswami and Kopparty \cite{guruswami2016explicit}, yield explicit strong lossless rank condensers through this correspondence.

Beyond their role in constructing affine extractors \cite{GR08}, (weak) lossless rank condensers and extractors play a central role in polynomial identity testing for various models, including depth-3 arithmetic circuits \cite{KS11,KS09,SS13,SS12} and read-once oblivious arithmetic branching programs \cite{FS13,FSS14,AGKS15}. For further applications and a comprehensive discussion, see \cite{FG15}.

Nonlinear variants of these objects have also been studied. Dvir, Gabizon, and Wigderson \cite{DGW09} introduced deterministic rank extractors for polynomial sources and gave explicit constructions. Guo, Volk, Jalan, and Zuckerman \cite{GVJZ23} extended this line of work by defining deterministic rank condensers and extractors for varieties as dimension-preserving maps, yielding explicit constructions for algebraic sources that generalize \cite{DGW09,Dvi12}. They also observed that the objects in Definition \ref{def:rank-condenser} can be viewed as linear seeded rank condensers and extractors for varieties. Finally, Guo \cite{Guo24} introduced variety evasive subspace families as a nonlinear generalization, with further applications to polynomial identity testing.

In this paper, we focus on explicit\footnote{Here and throughout, explicitness means that the family can be deterministically constructed in time polynomial in $q$ and the relevant dimension and output-size parameters. In particular, a generator of $\F_q^\times$, when needed, can be found by exhaustive search in time polynomial in $q$. We do not attempt to optimize the dependence on $q$, as our main regime of interest is that of small $q$ and much larger ambient dimension.} lossless rank \emph{extractors} over a finite field $\F_q$, i.e., the case $r=t$ and $\F=\F_q$ in Definition \ref{def:rank-condenser}. This setting is crucial both for some of the applications discussed above and for our construction of strong $s$-blocking sets, discussed below.

From a non-explicit perspective, the probabilistic method shows that $(k,r,L)$ strong lossless rank extractors of small size exist over any finite field $\F_q$, with $L=\poly(k)$ (and in fact $L=O(r(k-r))$; see \cref{thm:existence}). However, even for weak lossless rank extractors, known explicit constructions such as \cite{GR08,FS12} achieve such parameters only when the field size is sufficiently large. 

To avoid trivialities, we assume $L<n$, since otherwise one could allow $\rank(E_i M)<r$ for all $i\in[n]$. Under this assumption, the constructions in \cite{GR08,FS12} require $q \ge n$, and hence $q > L$. More concretely, this gives $q > kr^2$ in \cite{GR08}, and $q>kr-\binom{r+1}{2}$ for the original analysis of \cite{FS12}, and $q>r(k-r)$ after the refinement of \cite{FSS14}. In addition, the construction in \cite{FS12} requires $q>k$, as it relies on the existence of an element $\gamma\in\F_q^\times$ whose multiplicative order is at least $k$.

A natural approach to constructing lossless rank extractors over small fields is via concatenation or field-reduction techniques. Indeed, \cite[Proposition~8.5]{FG15} shows that a $(k,r,t,L)$ strong (resp. weak) lossless rank condenser over $\F_{q^d}$ can be converted into a $(k,r,dt,L)$ strong (resp. weak) lossless rank condenser over $\F_q$. However, unless $d=1$, this transformation does not preserve the lossless rank \emph{extractor} property, as it necessarily increases the output dimension to $dt > r$.

This motivates the problem of constructing explicit lossless rank extractors over small fields. The regime of interest to us is one in which the rank parameter $r$ is much smaller than the ambient dimension $k$, while the field size is small and, ideally, bounded independently of $k$. This is also the regime relevant to our applications to strong blocking sets.

In particular, we ask:

\begin{question}\label{question:RC-small-field}
Suppose $r \ll k$. Do there exist explicit $(k,r,L)$ strong or weak lossless rank extractors of size $n > L$ over $\F_q$ such that $L=\poly(k,r)$ and $q$ depends only on $r$?
\end{question}

As we will see, we answer \cref{question:RC-small-field} in the affirmative even for strong lossless rank extractors.

\paragraph{Subspace designs.}
Informally, a \emph{subspace design} is a collection of linear subspaces arranged so that every subspace of a given dimension intersects the family only in a limited way --- either by intersecting only few members (weak designs) or by having small total intersection dimension (strong designs).

\begin{definition}[Subspace designs]
    A collection of subspaces $\{V_i\}_{i\in[n]}\subseteq\mathbb{F}_q^k$ is an $(r,A)$ subspace design if for any subspace $W\subseteq\mathbb{F}_q^k$ of dimension $r$, we have
    \[\sum_{\substack{i\in [n]}}{\bf 1}[V_i\cap W\neq \{0\}]\leq A\quad \text{(weak design)}\qquad \text{or}\qquad\sum_{\substack{i\in [n]}}\dim(V_i\cap W)\leq A \quad \text{(strong design)},\]
where ${\bf 1}[V_i\cap W\neq \{0\}]$ is the indicator function of the event $V_i\cap W\neq \{0\}$.
\end{definition}

Introduced by Guruswami and Xing~\cite{guruswami2013list}, subspace designs have since become a central combinatorial primitive in coding theory and beyond. While the probabilistic method yields subspace designs with near-optimal size and dimension~\cite{guruswami2013list,guruswami2016explicit}, obtaining explicit constructions is substantially more challenging. Guruswami and Kopparty~\cite{guruswami2016explicit} gave the first explicit construction of even strong subspace designs with near-optimal parameters over large fields, based on folded Reed–Solomon and univariate multiplicity codes, thereby fully derandomizing the code constructions of~\cite{guruswami2013list}. Subsequently, Guruswami, Xing, and Yuan~\cite{guruswami2018subspace} constructed strong subspace designs over small finite fields using algebraic function
fields. In the notation where $r$ is the dimension of the test subspace and $t$ is the codimension of each member of the family, their constructions achieve
$A =
O\left(
\frac{rk\log_q k}{t-r+1}
\right)$
for infinitely many ambient dimensions $k$. In particular, their results
include the case $t=r$, giving $A=O(rk\log_q k)$.
They also obtain the stronger bound
$A =
O_c\left(
\frac{rk}{t-r+1}
\right)$
in a more restricted parameter regime in which $k=O(q^c)$, where $c>0$ is an arbitrary constant. Thus, over
fields whose size is independent of $k$, their construction incurs an
additional factor of $\log_q k$. Our results remove this factor in the
case $t=r$ in the parameter regimes covered by our constructions.

Subspace designs play a fundamental role in the construction of list-decodable codes. Many of the best-known constructions---ranging from Reed–Solomon and folded Reed–Solomon codes to algebraic geometry, multiplicity, and rank-metric codes---arise by combining classical algebraic codes with subspace design-based pruning~\cite{guruswami2013list,guruswami2016explicit,GX22}. More recently, this paradigm has been substantially strengthened: the work of Chen and Zhang~\cite{chen2025explicit} shows that explicit folded Reed–Solomon and multiplicity codes themselves achieve optimal list-decodability, and in fact identifies subspace designs as the underlying mechanism governing this phenomenon. In particular, any code admitting sufficiently strong subspace designs---termed \emph{subspace designable codes} \cite{chen2025explicit} or \emph{subspace design codes} \cite{Goyal2025OptimalPG}---automatically enjoys near-optimal list-decodability.

Very recently, Goyal, Guruswami, and Hsieh~\cite{GGH26} constructed constant-alphabet subspace-design codes using the Alon--Edmonds--Luby framework. The kernels of the coordinate maps of their additive codes form strong subspace designs with near-optimal intersection bounds. Their results are most relevant when the codimension of these kernels is substantially larger than the dimension of the test subspace. Exploiting this additional codimension can be important in applications \cite{chen2025explicit,Goyal2025OptimalPG,GGH26,CGGRT26}.
By contrast, we focus on the case where these two parameters are equal, which is the case needed for our construction of strong blocking sets.

Beyond coding theory, strong subspace designs have emerged as a central tool in pseudorandomness and linear-algebraic expansion. In particular, they enable explicit constructions of constant-degree lossless dimension expanders~\cite{guruswami2021lossless}, which can be viewed as a linear-algebraic analogue of expander graphs.
More recently, influenced by the work of Chen and Zhang \cite{chen2025explicit}, a sequence of works~\cite{brakensiek2025random,brakensiek2025combinatorial,Goyal2025OptimalPG} has uncovered a significantly broader and more structural role for subspace designs. These works position strong subspace designs as a unifying bridge between random and explicit constructions, leading to systematic derandomization frameworks and new connections to local properties, matroid theory, and proximity gaps---a central notion in modern proof systems. 

While most of the aforementioned applications primarily rely on strong subspace designs, explicit constructions of weak subspace designs are equally important. In particular, as shown in \cite{FZ14,FZ16}, weak subspace designs are sufficient to construct strong blocking sets, a prominent object in finite geometry that we discuss in the next paragraph. 
More specifically, recent work~\cite{brakensiek2025random} (see also \cite{FZ16,Guo24}) establishes lower bounds on the parameters of weak subspace designs. In particular, it shows that for any collection of subspaces $\{V_i\}_{i\in[n]} \subseteq \overline{\mathbb{F}}^k$ of dimension $r$ (where $\overline{\mathbb{F}}$ is an algebraically closed field), if it forms a $(k-r, A)$ subspace design, then necessarily $A \ge r(k-r)$.
For sufficiently large fields, explicit constructions due to Guruswami and Kopparty~\cite{guruswami2016explicit} match this lower bound even for strong subspace designs. In contrast, obtaining comparable explicit constructions even for weak subspace designs over small fields remained open, as discussed below, and is of particular importance for applications such as constructing strong blocking sets.

In the following question, we focus on the case in which the codimension of
each member of the family equals the dimension of the test subspace.

\begin{question}\label{q2}
    Can one explicitly construct a collection of subspaces $\{V_i\}_{i\in[n]} \subseteq \mathbb{F}^k$ of dimension $r$ (resp. dimension $k-r$) over fields of size $|\mathbb{F}| = O_r(1)$ that forms a $(k-r,A)$ (resp. $(r, A)$) strong or weak subspace design with $A = O(r(k-r))$, or with $A$ close to this lower bound?
\end{question}

As we will see, our results give a positive answer to Question~\ref{q2} even for strong subspace designs, achieving $q=\poly(r)$ and $A=O(r(k-r))$ in the regimes covered by our constructions.

Taken together, all these developments demonstrate that both strong and weak subspace designs are not merely a technical ingredient in code constructions, but rather a fundamental combinatorial primitive with far-reaching applications across coding theory, pseudorandomness, and algebraic combinatorics.
For a more fine-grained historical overview, we refer the reader to the recent survey by Santonastaso and Zullo~\cite{santonastaso2023subspace}.

\paragraph{Blocking sets.} Blocking sets, introduced by Richardson~\cite{Ric56}, are classical objects in finite geometry that capture the minimal structure required to intersect all subspaces of a given codimension. For $1 \le s \le k$, an \emph{affine $s$-blocking set} is a subset $B \subseteq \F_q^k$ that intersects every affine subspace of codimension $s$. In the projective setting, for $1 \le s \le k-1$, a set $B \subseteq \PG(k-1,q)$ is a \emph{strong $s$-blocking set} if for every codimension-$s$ projective subspace $\Sigma$ of $\PG(k-1,q)$, the intersection $B \cap \Sigma$ spans $\Sigma$, i.e., the smallest projective subspace containing $B\cap\Sigma$ equals $\Sigma$. Let $b_q(k,s)$ and $b_q^*(k,s)$ denote the minimum sizes of affine and strong blocking sets, respectively.

Blocking sets are related to several well-studied notions in discrete mathematics, theoretical computer science, and coding theory, including vertex covers~\cite{Fur88}, subspace designs~\cite{FZ16,guruswami2016explicit}, trifferent codes~\cite{BDGP24}, and intersecting and minimal codes~\cite{CL85,ABN22,TQLZ21,XKH25}.

The affine and projective variants are tightly related: it was shown in~\cite{BDGP24} that a set $B \subseteq \PG(k-1,q)$ is a strong $s$-blocking set if and only if the union of the corresponding lines through the origin forms an affine $(s+1)$-blocking set in $\F_q^k$. In particular,
\begin{equation}\label{eq:strong-affine-connection}
    (q-1) b_q^*(k,s) +1 \ge b_q(k,s+1).
\end{equation}

The main problem is to determine the values of $b_q(k,s)$ and $b_q^*(k,s)$. For $s=1$, a classical result of Jamison~\cite{Jam77} and Brouwer and Schrijver~\cite{BS78} shows that
\[
b_q(k,1) = (q-1)k+1.
\]
For $s \ge 2$, however, the problem becomes significantly more difficult and is related to hard problems in additive combinatorics such as the density Hales--Jewett theorem and the cap set problem (see~\cite{BDGP24}).

The best known general lower bound is
\begin{equation}\label{eq:b_q(k,s)-lb}
    b_q(k,s) \ge (q^s-1)(k-s+1)+1,
\end{equation}
which follows from a geometric argument~\cite{Bal11}. 

On the other hand, viewing blocking sets as vertex covers in a natural hypergraph, the Lov\'asz--Stein theorem~\cite[Theorem~2.16]{Juk11} yields the upper bound
\[
b_q(k,s) \le q^s \left(1 + \ln\qb{k}{s}\right),
\]
where $\qb{k}{s}$ denotes the Gaussian binomial coefficient, i.e., the number of $s$-dimensional subspaces of $\F_q^k$. A recent result of Bishnoi et al.~\cite{BDGP24} improves this to an upper bound that is roughly within a factor $O(s)$ of~\eqref{eq:b_q(k,s)-lb} via a probabilistic argument.

For strong $s$-blocking sets, combining~\eqref{eq:strong-affine-connection} and~\eqref{eq:b_q(k,s)-lb} yields
\begin{equation}\label{eq:b_q^*(k,s)-lb}
    b_q^*(k,s) \ge \frac{(q^{s+1}-1)(k-s)}{q-1}.
\end{equation}
For $s=1$, Bishnoi et al.~\cite{BDGP24} obtained a stronger asymptotic lower bound
\[
b_q^*(k,1) \ge (c_q - o(1))(q+1)(k-1)
\]
for some constant $c_q > 1$, using linear programming bounds from coding theory. The same work also gives a probabilistic upper bound for $b_q^*(k,s)$ which is roughly within a factor $O(s)$ of~\eqref{eq:b_q^*(k,s)-lb} (see~\cite[Remark~3.4]{BDGP24}).

The main challenge is to obtain explicit constructions of strong\footnote{This automatically yields explicit constructions for affine $(s+1)$-blocking sets.} $s$-blocking sets with comparable parameters. When the field size $q$ is sufficiently large as a function of $k$, constructions based on higgledy-piggledy subspaces~\cite{FZ14,FZ16} yield strong $s$-blocking sets of size $O_{k,s}(q^s)$. In the regime where $q,s$ are fixed and $k$ grows, the best known explicit constructions of strong $s$-blocking sets have size $O_s(q^s k)$~\cite{ABDN24,BT24}.

For $s=1$, a breakthrough result of Alon, Bishnoi, Das, and Neri~\cite{ABDN24} gave the first explicit construction of size $O(qk)$ using expander graphs and coding-theoretic tools. This approach was later extended by Bishnoi and Tomon~\cite{BT24} to all $s \ge 2$, yielding constructions of size at most $C_s q^s k$, where $C_s = 2^{O(s^2 \log s)}$. However, the dependence on $s$ in $C_s$ is large, and their construction requires the field size $q$ to be at least exponential in $s$. For smaller fields, an explicit construction of size $q^{O(s^2)}k$ was given in \cite{BT24}. 



Our main contribution is explicit constructions of strong $s$-blocking sets of size $O(s(k-s) q^s)$ over fields $\F_q$ with $q\geq \poly(s)$, together with extensions to smaller fields. This improves the $s$-dependence in the leading constant of the construction in~\cite{BT24} from exponential to polynomial. By \eqref{eq:strong-affine-connection}, these constructions also yield explicit affine $(s+1)$-blocking sets with comparable parameters.

\subsection{Main Results}

In this paper, we obtain new explicit constructions of lossless rank extractors, subspace designs, and strong $s$-blocking sets over small fields, achieving parameters that were previously unknown.

\paragraph{Lossless rank extractors.} Over non-prime finite fields, we obtain the following result on explicit strong lossless rank extractors.

\begin{theorem}[Informal version of \cref{thm:fullrankbound}]\label{thm:rankboundinformal}
For every $r\geq 1$ and every sufficiently large non-prime prime power $q\geq \poly(r)$, there exist infinitely many $k\geq r$ such that one can explicitly construct a $(k,r,L)$ strong lossless rank extractor of size $n$ over $\F_q$ with $L=O(r(k-r))$ and $L/n\leq q^{-1/4}$.
\end{theorem}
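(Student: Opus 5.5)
We would prove this by a function-field analog of the Forbes--Shpilka Vandermonde construction, as the abstract suggests. Write $q=p_0^{m}$ with $m\ge 2$ (this is where non-primality enters), and choose a function field $K/\F_q$ with many rational places relative to its genus. The natural choice is a tower attaining the Drinfeld--Vl\u{a}du\c{t} bound, e.g.\ the Garcia--Stichtenoth tower over $\F_{q}$ with $q=\ell^2$. When $q$ is not a square but non-prime, we would use the known towers over $\F_{\ell^m}$ with ratio $N/g$ bounded below by a function of order $\ell$. This gives $N(K)/g(K)\ge q^{1/2}-1$, or at least $\ge q^{1/3}$-ish, along an infinite sequence of genera; these genera produce the ``infinitely many $k$''. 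Fix a divisor $G$ of degree $D$ and set $V=\mathcal L(G)$ with $k=\dim V\ge D-g+1$. For each rational place $P_i$ (away from $\supp G$), let $E_i:V\to\F_q^r$ be the map $f\mapsto$ the first $r$ coefficients of the local expansion of $f$ at $P_i$ (Hasse derivatives up to order $r-1$ in a local parameter). The rows of $E_i$ in a basis of $V$ give the matrices in $\F_q^{r\times k}$, and $n$ is the number of usable rational places.

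The core is the strong bound $\sum_i (r-\rank(E_iM))\le L$. Let $W=\colspan(M)\subseteq V$ with $\dim W=r$. Since $r-\rank(E_i M)=\dim(W\cap \mathcal L(G-rP_i))$, we need $\sum_i \dim(W\cap\mathcal L(G-rP_i))$ small. We would mimic the Wronskian argument of \cite{FSS14,guruswami2018subspace}. Pick a basis $f_1,\dots,f_r$ of $W$ and form the Wronskian determinant $\det(D^{(j)} f_l)$ built from Hasse derivatives with respect to a separating variable. We need this to be nonzero; this requires $r<p_0$, or else a switch to Hasse--Schmidt/Frobenius-twisted (``Moore-type'') Wronskians using the orders of $W$ at a generic place. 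That switch is exactly the St\"ohr--Voloch theory of Weierstrass order sequences. Then, placewise, choose a basis of $W$ adapted to $P_i$ with orders $0\le a_1<\dots<a_r$. The Wronskian vanishes at $P_i$ to order at least $\sum_j(a_j-j+1)$, which is at least $\dim(W\cap \mathcal L(G-rP_i))$ counted appropriately. The Wronskian is a section of $rG+\binom r2(2g-2)$-type divisor, so summing zeros gives
\[
\sum_i (r-\rank(E_iM)) \le rD+\tbinom{r}{2}(2g-2)+(\text{pole contribution of the separating variable}).
\]
With $D\approx k+g$, this is $O(r(k+rg))$.

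For parameters, take $g\approx k/r$ (or smaller), so $L=O(rk)$. More carefully, subtract the trivial $r^2$ to get $O(r(k-r))$ when $k\ge 2r$; the case $k<2r$ is handled by trivial adjustment. Meanwhile $n\approx N\ge (q^{1/2}-1)g\approx q^{1/2}k/r$, giving $L/n\approx O(r^2)/q^{1/2}\le q^{-1/4}$ once $q\ge\poly(r)$. Explicitness follows from known explicit bases of Riemann--Roch spaces on the Garcia--Stichtenoth tower, computable in $\poly(q,k)$ time.

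The main obstacle is the nonvanishing of the Wronskian in positive characteristic when $r\ge p_0$. This is where $q\ge\poly(r)$ and possibly characteristic must be used, and where we would first try choosing the derivation order sequence to be the generic (St\"ohr--Voloch) orders of $W$ and bounding them by $O(r)$ using $q\ge\poly(r)$. A secondary difficulty is controlling the poles of the chosen separating element, which in the tower costs $O(g)$ and is absorbed as above.
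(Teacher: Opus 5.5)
There is a genuine gap, and it lies in the construction itself, not only in the Wronskian analysis. Your identity $r-\rank(E_iM)=\dim(W\cap\mathcal L(G-rP_i))$ is correct. The problem is that in characteristic $p$, $\mathcal L(G)$ contains $r$-dimensional subspaces $W$ for which this intersection is nonzero at \emph{every} rational place.

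The simplest instance is $q=2^{2m}$, which the statement must cover, with $r=2$. Take $G=DP_\infty$. As soon as $D\ge 2g+2$, which holds for your parameters $D\approx k+g$, $g\approx k/2$, Riemann's inequality gives a nonconstant $z\in\mathcal L(\lfloor D/2\rfloor P_\infty)$. Hence $W=\Span(1,z^2)\subseteq\mathcal L(G)$. The coefficient of $t$ in the local expansion of $z^2$ at any $P_i$ is $0$. So $E_iM=\bigl(\begin{smallmatrix}1& z(P_i)^2\\ 0&0\end{smallmatrix}\bigr)$ for all $i$, and every member of the family is bad.

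In any characteristic the same happens for $W\ni 1,z^{p^e}$ with $p^e\ge r$, since $z^{p^e}-z(P)^{p^e}=(z-z(P))^{p^e}$. For example, in the Garcia--Stichtenoth tower $x_1$ has pole order $[F_i:F_1]=\ell^{i-1}\approx g/\ell$ at the totally ramified place $P_\infty$. So such $W$ lie in $\mathcal L(G)$ precisely in the regime $k\approx rg\gg q$ that you need.

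Passing to St\"ohr--Voloch orders does not help. The order sequence $\epsilon_0<\dots<\epsilon_{r-1}$ is an invariant of $W$, whereas the matrices $E_i$ must be fixed before $M$ is. Moreover, if $\epsilon_{r-1}\ge r$ then $j_{r-1}(P)\ge\epsilon_{r-1}\ge r$ at every place, so the fixed evaluations of orders $0,\dots,r-1$ are deficient everywhere. Neither $q\ge\poly(r)$ nor a bound $\epsilon_j=O(r)$ changes this: the characteristic is not controlled by $q$, and you would need $\epsilon_{r-1}=r-1$ exactly. What you have is a function-field analog of univariate multiplicity codes. That approach inherently needs characteristic larger than the degrees involved, and here $\deg G\ge k-1\gg q$.

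The missing idea is to avoid derivatives entirely and certify nonvanishing through valuations at $P_\infty$. The paper takes the Forbes--Shpilka matrix and replaces the distinct constants by elements of $F$ that are ``distinct'' by pole order. It sets $E_{ij}=(\gamma^{\beta(j)}g_{\alpha(j)})^{i-1}f_j$, where $v_{P_\infty}(f_1)>\dots>v_{P_\infty}(f_k)$ and $v_{P_\infty}(g_1)>\dots>v_{P_\infty}(g_h)$ with $h=\lceil k/(q-1)\rceil$.

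Each Vandermonde difference then has valuation exactly $v_{P_\infty}(g_{\alpha(j_{t_2})})$. Cauchy--Binet together with \cref{lem:unique} shows that a single term $T_{I^*}$ has minimal valuation. So $\det(EM)$ is a nonzero element of $\mathcal L(dP_\infty)$ with $d=r\bigl(k-r+\frac{r-1}{2}\lceil\frac{k}{q-1}\rceil+\frac{r+1}{2}g\bigr)$.

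The deficiencies at all rational places $P$ are then charged to zeros of this one function via $v_P(\det(EM))\ge r-\rank(E(P)M)$ (\cref{lem:val-from-rank}). \cref{lem:zero-section} bounds their sum by $d$. Nothing here depends on the characteristic. Your parameter bookkeeping then goes through as in the paper: $g\approx k/r$, $N/g\ge q^{c}$, $L/n=O(r^2q^{-c})$, with duality via \cref{lem:rc2design} for $k<2r$.
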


Note that the field size $q$ is only required to be polynomial in $r$, and may be independent of $k$. The upper bound on the number of bad matrices is $L=O(r(k-r))$, which matches the upper bound $r(k-r)$ in \cite{FSS14,For14} up to a constant factor. 

\begin{remark}
\cref{thm:rankboundinformal} and the subsequent theorems hold only for infinitely many values of $k$, for the same reason that families of algebraic geometry (AG) codes constructed from function field towers are known to be asymptotically good only for infinitely many lengths, rather than for all lengths. This “infinitely many $k$” phenomenon also appears in the strong $1$-blocking set construction of Alon, Bishnoi, Das, and Neri~\cite{ABDN24}, which uses either Justesen codes or AG codes, and hence yields asymptotically good codes only for infinitely many lengths. Similarly, it also appears in the explicit constructions of Bishnoi and Tomon~\cite{BT24} over small fields, which are also based on AG codes.

One can extend \cref{thm:rankboundinformal} and the subsequent theorems to all $k$, and indeed we prove such versions for all $k$, but then the bound $L=O(r(k-r))$ must increase by an additional factor of at most $O(q)$. See \cref{thm:fullrankbound}, for example.
\end{remark}

Over prime finite fields, we obtain the following weaker result.

\begin{theorem}[Informal version of \cref{thm:rankprime}]\label{thm:rankprime-informal}
For every $r\ge 1$ and $\delta\in (0,1)$, and every sufficiently large prime $q\ge \poly(r/\delta)$, there exist infinitely many $k\ge r$ such that an explicit $(k,r,L)$ strong lossless rank extractor of size $n$ over $\F_q$ can be constructed with
$L\le (2r/\delta)^{O(\log r)} r(k-r)$ and
$L/n\le \delta$.
\end{theorem}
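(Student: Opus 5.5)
We obtain \cref{thm:rankprime-informal} by reducing to the non-prime case of \cref{thm:rankboundinformal}. Over $\F_q$ with $q$ prime, the function-field construction has no suitable subfield to work with. We therefore build the extractor over an extension field and bring it back down to $\F_q$ with a PIT-based field reduction that keeps the output dimension equal to $r$.

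Fix $d=\Theta(\log r)$. We may assume $d \ge 2$, so $\F_{q^d}$ is non-prime, and $q^d$ is large enough for \cref{thm:fullrankbound}. That theorem gives, for infinitely many $k$, an explicit $(k,r,L_0)$ strong extractor $\{E_i\}_{i\in[n_0]}\subseteq \F_{q^d}^{r\times k}$ with $L_0=O(r(k-r))$ and $L_0/n_0\le q^{-d/4}$.

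Next we convert each $E_i$ into a family of $\F_q$-matrices. Fix an $\F_q$-basis $\omega_1,\dots,\omega_d$ of $\F_{q^d}$ and write $E_i=\sum_j \omega_j E_{i,j}$ with $E_{i,j}\in\F_q^{r\times k}$. For $M\in\F_q^{k\times r}$ of rank $r$, consider
\[
P(x_1,\dots,x_d)=\det\Bigl(\sum_j x_j E_{i,j}M\Bigr).
\]
This is a degree-$r$ polynomial over $\F_q$ in $d$ variables. Evaluating it at $x=(\omega_1,\dots,\omega_d)$ gives $\det(E_iM)$.

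\begin{itemize}
\item[(1)] If $E_iM$ has full rank then $P\not\equiv 0$. In that case $P$ can vanish only on a small fraction of a hitting set $H\subseteq\F_q^d$.
\item[(2)] By Schwartz--Zippel we can take $H=S^d$ for $S\subseteq \F_q$, where $q\ge \poly(r/\delta)$ keeps the failure fraction at most $r/|S|$. Since $d=O(\log r)$, we have $|H|=|S|^d\le (r/\delta)^{O(\log r)}$.
\item[(3)] The new family is $\{\sum_j a_jE_{i,j}: i\in[n_0],\ a\in H\}$, so $n=n_0|H|$.
\end{itemize}

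For the strong count, a bad index $i$ contributes at most $r$ to the sum for each $a\in H$, so its total contribution is at most $r|H|\le \poly$ times the original deficiency. This gives $L\le |H|\cdot r\cdot L_0$ from bad indices. Good indices contribute deficiency only on the Schwartz--Zippel zeros of $P$.

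The main obstacle is controlling the good indices. The naive bound gives $n_0 r^2|H|/|S|$, which is proportional to $n$ rather than $r(k-r)$. The fix is to bound, for each fixed $a\in H$, the total $\sum_i (r-\rank(E_i^{(a)}M))$ directly. For this we use the construction's structure: the function-field extractor is defined by evaluating at places. For generic $a$, the map $\sum_j a_jE_{i,j}$ is itself an evaluation-type map with a degree bound, so the same divisor-degree counting from the proof of \cref{thm:fullrankbound} bounds its total deficiency by $O(r(k-r))$ for all but few $a$. We would try to choose $H$ as a subspace-evasive or $\eps$-biased-style set of seeds to make ``generic'' hold uniformly.

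Summing over $a\in H$ then gives $L\le |H|\cdot O(r(k-r))=(2r/\delta)^{O(\log r)}r(k-r)$, and $L/n\le \delta$ follows from the choice of $|S|$. The $\log r$ exponent comes from having to take $d\approx\log r$, which keeps the error at $q^{-d/4}\le\delta/\poly(r)$ even though $q$ is only $\poly(r/\delta)$.
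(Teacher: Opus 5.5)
There is a genuine gap, and it sits exactly at the step you flag as the main obstacle. Your proposed fix does not work as an argument. The fix is that, for generic $a$, the matrices $\sum_j a_jE_{i,j}$ are again ``evaluation-type'', so the divisor count behind \cref{thm:bound-bad-ones-2} would bound $\sum_i\bigl(r-\rank(\sum_j a_jE_{i,j}M)\bigr)$ by $O(r(k-r))$. But the entries of $\sum_j a_jE_{i,j}$ are traces $\Tr_{\F_{q^d}/\F_q}(\lambda f(P_i))$ for some $\lambda\in\F_{q^d}$. Inside $F$ these are evaluations of $\sum_{s<d}(\lambda f)^{q^s}$, whose pole order at $P_\infty$ is up to $q^{d-1}$ times larger. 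You would also still need a new proof that the resulting determinant is a nonzero function. The suggestion to take $H$ ``subspace-evasive or $\eps$-biased-style'' is not backed by any lemma. So your final bound $L\le|H|\cdot O(r(k-r))$ is unproved: the only rigorous bound you have for the good indices is $n_0r^2|H|/|S|$, which scales with $n_0$ rather than with $L_0$.

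The missing idea is elementary, and it is what the paper does: truncate before reducing. Deleting matrices never increases $\sum_i(r-\rank(E_iM))$, and $L_0/n_0\le q^{-d/4}\le\delta'$ for $q\ge\poly(r/\delta)$. So you may keep only $n_1=\lceil L_0/\delta'\rceil$ of the $E_i$. The good-index contribution is still proportional to the family size, but it is now at most $\frac{r^2}{|S|}n_1|H|\le\delta' n_1|H|\le(L_0+\delta')|H|$ once $|S|\ge r^2/\delta'$. Together with your bound for bad indices, this gives $L\le((r+1)L_0+\delta')|H|$ and $L/n\le(r+1)\delta'\le\delta$ for $\delta'=\delta/(r+1)$.

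With this repair your route is correct, and it differs from the paper's in an instructive way. The paper gives each row $i$ its own coefficients $x_{i,j}$ and needs the Gurjar--Thierauf hitting set for $\VP{r}{dr}{\F_q}$, of size $(dr/\delta)^{O(\log(dr))}$; that is where the $(2r/\delta)^{O(\log r)}$ factor comes from. Your diagonal substitution (the same $x_j$ in every row) already gives a nonzero $d$-variate polynomial of degree at most $r$, since it is nonzero at $x=\omega$. So plain Schwartz--Zippel suffices. In particular, your explanation of the $\log r$ exponent is mistaken: $d=2$ already works. The paper takes $Q=q^2$, and $q\ge\poly(r/\delta)$ makes $q^{-1/2}\le\delta/\poly(r)$. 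With $d=2$ your argument gives $|H|=\poly(r/\delta)$, a stronger bound than the statement requires.
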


Over fields of absolute constant size, such as $\F_2$, we do not know how to explicitly construct lossless rank extractors for which, for every full-rank matrix $M$, most $E_i\in\mathcal{E}$ satisfy $\rank(E_i M)=r$. However, we can explicitly construct a weaker object in which \emph{some} $E_i$ satisfies $\rank(E_i M)=r$. Indeed, Forbes \cite{For14} defined rank condensers in this weaker sense.

We restrict to the case where the rank equals the output dimension and call these objects \emph{rank dispersers}, drawing on the intuition that dispersers are one-sided weakenings of extractors.

\begin{definition}[Lossless rank dispersers]\label{defi:disperser}
Let $1\leq r\le k$ be integers. A finite collection $\mathcal{E}=\{E_i\}_{i=1}^n \subseteq \F^{r\times k}$ of matrices over a field $\F$ is called a $(k,r)$ lossless rank disperser over $\F$ if for every matrix $M \in \F^{k \times r}$ of rank $r$, there exists $i\in [n]$ such that $\rank(E_i M) = r$.
We call $n$ the size of $\mathcal{E}$.
\end{definition}

In other words, $\mathcal{E}$ is a $(k,r)$ lossless rank disperser of size $n$ if and only if it is a $(k,r,n-1)$ weak lossless rank extractor of size $n$.

The following theorem gives explicit lossless rank dispersers over \emph{any} finite field, including $\F_2$.

\begin{theorem}[Informal version of \cref{thm:rcsmall}]\label{thm:rcsmall-informal}
For every $r\geq 1$ and every prime power $q>1$, there exist infinitely many $k\geq r$ such that one can explicitly construct a $(k,r)$ lossless rank disperser of size $n$ over $\F_q$, where $n=O\left(\max\left\{\frac{c\log r}{\log q},2\right\}^r r(k-r)\right)$ and $c>0$ is some absolute constant.
\end{theorem}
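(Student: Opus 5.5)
The plan is to build the disperser from a strong lossless rank extractor over an extension field, and then reduce to $\F_q$ by testing a polynomial of controlled degree on a product grid. Choose $d$ to be the smallest integer with $q^d \ge \poly(r)$, large enough for the (non-prime-field) \cref{thm:rankboundinformal} to apply over $Q=q^{d}$. If needed, replace $d$ by a multiple, for example $2d$, so that $Q$ is a non-prime prime power. Then $d = \max\{O(\log r/\log q),\,2\}$. By \cref{thm:rankboundinformal}, for infinitely many $k$ we obtain an explicit $(k,r,L)$ strong lossless rank extractor $\{E_i\}_{i=1}^{n_0}\subseteq \F_Q^{r\times k}$ with $L=O(r(k-r))$ and $L<n_0$. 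Hence for every full-rank $M\in\F_q^{k\times r}\subseteq \F_Q^{k\times r}$, some $E_i$ satisfies $\rank(E_iM)=r$. It may be more convenient to use only a subfamily of $O(r(k-r))$ indices, for instance $2L+1$ of them, which still works since at most $L$ indices fail.

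The next step is to bring the matrices down to $\F_q$ by field reduction, in the PIT style the abstract mentions. Fix an $\F_q$-basis $\alpha_1,\dots,\alpha_d$ of $\F_Q$ and write $E_i=\sum_{j=1}^d \alpha_j E_i^{(j)}$ with $E_i^{(j)}\in\F_q^{r\times k}$. For $M$ over $\F_q$, the quantity $\det(E_iM)=\det\bigl(\sum_j \alpha_j E_i^{(j)}M\bigr)$ is nonzero. Consider the polynomial
\[
P(y_1,\dots,y_d)=\det\Bigl(\sum_{j=1}^d y_j E_i^{(j)}M\Bigr).
\]
It is homogeneous of degree $r$ in $y$, has coefficients in $\F_q$, and is nonzero because it does not vanish at $(\alpha_1,\dots,\alpha_d)$. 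We need a small explicit set $S\subseteq\F_q^d$ that hits every nonzero degree-$r$ homogeneous polynomial. Each point $y\in S$ then contributes the $\F_q$-matrix $\sum_j y_jE_i^{(j)}$, and the final family is $\{\sum_j y_jE_i^{(j)} : i,\ y\in S\}$, of size $n_0|S|$.

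The main obstacle is building the hitting set $S$ with $|S|\le \max\{c\log r/\log q,2\}^r$ when $q$ may be smaller than $r$. In that regime a plain grid $T^d$ with $|T|>r$ is not available inside $\F_q$. Over $\F_q$, a nonzero polynomial can vanish on all of $\F_q^d$, so I will first reduce individual degrees. The polynomial $P$ is multilinear-like in structure: expanding the determinant column by column shows it is a sum of products of $r$ linear forms. I would therefore use the following hitting set instead of a grid over $y$. For each of the $r$ rows separately, choose one basis index $j_\ell\in[d]$, and take the matrix whose $\ell$-th row is the $\ell$-th row of $E_i^{(j_\ell)}$. This is valid because of multilinearity of the determinant in rows: writing row $\ell$ of $E_i$ as $\sum_j\alpha_j(\text{row }\ell\text{ of }E_i^{(j)})$ gives
\[
\det(E_iM)=\sum_{(j_1,\dots,j_r)\in[d]^r}\Bigl(\prod_\ell \alpha_{j_\ell}\Bigr)\det\bigl(E_i^{(j_1,\dots,j_r)}M\bigr).
\]
Since the left side is nonzero, some term must be nonzero, so some $E_i^{(j_1,\dots,j_r)}\in\F_q^{r\times k}$ works. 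This gives $d^r$ matrices per index $i$, for a total size $d^r\cdot O(r(k-r))$, as claimed. The real care is therefore in invoking \cref{thm:rankboundinformal} over $Q=q^d$ with $d$ as stated, and in making sure that the infinitely many admissible $k$ are the ones supplied there.
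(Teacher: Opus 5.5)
Your final construction is the paper's proof. The paper takes the least $d\ge 2$ with $q^d\ge (2r)^{c_0}$; your doubling fix plays the role of the requirement $d\ge 2$, and for non-prime $q$ it even gives the improvement noted in \cref{remark:two}. It then truncates the strong extractor over $\F_{q^d}$ from \cref{thm:fullrankbound} to $L+1=O(r(k-r))$ elements and applies the row-wise multilinearity reduction of \cref{thm:ReverseExtension}, giving size $d^r(L+1)$. Your intermediate common-$y$ framing, with matrices $\sum_j y_jE_i^{(j)}$, is not what your argument ends up using and need not work over small $q$, so it should simply be dropped from the write-up.
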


Compared with \cref{thm:rankboundinformal} and \cref{thm:rankprime-informal}, \cref{thm:rcsmall-informal} only yields lossless rank dispersers. However, as we will see, this is still sufficient for constructing strong $s$-blocking sets.

We also remark that the constant $2$ in $\max\left\{\frac{c\log r}{\log q},2\right\}$ in \cref{thm:rcsmall-informal} (and similarly in \cref{thm:design-constant-informal} and \cref{thm:blocking-constant-informal}) can be improved to $1$ when $q$ is non-prime; see the proof of \cref{thm:rcsmall} and \cref{remark:two}. However, this improvement is relevant only for sufficiently large non-prime fields $q \ge \poly(r)$, in which case one can instead apply \cref{thm:rankprime-informal}, which yields stronger bounds.

\paragraph{Subspace designs.}

On non-prime, prime, and all finite fields, we obtain explicit strong and weak subspace designs with various parameters.

\begin{theorem}[Informal version of \cref{thm:designbound}\,\eqref{item:nonprime}]\label{thm:design-nonprime-informal}
For every $r\ge 1$ and every sufficiently large non-prime prime power $q\ge \poly(r)$, there exist infinitely many $k\ge r$ such that one can explicitly construct an $(r,A)$ (resp. $(k-r, A)$) strong subspace design of size $n$ over $\F_q$, consisting of $(k-r)$-dimensional (resp. $r$-dimensional) subspaces of $\F_q^k$, with $A=O(r(k-r))$ and $n\ge  q^{1/4} A$.
\end{theorem}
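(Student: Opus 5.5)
The plan is to obtain \cref{thm:design-nonprime-informal} directly from the strong lossless rank extractor of \cref{thm:rankboundinformal}, using the Forbes--Guruswami correspondence between rank condensers and subspace designs. Fix $r$, a sufficiently large non-prime $q\ge\poly(r)$, and one of the infinitely many $k$ for which \cref{thm:rankboundinformal} gives an explicit $(k,r,L)$ strong lossless rank extractor $\{E_i\}_{i=1}^n\subseteq\F_q^{r\times k}$. It has $L=O(r(k-r))$ and $L/n\le q^{-1/4}$. We may assume each $E_i$ has full row rank $r$. If some $E_i$ is rank-deficient, then $\rank(E_iM)<r$ for every $M$, so that index contributes at least $1$ to every sum. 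Such an index can be replaced by any fixed full-rank matrix without increasing the bound, since the sums only go down.

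\textbf{First design.} Set $V_i=\ker E_i$, which has dimension $k-r$. Let $W\subseteq\F_q^k$ be any $r$-dimensional subspace, and let $M\in\F_q^{k\times r}$ be a matrix whose columns form a basis of $W$. The map $x\mapsto E_iMx$ has kernel of dimension $r-\rank(E_iM)$. Since $M$ is injective, this kernel corresponds exactly to $W\cap\ker E_i$. Hence $\dim(V_i\cap W)=r-\rank(E_iM)$, and summing over $i$ gives $\sum_i\dim(V_i\cap W)\le L$. So $\{V_i\}$ is an $(r,A)$ strong subspace design with $A=L=O(r(k-r))$ and $n\ge q^{1/4}A$.

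\textbf{Second design.} Set $U_i=\rowspan(E_i)=(\ker E_i)^\perp$, which has dimension $r$, under the standard bilinear form. Let $W'$ be a $(k-r)$-dimensional subspace. Then $W'^\perp$ is $r$-dimensional; take $M$ with column span $W'^\perp$. Now
\[
\dim(U_i\cap W')=\dim\bigl((\ker E_i+W'^\perp)^\perp\bigr)=k-\dim(\ker E_i+W'^\perp).
\]
Also $\dim(\ker E_i+W'^\perp)=(k-r)+r-\dim(\ker E_i\cap W'^\perp)$. Combining these, $\dim(U_i\cap W')=\dim(\ker E_i\cap W'^\perp)=r-\rank(E_iM)$, the same quantity as before. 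Summing again gives the bound $L$. Explicitness is immediate, because kernels and row spans are computed by Gaussian elimination in time polynomial in $k$, $n$, and $\log q$.

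The only real obstacle is that everything relies on \cref{thm:rankboundinformal}, the function-field construction itself. The reduction above is routine linear algebra; care is needed only with the full-row-rank normalization and the duality dimension count. If $n$ is required to equal $q^{1/4}A$ rather than be at least it, one can note that adding members only increases the sums, so the bound direction $n\ge q^{1/4}A$ is exactly what $L/n\le q^{-1/4}$ provides.
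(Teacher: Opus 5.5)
Your reduction is the same as the paper's: take the explicit strong extractor of \cref{thm:fullrankbound} and pass to kernels and row spaces via the correspondence of \cref{lem:rc2design}. Both of your dimension identities are correct. The step that fails is the full-rank normalization. Replacing a rank-deficient $E_i$ by an \emph{arbitrary} fixed full-rank $F$ does not make the sums go down. Your justification, that a rank-deficient $E_i$ is bad for every $M$, is valid for the weak count. In the strong sum, however, the term $r-\rank(E_iM)$, which can equal $1$, is replaced by $r-\rank(FM)=\dim(\ker F\cap\colspan(M))$, which can equal $r$. For instance, take $r=2$, $k=4$, $E_i$ with rows $e_1^T,0$, $F$ with rows $e_3^T,e_4^T$, and $M=[e_1\ e_2]$; the contribution rises from $1$ to $2$. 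Up to $L$ members may be rank-deficient, so an arbitrary replacement only yields a bound of order $rL$, which loses the claimed $A=O(r(k-r))$. On the other hand, without some normalization the kernels and row spaces do not all have the required dimensions $k-r$ and $r$.

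The fix is to choose the replacement compatibly. Let $F_i$ be a full-rank matrix with $\rowspan(F_i)\supseteq\rowspan(E_i)$, obtained by extending a basis of $\rowspan(E_i)$ to $r$ independent vectors. Then $\ker F_i\subseteq\ker E_i$. Hence $r-\rank(F_iM)=\dim(\ker F_i\cap\colspan(M))\le r-\rank(E_iM)$ termwise for every full-rank $M$. The size $n$ is unchanged, so $n\ge q^{1/4}L$ is preserved.

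Alternatively, invoke the formal \cref{thm:fullrankbound}, which already outputs only full-rank matrices; this is what the paper does. There, the discarding of rank-deficient members via \cref{lem:reduced} is absorbed by slack in the bound on $L/|S|$ inside that proof. Discarding them yourself, starting only from $L/n\le q^{-1/4}$, would give just $n-L\ge(q^{1/4}-1)L$, slightly short of $n\ge q^{1/4}A$.
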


\begin{theorem}[Informal version of \cref{thm:designbound}\,\eqref{item:prime}]\label{thm:design-prime-informal}
For every $r\ge 1$, $\delta\in (0,1)$, and every sufficiently large prime $q\ge \poly(r/\delta)$, there exist infinitely many $k\ge r$ such that one can explicitly construct an $(r,A)$ (resp. $(k-r, A)$) strong subspace design of size $n$ over $\F_q$, consisting of $(k-r)$-dimensional (resp. $r$-dimensional) subspaces of $\F_q^k$, with $A\leq (2r/\delta)^{O(\log r)} r(k-r)$ and $n\ge A/\delta$.
\end{theorem}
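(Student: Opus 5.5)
The plan is to derive the statement directly from the prime-field rank extractor, \cref{thm:rankprime-informal}, via the Forbes--Guruswami correspondence between strong lossless rank extractors and strong subspace designs \cite{FG15}. Given $r$, $\delta$ and a prime $q\ge\poly(r/\delta)$, \cref{thm:rankprime-informal} yields, for infinitely many $k$, an explicit $(k,r,L)$ strong lossless rank extractor $\{E_i\}_{i=1}^n\subseteq\F_q^{r\times k}$ with $L\le(2r/\delta)^{O(\log r)}r(k-r)$ and $L/n\le\delta$. We then set $A=L$, so that $n\ge A/\delta$ holds automatically.

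For the $(r,A)$ design consisting of $(k-r)$-dimensional subspaces, take $V_i=\ker E_i$. If some $E_i$ has rank $<r$, we first replace it by an arbitrary full-rank matrix. This can only increase $\rank(E_iM)$, so the total loss $L$ does not increase, and every $V_i$ then has dimension exactly $k-r$. Now let $W$ be an $r$-dimensional subspace, and let $M\in\F_q^{k\times r}$ have column span $W$. Then $\dim(V_i\cap W)=\dim\ker(E_iM)=r-\rank(E_iM)$. Summing over $i$ gives
\[
\sum_i\dim(V_i\cap W)=\sum_i\bigl(r-\rank(E_iM)\bigr)\le L=A,
\]
which is exactly the strong design property.

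For the dual $(k-r,A)$ design consisting of $r$-dimensional subspaces, take $U_i=\rowspan(E_i)$, which has dimension $r$ after the same modification. Let $W'$ have dimension $k-r$, and let $M$ be a $k\times r$ matrix whose columns span $W'^{\perp}$ with respect to the standard bilinear form. Then $U_i\cap W'$ consists of the vectors $yE_i$ with $yE_iM=0$. Since $E_i$ has full row rank, $y\mapsto yE_i$ is injective, so $\dim(U_i\cap W')=r-\rank(E_iM)$. Summing as before gives the bound $A$. Nondegeneracy of the standard form over $\F_q$ ensures that $\dim W'^{\perp}=r$ even though $W'$ may be isotropic, so this step needs no extra hypotheses.

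All the difficulty therefore lies in \cref{thm:rankprime-informal}, which we assume; the reduction above is routine. The only points needing care are the full-rank normalization and the duality computation, and I expect neither to be a real obstacle. Explicitness carries over because kernels and row spans are computed by Gaussian elimination in polynomial time.
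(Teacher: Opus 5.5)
You follow the paper's route: take the strong lossless rank extractor of \cref{thm:rankprime} and translate it through the extractor/design correspondence (\cref{lem:rc2design}). Your two dimension computations for $\ker E_i$ and $\rowspan(E_i)$ are correct. However, your full-rank normalization step fails as written.

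Replacing a rank-deficient $E_i$ by an \emph{arbitrary} full-rank matrix $E_i'$ does not guarantee $\rank(E_i'M)\ge\rank(E_iM)$ for every $M$. Since $E_i'$ is fixed before $M$ is chosen, this matters. For example, take $r=2$ and $k=4$. Let $E_i$ have rows $(1,0,0,0)$ and $(0,0,0,0)$, let $E_i'$ have rows $(0,0,1,0)$ and $(0,0,0,1)$, and let $M$ have columns $(1,0,0,0)^T$ and $(0,1,0,0)^T$. Then $\rank(E_iM)=1$ but $E_i'M=0$, so the loss for this $M$ rises from $1$ to $2$, and the bound $\sum_i(r-\rank(E_iM))\le L$ can be destroyed. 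The step cannot be skipped either: the matrices $E^*(a)$ produced in \cref{thm:Qtoq} need not have full rank, so \cref{thm:rankprime} can genuinely output rank-deficient matrices.

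The repair is easy. Choose $E_i'$ so that $\rowspan(E_i)\subseteq\rowspan(E_i')$, for instance by extending a basis of $\rowspan(E_i)$ to $r$ independent vectors. Then $E_i=TE_i'$ for some $T\in\F_q^{r\times r}$, so $\rank(E_iM)\le\rank(E_i'M)$ for every $M$. The size $n$ is unchanged, and $n\ge A/\delta$ follows directly from $L/n\le\delta$.

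The paper instead discards the rank-deficient matrices (\cref{lem:reduced}), losing at most $L$ elements. It compensates by invoking \cref{thm:rankprime} with error $\delta/2$, so that $n\ge 2L/\delta-L\ge L/\delta$, and absorbs the change into the constants. Your repaired version avoids this bookkeeping; the paper's version avoids having to argue about how replacements interact with every test matrix $M$.
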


\begin{theorem}[Informal version of \cref{thm:designbound}\,\eqref{item:small}]\label{thm:design-constant-informal}
For every $r\ge 1$ and every prime power $q>1$, there exist infinitely many $k\ge r$ such that one can explicitly construct an $(r,A)$ (resp.\ $(k-r,A)$) weak subspace design of size $n$ over $\F_q$, consisting of $(k-r)$-dimensional (resp.\ $r$-dimensional) subspaces of $\F_q^k$, with $A=n-1$ and
$n=O\!\left(\max\left\{\frac{c\log r}{\log q},2\right\}^r r(k-r)\right)$
for some absolute constant $c>0$.
\end{theorem}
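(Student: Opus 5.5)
The plan is to derive the statement from \cref{thm:rcsmall-informal} via the standard correspondence between lossless rank condensers and subspace designs \cite{FG15}. Take the $(k,r)$ lossless rank disperser $\mathcal{E}=\{E_i\}_{i=1}^n\subseteq\F_q^{r\times k}$ of size $n=O\!\left(\max\left\{\frac{c\log r}{\log q},2\right\}^r r(k-r)\right)$ given there. As a first normalization, we may assume each $E_i$ has full row rank $r$. If some $E_i$ has rank less than $r$, then $\rank(E_iM)<r$ for every $M$, so such an $E_i$ is never the witness required by \cref{defi:disperser}. We can therefore either discard it, which only decreases $n$ and preserves the disperser property, or replace it by any full-rank $r\times k$ matrix, which keeps $n$ unchanged. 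Either way the size bound is preserved.

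For the $(r,A)$ weak design made of $(k-r)$-dimensional subspaces, set $V_i=\ker E_i\subseteq\F_q^k$, which has dimension $k-r$. Let $W$ be an $r$-dimensional subspace and let $M\in\F_q^{k\times r}$ be a matrix whose columns form a basis of $W$. Then $\rank(E_iM)=r$ if and only if $E_i$ restricted to $W$ is injective, which holds if and only if $V_i\cap W=\{0\}$. The disperser property supplies at least one such $i$, so $\sum_i \mathbf{1}[V_i\cap W\ne\{0\}]\le n-1=A$.

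For the $(k-r,A)$ weak design made of $r$-dimensional subspaces, set $U_i=\rowspan(E_i)$, which has dimension $r$. Let $W'$ be a $(k-r)$-dimensional subspace and write $W'=\ker(M^{T})$ for some $M\in\F_q^{k\times r}$ of rank $r$; every such $W'$ arises this way. Then $E_iM$ is singular if and only if some nonzero $x$ satisfies $x^TE_iM=0$. Since $E_i$ has full row rank, $x^TE_i$ is then a nonzero vector of $U_i$ lying in $\ker(M^T)=W'$. So $\rank(E_iM)=r$ if and only if $U_i\cap W'=\{0\}$, and again at most $n-1$ indices are bad.

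Because \cref{thm:rcsmall-informal} holds for infinitely many $k\ge r$, the same $k$ work here. Explicitness is immediate, since kernels and rowspans are computable by Gaussian elimination. I expect no real obstacle beyond the full-rank normalization. All of the difficulty lies in the disperser construction \cref{thm:rcsmall}, which is taken as given here.
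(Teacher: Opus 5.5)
Your proposal is correct and follows the paper's proof. The paper likewise takes the disperser from \cref{thm:rcsmall}, discards the rank-deficient matrices (noting that this preserves the disperser property), and then applies \cref{lem:rc2design}. The proof of that lemma is exactly the kernel/rowspan verification you carry out by hand, since your $V_i=\ker E_i$ equals $\rowspan(E_i)^\perp$.
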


\begin{remark}
While explicit subspace designs over small finite fields are known in the literature, no explicit constructions achieving \cref{thm:design-nonprime-informal,thm:design-prime-informal,thm:design-constant-informal} were known prior to this work, even for weak subspace designs. 

One possible approach is to first construct an explicit $(r,A)$ weak subspace design in $\F_{q^d}^k$ over a large finite field $\F_{q^d}$ with $q^d>k$, consisting of codimension-$r$ subspaces, and then use \cite[Lemma~8]{guruswami2016explicit} to obtain an $(r,A)$ weak subspace design in $\F_q^{dk}$. 

However, this transformation increases the codimension of the subspaces to $dr>r$. Moreover, when $r$ is constant and $q=O_r(1)=O(1)$, we have $d=O(\log_q k)=O(\log k)$, implying that the codimension of the subspaces becomes logarithmic in $k$ rather than constant. 
\end{remark}


\paragraph{Strong $s$-blocking sets.}

Finally, we obtain explicit strong $s$-blocking sets of improved size over small fields. By \eqref{eq:strong-affine-connection}, these also yield explicit affine $(s+1)$-blocking sets over small fields. See \cref{tab:sbs-compare} for a summary of our results and prior work.

\begin{table}[htp]
    \centering
    \begin{tabular}{Sc|Sc|Sc}
    \hline
                           Reference   & Condition on $q$   & Upper bound on the size
                              \\ \hline\hline
     \cite[Theorem~16]{BT24} &  $q$ is square and $\geq s^{\Omega(s)}$   &  $2^{O(s^2\log s)}kq^s$    \\\hline
     \cite[Theorem~17]{BT24}  & $-$                  &   $kq^{O(s^2)}$ \\\hline
     \cref{thm:blocking-nonprime-informal} &  $q$ is non-prime and $\geq \poly(s)$  & $O(s(k-s)q^s)$ \\\hline
     \cref{thm:blocking-prime-informal}  & $q$ is prime and $\geq \poly(s)$  & $(2s)^{O(\log s)} (k-s) q^s$ \\\hline
     \cref{thm:blocking-constant-informal} & $-$ & $O\!\left(\max\left\{\frac{c\log (2s)}{\log q},2\right\}^{s+1} s (k-s) q^s \right)$ \\\hline
       \cref{thm:blocking-eps-informal} & $-$ & $k q^{(2+o_{qs}(1))s}$ \\ \hline
    \end{tabular}
    \caption{A comparison of explicit constructions of strong $s$-blocking sets}
    \label{tab:sbs-compare}
\end{table}

The following theorem shows that for every sufficiently large non-prime $q\ge \poly(s)$, there exist explicit strong $s$-blocking sets of size $O(s(k-s)q^s)$, matching the best-known non-explicit upper bound (see~\cite[Remark~3.4]{BDGP24}) up to an absolute constant factor.

\begin{theorem}[Informal version of \cref{cor:blocking}\,\eqref{item:blocking1}]\label{thm:blocking-nonprime-informal}
For every $s\ge 1$ and every sufficiently large non-prime prime power $q\ge \poly(s)$, there exist infinitely many $k>s+1$ such that one can explicitly construct a strong $s$-blocking set $B\subseteq \PG(k-1,q)$ of size $O(s(k-s) q^s)$.
\end{theorem}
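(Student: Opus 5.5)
The plan is to derive \cref{thm:blocking-nonprime-informal} from the explicit subspace designs of \cref{thm:design-nonprime-informal}. We use the Fancsali--Sziklai reduction \cite{FZ14,FZ16}: a family of projective subspaces forming a suitable weak design has union a strong blocking set. Concretely, take $r=s+1$ and build, via \cref{thm:design-nonprime-informal}, an explicit $(k-r,A)$ weak subspace design $\{V_i\}_{i\in[n]}$ in $\F_q^k$ consisting of $r$-dimensional subspaces. Here $A=O(r(k-r))$ and $n\ge q^{1/4}A$, so $n>A$. We only need a small fixed number of the $V_i$: take any $m=A+1$ of them. This is still an $(k-r,A)$ weak design, now with $m>A$ members. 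Define $B\subseteq\PG(k-1,q)$ to be the union of the projectivizations of $V_1,\dots,V_m$. Each contributes $(q^{s+1}-1)/(q-1)=O(q^s)$ points, so
\[
|B|\le m\cdot\frac{q^{s+1}-1}{q-1}=O\bigl((s+1)(k-s-1)\bigr)\cdot O(q^s)=O(s(k-s)q^s).
\]

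The key step is to verify the strong $s$-blocking property. Let $\Sigma$ be a projective subspace of codimension $s$, corresponding to a linear subspace $U\subseteq\F_q^k$ of dimension $k-s$. Suppose $B\cap\Sigma$ does not span $\Sigma$. Then there is a hyperplane $H$ of $U$, of dimension $k-s-1=k-r$, containing $V_i\cap U$ for all $i$. For each $i$ we have $\dim(V_i\cap U)\ge r+(k-s)-k=1$, and every vector of $V_i\cap U$ lies in $H$. Hence $V_i\cap H=V_i\cap U\ne\{0\}$ for all $i\in[m]$. This gives a $(k-r)$-dimensional $H$ meeting all $m>A$ members nontrivially, contradicting the weak design property. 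So $B$ is a strong $s$-blocking set. Explicitness follows because the $V_i$ are explicit and enumerating the points of each is polynomial in $q^s k$.

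The parameters match the statement's conditions: $q$ non-prime and $q\ge\poly(r)=\poly(s)$. The infinitely many admissible $k$ come from \cref{thm:design-nonprime-informal}, and we need $k>s+1$ so that $k-r\ge1$. The main point to check is the dimension count in the reduction: $\dim V_i=s+1$ forces a nonzero intersection with every $(k-s)$-space. The reduction also needs the weak property for test subspaces of dimension exactly $k-r$, which is what the $(k-r,A)$ design supplies. It is also worth noting that the strong design of \cref{thm:design-nonprime-informal} implies the weak one with the same $A$, since $\mathbf{1}[V_i\cap W\neq\{0\}]\le\dim(V_i\cap W)$.
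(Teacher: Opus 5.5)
Your proof is correct and matches the paper's argument: the paper obtains \cref{cor:blocking}\,\eqref{item:blocking1} by applying \cref{thm:lre-to-sbs} to the extractor of \cref{thm:fullrankbound}, trimmed to $L+1$ members as in the proof of \cref{thm:rcsmall}, just as you trim to $A+1$; the proof of \cref{thm:lre-to-sbs} is your dimension count, phrased with a matrix $M$ satisfying $\ker(M^T)=H$ in place of the weak-design property. Your subspace-design formulation is equivalent to it via \cref{lem:rc2design}, which is also how \cref{thm:design-nonprime-informal} is derived from \cref{thm:fullrankbound}.
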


When $q\ge \poly(s)$ is prime, we obtain a weaker size bound with a quasi-polynomial factor in $s$.

\begin{theorem}[Informal version of \cref{cor:blocking}\,\eqref{item:blocking2}]\label{thm:blocking-prime-informal}
For every $s\ge 1$ and every sufficiently large prime $q\ge \poly(s)$, there exist infinitely many $k>s$ such that one can explicitly construct a strong $s$-blocking set $B\subseteq \PG(k-1,q)$ of size at most $(2s)^{O(\log s)} (k-s) q^s$.
\end{theorem}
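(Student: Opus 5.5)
We derive \cref{thm:blocking-prime-informal} from the prime-field strong subspace design of \cref{thm:design-prime-informal}, with $r=s+1$, via the standard reduction from weak subspace designs to strong blocking sets \cite{FZ14,FZ16}. The plan is to build the blocking set as a union of projective subspaces $\PG(V_i)$, each of projective dimension $s$, and to choose the subspaces so that every codimension-$s$ subspace meets many of them.

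\textbf{Step 1: the family of subspaces.} Set $r=s+1$ and fix a constant $\delta\in(0,1)$, say $\delta=1/2$. Apply \cref{thm:design-prime-informal} with $q\ge\poly(s)$ prime. For infinitely many $k$ this gives a $(k-r,A)$ strong subspace design $\{V_i\}_{i\in[n]}$ of $r$-dimensional subspaces of $\F_q^k$, with
\[
A\le (2r/\delta)^{O(\log r)}r(k-r)=(2s)^{O(\log s)}(k-s)
\]
and $n\ge A/\delta$. Since $n/A\le O(1)$ is not guaranteed, we may truncate to $n=\lceil A/\delta\rceil+O(1)$ members if needed. A strong design is in particular a weak one, so any $(k-r)$-dimensional $W$ satisfies $V_i\cap W=\{0\}$ for at least $n-A\ge (1-\delta)n$ indices $i$.

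\textbf{Step 2: the blocking set.} Take $B=\bigcup_i \PG(V_i)$. Each $\PG(V_i)$ has $(q^{s+1}-1)/(q-1)=O(q^s)$ points, so
\[
|B|\le n\cdot O(q^s)=(2s)^{O(\log s)}(k-s)q^s .
\]

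\textbf{Step 3: the spanning property.} Let $\Sigma=\PG(U)$ with $\dim U=k-s=k-r+1$. For each $i$, $\dim(V_i\cap U)\ge r+(k-r+1)-k=1$, so $B\cap\Sigma$ contains $\PG(V_i\cap U)$ for every $i$. Suppose, for contradiction, that $B\cap\Sigma$ spans only a proper subspace of $\Sigma$. Then there is a hyperplane $W\subset U$ with $\dim W=k-r$ and $V_i\cap U\subseteq W$ for all $i$. Since $V_i\cap U\neq\{0\}$, this gives $V_i\cap W\neq\{0\}$ for every $i\in[n]$. That contradicts the weak-design bound $A<n$ from Step 1, so $B\cap\Sigma$ spans $\Sigma$.

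\textbf{Main obstacle.} The only nontrivial ingredient is \cref{thm:design-prime-informal} itself, which we assume. Here we only need $A<n$, so any fixed $\delta<1$ suffices. The remaining care is parameter bookkeeping: ensuring $k>s$ within the infinite family of valid $k$, and ensuring that $q\ge\poly(r/\delta)$ becomes $q\ge\poly(s)$ once $\delta$ is fixed.
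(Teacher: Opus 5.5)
Your proposal is correct and is essentially the paper's argument. The paper combines \cref{thm:lre-to-sbs} with the prime-field extractor of \cref{thm:rankprime} (with $r=s+1$); your route through the design form \cref{thm:designbound}\,\eqref{item:prime} uses the same family via \cref{lem:rc2design}, and your Step~3 is the proof of \cref{thm:lre-to-sbs} rewritten in design language. Your truncation to about $A/\delta$ members is a worthwhile explicit step: the cited theorems only bound the family size from below, and the paper leaves the needed upper bound on $n$ implicit.
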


For arbitrary $q$, we obtain the following result.

\begin{theorem}[Informal version of \cref{cor:blocking}\,\eqref{item:blocking3}]\label{thm:blocking-constant-informal}
For every $s\ge 1$ and every prime power $q>1$, there exist infinitely many $k>s$ such that one can explicitly construct a strong $s$-blocking set $B\subseteq \PG(k-1,q)$ of size
$O\!\left(\max\left\{\frac{c\log (2s)}{\log q},2\right\}^{s+1} s (k-s) q^s \right)$,
where $c>0$ is an absolute constant. In particular, when $q=2$, the size of $B$ is at most $2^{O(s\log\log s)}(k-s)q^s$.
\end{theorem}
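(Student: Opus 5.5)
The plan is to derive the theorem from \cref{thm:design-constant-informal}, or more precisely from the lossless rank disperser of \cref{thm:rcsmall-informal}, via the standard reduction from weak subspace designs to strong blocking sets \cite{FZ14,FZ16}. Set $r=s+1$. Take an explicit $(k,s+1)$ lossless rank disperser $\{E_i\}_{i=1}^n\subseteq\F_q^{(s+1)\times k}$. Equivalently, the $(s+1)$-dimensional subspaces $V_i=\rowspan(E_i)\subseteq\F_q^k$ have the following property: for every subspace $W$ of dimension $k-s-1$, some $V_i$ satisfies $V_i\cap W=\{0\}$. This holds because $\rank(E_iM)=s+1$ for $M$ whose columns span $W^\perp$ exactly when $V_i\cap W=\{0\}$. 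Note that the hypothesis $E_i$ having rank $s+1$ is not needed separately; if a row span drops rank we simply discard it. The blocking set is
\[
B=\bigcup_{i=1}^n \PG(V_i),
\]
the union of the projective $s$-spaces determined by the $V_i$. Its size is at most $n\cdot\frac{q^{s+1}-1}{q-1}=O(nq^s)$. With $n=O\!\left(\max\{\frac{c\log(s+1)}{\log q},2\}^{s+1}(s+1)(k-s-1)\right)$, this gives the claimed bound after absorbing $s+1\le 2s$ and $\log(s+1)\le \log(2s)$ into constants.

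The key step is the geometric lemma that such a family of projective $s$-spaces forms a strong $s$-blocking set. Fix a codimension-$s$ subspace $U\subseteq\F_q^k$, which corresponds to a codimension-$s$ projective subspace $\Sigma$. Suppose $B\cap\Sigma$ spans only a proper subspace. Then there is a hyperplane $H$ of $U$, of dimension $k-s-1$, containing every vector of $\bigcup_i (V_i\cap U)$.

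Each $V_i$ has dimension $s+1$ and $U$ has codimension $s$, so $\dim(V_i\cap U)\ge 1$ for every $i$. We also have $V_i\cap U\subseteq H$, hence $V_i\cap H=V_i\cap U\neq\{0\}$ for all $i$. This contradicts the disperser property applied to $W=H$. Therefore $B\cap\Sigma$ spans $\Sigma$.

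For explicitness, the family $V_i$ comes from the explicit construction of \cref{thm:rcsmall}, and listing the points of each $\PG(V_i)$ takes time $\poly(q^s,k,n)$. The range "infinitely many $k>s$" is inherited from \cref{thm:rcsmall} with $r=s+1$. For the $q=2$ statement, note that $\max\{c\log(2s),2\}^{s+1}=2^{O(s\log\log s)}$, and the extra factor $s$ is absorbed into this.

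The main obstacle is only bookkeeping: making sure that the formal version of \cref{thm:rcsmall} gives $n$ in terms of $k-r=k-s-1\le k-s$, and that it applies to $r=s+1$ for every $q$. The geometric argument itself is short.
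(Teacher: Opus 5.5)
Your proposal is correct and is essentially the paper's proof: the paper applies \cref{thm:lre-to-sbs} to the explicit $(k,s+1)$ lossless rank disperser of \cref{thm:rcsmall}, and that lemma is proved by exactly your hyperplane-$H$ contradiction. Your bookkeeping also matches the paper's: $r=s+1$, $|B|\le n\frac{q^{s+1}-1}{q-1}\le 2nq^s$, the factors $\log(2s+2)$ and $(s+1)(k-s-1)$ absorbed into the constants, and the ``infinitely many $k$'' clause inherited from \cref{thm:rcsmall}.
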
 

Setting $s=1$, \cref{thm:blocking-constant-informal} recovers the main result of Alon, Bishnoi, Das, and Neri that there exist explicit strong $1$-blocking sets of size $O(k q)$. Notably, our proof (and construction) does not use expander graphs as in \cite{ABDN24}, and can be viewed as a purely algebraic approach to the same result.

\cref{thm:blocking-nonprime-informal,thm:blocking-prime-informal,thm:blocking-constant-informal} are proved using algebraic techniques, including tools from function fields and polynomial identity testing. 

We also develop a Fourier-analytic approach, reducing the construction of explicit strong $s$-blocking sets in $\PG(k-1,q)$ to that of explicit $\eps$-biased sets in $\F_q^k$. By instantiating this reduction with Ta-Shma's breakthrough construction~\cite{Ta17} (and its extension to general finite fields $\F_q$ by Jalan and Moshkovitz~\cite{JM21}), we obtain the following theorem.

\begin{theorem}[Informal version of \cref{prop:explicit-biased-to-explicit-blocking}]\label{thm:blocking-eps-informal}
For every $1\le s<k$ and every prime power $q>1$, there exists an explicit strong $s$-blocking set $B\subseteq \PG(k-1,q)$ of size at most $k q^{(2+o_{qs}(1))s}$.
\end{theorem}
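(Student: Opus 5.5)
The plan is to reduce strong $s$-blocking sets to $\eps$-biased sets through a Fourier-analytic covering argument. We work in the affine picture. By the characterization of \cite{BDGP24}, $B\subseteq \PG(k-1,q)$ is a strong $s$-blocking set if and only if the union $\tilde B\subseteq\F_q^k$ of the corresponding lines through the origin meets every affine subspace of codimension $s+1$. A slightly more convenient sufficient condition is the following: for every codimension-$s$ linear subspace $U$, the points of $B$ lying in $U$ span $U$. Equivalently, for every $U$ of codimension $s$ and every hyperplane $H\supseteq U$ (i.e.\ codimension $s+1$), some point of $B$ lies in $U\setminus H$.

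Writing $U=\{x: Ax=0\}$ with $A\in\F_q^{s\times k}$ of full rank and $H=U\cap\{x:\langle b,x\rangle=0\}$ with $b\notin\rowspan(A)$, we need some $x\in S$ with $Ax=0$ and $\langle b,x\rangle\neq 0$. Here $S\subseteq \F_q^k$ is a set whose projectivization is $B$. This is a condition about the map $x\mapsto (Ax,\langle b,x\rangle)\in\F_q^{s+1}$, which has full rank $s+1$. So it suffices that, for every full-rank linear map $T:\F_q^k\to\F_q^{s+1}$, the multiset $T(S)$ hits every point $(0,\dots,0,c)$ with $c\neq 0$; in fact we will show it hits every point of $\F_q^{s+1}$.

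Now take $S$ to be an $\eps$-biased set in $\F_q^k$. For a full-rank $T$, the pushforward of the uniform distribution on $S$ is $\eps$-close in every nontrivial Fourier coefficient to uniform on $\F_q^{s+1}$. This holds because a character $\chi_\alpha\circ T$ with $\alpha\neq 0$ is a nontrivial character of $\F_q^k$, since $T^\top\alpha\neq 0$ by full rank. Fourier inversion then gives, for each $y\in \F_q^{s+1}$,
\[
\Pr_{x\in S}[Tx=y]\ \ge\ q^{-(s+1)}-\eps,
\]
since $\Pr[Tx=y]=q^{-(s+1)}\sum_\alpha \ex_{x\in S}\chi_\alpha(Tx-y)$ and each nontrivial term has modulus at most $\eps$. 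Choosing $\eps<q^{-(s+1)}$, say $\eps=q^{-(s+1)}/2$, every point is hit. Taking $B$ to be the projectivization of $S\setminus\{0\}$ therefore yields a strong $s$-blocking set with $|B|\le|S|$.

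The remaining step is explicitness and size. Ta-Shma's construction \cite{Ta17}, extended to $\F_q$ by \cite{JM21}, gives explicit $\eps$-biased sets of size $O(k/\eps^{2+o(1)})$, with the $o(1)$ tending to $0$ as $\eps\to 0$. With $\eps=\Theta(q^{-(s+1)})$ this gives size $k\,q^{(2+o(1))(s+1)}$.

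The main obstacle is that this overshoots the target $kq^{(2+o_{qs}(1))s}$ by a factor of $q^{2}$. The way to close the gap is to weaken the requirement from hitting every point of $\F_q^{s+1}$ to hitting one of the $q-1$ points $(0,\dots,0,c)$ with $c\ne0$. Let $f$ be the indicator of $\{y: y_1=\dots=y_s=0,\ y_{s+1}\neq0\}$. Its density is $(q-1)q^{-(s+1)}\approx q^{-s}$, and its Fourier $\ell_1$ mass restricted to nonzero frequencies is $O(1)$: the first $s$ coordinates contribute an indicator of a subspace, whose normalized transform is supported on its annihilator, while the last coordinate contributes $1-\delta_0$, whose nontrivial coefficients are small. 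This makes $\eps=c\,q^{-s}$ sufficient, giving size $k\,q^{(2+o(1))s}$. The care needed is in verifying this $\ell_1$ bound precisely and in checking that the $o(1)$ term in Ta-Shma's bound is $o_{qs}(1)$ as $q^s\to\infty$.
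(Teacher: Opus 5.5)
Your final argument is correct and is essentially the paper's proof. Your $f\circ T$ is exactly $1_{U\setminus H}=1_U-1_H$, whose Fourier $\ell_1$-norm is at most $2$; the paper bounds this directly in $\F_q^k$ using the subspace-indicator lemma, rather than pushing forward to $\F_q^{s+1}$. So any $\eps<\frac{q-1}{2q^{s+1}}=\Theta(q^{-s})$ suffices, and the Ta-Shma/Jalan--Moshkovitz sets finish the proof, with their extra $(\log q)^{C_2}$ factor (which you omitted) absorbed into the $o_{qs}(1)$ term.

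The preliminary point-hitting attempt is unnecessary. Also, ``$H\supseteq U$'' should read $H\subseteq U$ (a hyperplane of $U$), as your formula $H=U\cap\{x:\langle b,x\rangle=0\}$ indicates.
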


Previously, \cite[Theorem~17]{BT24} constructed explicit strong $s$-blocking sets of size $k q^{O(s^2)}$ for all $q$. 
\cref{thm:blocking-eps-informal} improves this to a size bound of $k q^{(2+o_{qs}(1))s}$.

While this bound is weaker than that of \cref{thm:blocking-constant-informal} for large enough $q$, e.g., for $q\ge \poly(s)$, it is the best currently known bound when $q=O(1)$, namely $2^{O(s)} k q^s$. In contrast, \cref{thm:blocking-constant-informal} yields $2^{O(s\log\log s)} k q^s$ in this regime. Moreover, \cref{thm:blocking-eps-informal} applies to all $k$, rather than only infinitely many $k$.




\subsection{Technical Overview} 

We primarily focus on \cref{question:RC-small-field}, namely the problem of explicitly constructing $(k,r,L)$ lossless rank extractors over finite fields whose size depends only on $r$ (and in fact $\poly(r)$), in the regime where $r\ll k$. We also explain why some existing constructions do not directly address this question. 

For simplicity, we focus on weak lossless rank extractors. The guarantee for strong lossless rank extractors follows from the same argument by counting zeros with multiplicity.

For a field $\F$ with $|\F|>k$, Forbes and Shpilka~\cite{FS12} constructed an explicit $(k,r,L)$ weak lossless rank extractor $\mathcal{E}$ over $\F$ with $L=rk-\binom{r+1}{2}$ (later improved to $r(k-r)$ in~\cite{FSS14,For14}). Their proof uses the polynomial method and proceeds as follows. First, they construct the symbolic matrix $E(x)=(E_{ij}(x))_{i\in [r], j\in [k]}$ with entries
\[
E_{ij}(x)=(\gamma^{i-1}x)^{j-1},
\]
where $\gamma\in\F^\times$ is chosen such that $1,\gamma,\gamma^2,\dots,\gamma^{k-1}$ are distinct. They then show that for any full-rank matrix $M\in \F^{k\times r}$, we have $\det(E(x)M)\neq 0$ symbolically. It is also straightforward to bound the degree of the polynomial $\det(E(x)M)\in\F[x]$ by $rk-\binom{r+1}{2}$. Thus, by the polynomial method, we have $\det(E(a)M)\neq 0$, or equivalently $\rank(E(a)M)=r$, for all but at most $rk-\binom{r+1}{2}$ elements $a\in\F$. It therefore suffices to define $\mathcal{E}=\{E(a): a\in S\}$ for a finite set $S\subseteq \F$ of size greater than $L$.

One reason this construction requires a large field size is that the analysis uses Vandermonde determinants associated with elements in the set $\{1,\gamma,\gamma^2,\dots,\gamma^{k-1}\}\subseteq\F^\times$, and these elements must be distinct for the Vandermonde determinants to be nonzero. In particular, this forces $|\F|>k$.

Our first observation is that, since we are already working with the polynomial ring $\F[x]$, it is not necessary to use distinct values from the field $\F$; instead, we can use elements of $\F[x]$. Indeed, an earlier construction of Gabizon and Raz~\cite{GR08} also uses the polynomial method, but defines the symbolic matrix $E=(E_{ij})$ by $E_{ij}=x^{ij}$. This construction uses no constants from $\F$ other than $1$, and instead relies on monomials of different degrees. Roughly speaking, this suggests that we can trade distinct constants for elements of different degrees.

However, another issue remains. If we directly adopt the construction of Gabizon and Raz over a small field $\F_q$, then the resulting polynomial $\det(E(x)M)$ has degree much larger than $q$, and may vanish at every evaluation point in $\F_q$.

\paragraph{Function fields.} 
To address this issue, we draw inspiration from algebraic geometry (AG) codes, whose analysis can be viewed as a function-field generalization of the polynomial method. Geometrically, the idea is to replace a line by an \emph{algebraic curve}, which, even over a small field $\F_q$, can have many evaluation points $N\gg q$. Algebraically, this corresponds to working in a function field. We adopt this approach in this paper.

Specifically, we show that a function-field analog of the Gabizon--Raz construction~\cite{GR08} can be carried out over finite fields of size (at least) $\poly(r)$. This already implies a weaker version of \cref{thm:rankboundinformal}, although the upper bound $L$ on the number of bad matrices is $O(r^2 k)$, as in~\cite{GR08}, rather than $O(rk)$. This construction and its analysis are presented in \cref{sec:basic}. 

We then revisit the Forbes--Shpilka construction~\cite{FS12} and show that a function-field analog can also be carried out over finite fields of size (at least) $\poly(r)$, leading to a full proof of \cref{thm:rankboundinformal}. The idea is that when $q\ll k$, although we cannot find $\gamma\in\F_q^\times$ such that $1,\gamma,\dots,\gamma^{k-1}$ are all distinct, we can instead use elements of the form
\[
g_1,\gamma g_1,\dots,\gamma^{q-2} g_1,\;
g_2,\gamma g_2,\dots,\gamma^{q-2} g_2,\;
\cdots,\;
g_h,\gamma g_h,\dots,\gamma^{q-2} g_h,
\]
for suitable transcendental elements $g_1,\dots,g_h$ in the function field, where $h\approx \frac{k}{q-1}$ and $\gamma$ is a generator of $\F_q^\times$. The use of these transcendental elements increases the degree of $\det(E(x)M)$. However, for sufficiently large $q\ge \poly(r)$ and carefully chosen $g_1,\dots,g_h$, this increase is relatively small and dominated by other terms. This construction and its analysis are presented in \cref{sec:FS}.

As a side remark, while it is well known that algebraic curves and function fields underlie AG codes, and prior works such as \cite{ABDN24,BT24} also use AG codes, our proof does not use AG codes as a black box. Instead, we directly carry out analogs of the analyses in~\cite{GR08,FS12} using the function-field generalization of the polynomial method.

\paragraph{Field reduction.}

Next, we use a field-reduction technique to obtain explicit lossless rank dispersers over very small fields such as $\F_2$, thereby proving \cref{thm:rcsmall-informal}. To do so, we first choose an extension field $\F_Q$ of $\F_q$ of size $\poly(r)$, and apply \cref{thm:rankboundinformal} to obtain an explicit lossless rank extractor (and hence disperser) $\mathcal{E}$ over $\F_Q$. 

We then transform each matrix $E_i\in\F_Q^{r\times k}$ into a collection $S(E_i)$ of matrices over $\F_q$. Fix a basis $e_1,\dots,e_d$ of $\F_Q$ over $\F_q$. Each row $u$ of $E_i$ can be written as a linear combination of $d$ row vectors $u_1,\dots,u_d$ over $\F_q$, and we replace $u$ by one of the $u_j$ for $j\in[d]$. Since there are $d$ choices for each row, this yields $d^r=2^{\widetilde{O}(r)}$ matrices over $\F_q$, which form the collection $S(E_i)$. We then show that $\bigcup_{E_i\in\mathcal{E}} S(E_i)$ is a lossless rank disperser over $\F_q$, yielding \cref{thm:rcsmall-informal}.

\paragraph{Addressing prime fields via PIT.} 

Note that \cref{thm:rankboundinformal} requires $q$ to be non-prime. This is due to the need to choose a suitable function field. Specifically, while a function field $F$ provides many evaluation points $N(F)\gg q$, it also introduces an error term depending on its genus $g(F)$. One therefore wants $N(F)/g(F)$ to be large, so that the benefit outweighs the cost.

When $q$ is non-prime, explicit function fields with $N(F)/g(F)\geq q^c$ for some constant $c>0$ are known. In particular, when $q$ is square, we use the Garcia--Stichtenoth tower~\cite{GS96}, which attains the optimal Drinfeld--Vl\u{a}du\c{t} bound $\sqrt{q}-1$~\cite{VS83}. When $q=p^d$ with $d>1$ odd, we instead use the Bassa--Beelen--Garcia--Stichtenoth tower~\cite{BBGS15}, which similarly guarantees $N(F)/g(F)\geq q^c$.

This leaves the case where $q\geq \poly(r)$ is prime. In this case, Serre~\cite{Serre} used class field towers to show the existence of infinite families of function fields $F$ over $\F_q$ with $N(F)/g(F)\geq c\log q$ for some constant $c>0$. This result was suggested in~\cite{BT24} as a basis for constructing strong $s$-blocking sets. However, class field towers are generally not regarded as explicit objects; see, e.g.,~\cite{S01}.

We develop a new approach, based on polynomial identity testing (PIT), that reduces the prime case to the non-prime prime power case, and thereby yields explicit lossless rank extractors over $\F_q$ for prime $q\geq \poly(r)$. We begin by taking $Q=q^2$ and constructing an explicit lossless rank extractor $\mathcal{E}$ over $\F_Q$ via \cref{thm:rankboundinformal}. Starting from  $\mathcal{E}$, one can apply the field-reduction procedure described above. In this procedure, each row $u\in\F_Q^k$ is written as $u=u_1 e_1+u_2 e_2$ with respect to a basis $\{e_1,e_2\}$ of $\F_Q/\F_q$, and one replaces $u$ by either $u_1$ or $u_2$. As the choices are made independently for each of the $r$ rows, a direct application of field reduction increases the size by a factor $[\F_Q:\F_q]^r=2^r$.

To avoid this exponential blow-up, let $u^{(i)}\in \F_Q^k$ denote the $i$-th row. 
Write
\[
u^{(i)} = u^{(i)}_1 e_1 + u^{(i)}_2 e_2,
\]
where $u^{(i)}_1,u^{(i)}_2 \in \F_q^k$. 
Instead of choosing between $u^{(i)}_1$ and $u^{(i)}_2$, we replace $u^{(i)}$ by a linear combination 
$a^{(i)} u^{(i)}_1 + b^{(i)} u^{(i)}_2$, where $a^{(i)}, b^{(i)} \in \F_q$. For $q\geq \poly(r)$, a random choice of these coefficients is likely to preserve the nonzeroness of the determinant. The problem is therefore to construct a small explicit set of such choices that contains many good ones.

We show that this problem reduces to deterministic black-box PIT for a class of polynomials of the form $\det(A(x))$, where $A(x)=\sum_i A_i x_i$ and each $A_i$ has rank at most one. Gurjar and Thierauf~\cite{DBLP:journals/cc/GurjarT20} showed that this class admits quasi-polynomial-size hitting sets. We adapt their construction to ensure that evaluation points lie in $\F_q^k$ for sufficiently large $q\geq \poly(r)$, yielding \cref{thm:rankprime-informal}. Our method can be viewed as a PIT-based field-reduction technique that reduces the exponential blow-up to quasi-polynomial.

\paragraph{Subspace designs and strong $s$-blocking sets.}

Our results on subspace designs (\cref{thm:design-nonprime-informal,thm:design-prime-informal,thm:design-constant-informal}) follow from \cref{thm:rankboundinformal,thm:rankprime-informal,thm:rcsmall-informal} via the equivalence between lossless rank extractors and subspace designs (\cref{lem:rc2design}; see also \cite[Proposition~6.1]{FG15}).


Our results on strong $s$-blocking sets (\cref{thm:blocking-nonprime-informal,thm:blocking-prime-informal,thm:blocking-constant-informal}) use a similar connection, via a reduction to the weak subspace design/lossless rank disperser property (\cite[Proposition~10]{FZ16}; see also \cref{thm:lre-to-sbs}).

Finally, \cref{thm:blocking-eps-informal} provides a different construction based on $\eps$-biased sets and Fourier analysis. The key point is that $\eps$-biased sets are pseudorandom against functions whose Fourier transform has small $L_1$ norm, while the strong $s$-blocking set property can be expressed using indicator functions of subspaces and their differences, which have small $L_1$ norms. This gives a reduction from constructing explicit strong $s$-blocking sets to constructing explicit $\eps$-biased sets. Plugging in the explicit constructions of $\eps$-biased sets from \cite{Ta17,JM21} then yields explicit strong $s$-blocking sets.

\section{Preliminaries and Notation}

Define $[n] := \{1,\dots,n\}$. 
Denote by $\Sym(n)$ the symmetric group on $[n]$, i.e., the group of all permutations of $[n]$. Denote by $R^{r\times k}$ the set of $r\times k$ matrices over a commutative ring $R$. For a subspace $V\subseteq \F^k$ over a field $\F$, denote by $V^{\perp}$ its orthogonal complement with respect to the standard inner product. For a matrix $M$, let $\rowspan(M)$ and $\colspan(M)$ denote its row space and column space, respectively. The projective space of dimension $n$ over a finite field $\F_q$, denoted by $\PG(n,q)$, is the set of all one-dimensional subspaces of $\F_q^{n+1}$.

Let $r \le k$. For an $r\times k$ matrix $A$ and a subset $I \subseteq [k]$ of size $r$, let $A_I$ denote the $r\times r$ submatrix of $A$ consisting of the columns indexed by $I$. Similarly, for a $k\times r$ matrix $B$ and a subset $J \subseteq [k]$ of size $r$, let $B^J$ denote the $r\times r$ submatrix of $B$ consisting of the rows indexed by $J$.

We will use the following classical \emph{Cauchy--Binet formula}, which expresses the determinant of a product of rectangular matrices as a sum over products of minors. For a proof, see, e.g., \cite[Appendix~B.3]{For14}.

\begin{lemma}[Cauchy--Binet formula]\label{lem:CB}
Let $A$ be an $r\times k$ matrix and $B$ be a $k\times r$ matrix over a commutative ring, with $r \le k$. Then
\[
\det(AB)
=
\sum_{I \subseteq [k], |I|=r}
\det(A_I)\,\det(B^I).
\]
\end{lemma}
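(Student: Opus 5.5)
We prove the Cauchy--Binet formula by expanding both sides via multilinearity of the determinant in the columns. Write $AB$ column by column: the $j$-th column of $AB$ is $\sum_{\ell=1}^k B_{\ell j}\, a_\ell$, where $a_\ell$ denotes the $\ell$-th column of $A$. Since the determinant is multilinear in the columns (valid over any commutative ring), expanding each of the $r$ columns gives
\[
\det(AB)=\sum_{f:[r]\to[k]} \Big(\prod_{j=1}^r B_{f(j),j}\Big)\det\big(a_{f(1)},\dots,a_{f(r)}\big).
\]
The first step is to discard non-injective $f$. If $f(j)=f(j')$ for some $j\neq j'$, then the matrix has two equal columns and its determinant vanishes. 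This uses the alternating property, which holds over any commutative ring directly from the Leibniz formula: pairing $\sigma$ with $\sigma\circ(j\,j')$ cancels terms. No division by $2$ is needed.

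The second step groups the injective $f$ by their image $I=f([r])$, a subset of $[k]$ of size $r$. Write $I=\{i_1<\dots<i_r\}$. Each injective $f$ with image $I$ corresponds uniquely to a permutation $\sigma\in\Sym(r)$ via $f(j)=i_{\sigma(j)}$. Reordering the columns gives
\[
\det(a_{f(1)},\dots,a_{f(r)})=\operatorname{sgn}(\sigma)\det(A_I).
\]
Hence the contribution of $I$ is
\[
\det(A_I)\sum_{\sigma\in\Sym(r)}\operatorname{sgn}(\sigma)\prod_{j=1}^r B_{i_{\sigma(j)},j}.
\]
This inner sum is precisely the Leibniz expansion of $\det(B^I)$, since $(B^I)_{m,j}=B_{i_m,j}$. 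Summing over $I$ yields the claimed identity.

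The only point requiring care is the sign bookkeeping in the second step, namely checking that the permutation of columns matches the Leibniz expansion of $\det(B^I)$ with the same sign. We handle this by taking the convention $f=\iota_I\circ\sigma$ and invoking the standard fact that permuting columns by $\sigma$ multiplies the determinant by $\operatorname{sgn}(\sigma)$. Everything else is routine, and the whole argument is valid over any commutative ring, which matters later when the lemma is applied to matrices over $\F[x]$ or over function fields.
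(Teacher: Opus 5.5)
Your proof is correct and complete. Each step holds over an arbitrary commutative ring, as the statement requires:
\begin{itemize}
\item the column-multilinear expansion of $\det(AB)$;
\item the vanishing of the non-injective terms via the fixed-point-free, sign-reversing involution $\sigma\mapsto\sigma\circ(j\,j')$, which is the right pairing for the column-indexed Leibniz form $\prod_j C_{\sigma(j),j}$ you also use for $\det(B^I)$, so no division by $2$ is needed;
\item the regrouping of injective maps $f=\iota_I\circ\sigma$ by their image, with the sign $\operatorname{sgn}(\sigma)$ matching the Leibniz expansion of $\det(B^I)$.
\end{itemize}

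The paper gives no argument of its own for this lemma; it only refers to \cite[Appendix~B.3]{For14}. Your write-up is therefore a self-contained substitute for that citation rather than a different route.

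One minor remark: your claim that the commutative-ring generality ``matters later'' overstates things slightly. The paper applies the lemma over the function field $F$ and to $\det(UXV^T)$ with entries in $\F[x]$, which embeds in the field $\F(x)$. The field case would already suffice, though proving the stated ring version is of course fine.
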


Following \cite{GR08,FS12}, we will also need a statement asserting the unique optimality of a maximal minor of a full-rank matrix.

\begin{lemma}\label{lem:unique}
Let $M \in \F^{r \times k}$ be a full-rank matrix over a field $\F$, with $r \le k$. 
Let $I \subseteq [k]$ be the subset produced by the following procedure.
Initialize $I = \emptyset$. For $i = k, k-1, \dots, 1$, add $i$ to $I$ if the $i$-th column of $M$ does not lie in the $\F$-linear span of the columns indexed by the current set $I$.

Then $|I| = r$ and $M_I$ is nonsingular. Moreover, for every $J \subseteq [k]$ of size $r$ such that $M_J$ is nonsingular, the following hold:
\begin{enumerate}
\item $i_\ell \ge j_\ell$ for all $\ell \in [r]$, where $i_\ell$ and $j_\ell$ denote the $\ell$-th smallest elements of $I$ and $J$, respectively.
\item Let $w_1 > w_2 > \cdots > w_k$ be integers. Then
\[
\sum_{i \in I} w_i \le \sum_{i \in J} w_i,
\]
with equality if and only if $I = J$.
\end{enumerate}
\end{lemma}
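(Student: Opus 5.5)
We would prove the three assertions in turn, using the greedy procedure as a matroid-theoretic object: the columns of $M$ form a linear matroid of rank $r$, and $I$ is the lexicographically largest basis produced by scanning from right to left.

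\textbf{Step 1: $|I|=r$ and $M_I$ nonsingular.} By construction the columns indexed by $I$ are linearly independent, since each added column lies outside the span of those already chosen. We claim they span the full column space. Every column $i\notin I$ was rejected because it lay in the span of the columns of $I$ at that time, hence in the span of the final $I$. So $\colspan(M_I)=\colspan(M)=\F^r$, because $M$ has full rank $r$. Thus $|I|=r$ and $M_I$ is invertible.

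\textbf{Step 2: dominance $i_\ell\ge j_\ell$.} Fix $J$ with $M_J$ nonsingular. It suffices to show that for every threshold $t$,
\[
|I\cap\{t,\dots,k\}|\ \ge\ |J\cap\{t,\dots,k\}|.
\]
Indeed, applying this with $t=j_\ell$ gives $|I\cap[j_\ell,k]|\ge r-\ell+1$, so the $\ell$-th smallest element of $I$ is at least $j_\ell$. For the threshold inequality, the greedy argument from Step 1, restricted to the suffix, shows that $I\cap\{t,\dots,k\}$ is a basis of the span of the columns indexed by $\{t,\dots,k\}$. The greedy scan processes exactly those columns first, and each rejected one lies in the span of the accepted ones. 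Meanwhile $J\cap\{t,\dots,k\}$ is an independent set inside that span. Hence its size is at most the dimension of the span, which equals $|I\cap\{t,\dots,k\}|$.

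\textbf{Step 3: weight inequality.} Since $w$ is strictly decreasing and $i_\ell\ge j_\ell$, we get $w_{i_\ell}\le w_{j_\ell}$ for every $\ell$. Summing over $\ell$ gives $\sum_{i\in I}w_i\le\sum_{j\in J}w_j$. If equality holds, every term must be equal, so $w_{i_\ell}=w_{j_\ell}$ for all $\ell$. Strict monotonicity of $w$ then forces $i_\ell=j_\ell$ for all $\ell$, i.e.\ $I=J$. The converse direction of the equivalence is trivial.

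The only step needing care is the suffix-basis claim in Step 2. This is the standard fact that the greedy algorithm yields a basis of every prefix of the scanning order, and it follows directly from the same span argument as in Step 1, so we do not expect a real obstacle.
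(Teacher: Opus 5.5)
Your proof is correct and follows essentially the same route as the paper. Steps~1 and~3 match the paper's argument. Your suffix inequality $|I\cap\{t,\dots,k\}|\ge|J\cap\{t,\dots,k\}|$ is the direct form of the paper's contradiction argument: the paper takes an index $\ell$ with $i_\ell<j_\ell$ and shows that the $r-\ell+1$ independent columns $c_{j_\ell},\dots,c_{j_r}$ would all lie in the $(r-\ell)$-dimensional span of $c_{i_{\ell+1}},\dots,c_{i_r}$. Stating it for every threshold $t$, as you do, avoids the paper's choice of the largest violating $\ell$ and its case split on whether $j_\ell$ is selected.
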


\begin{proof}
Let $c_i$ denote the $i$-th column of $M$. By construction, each selected column lies outside the span of the previously selected ones, so the columns indexed by $I$ are linearly independent. If they did not span $\F^r$, then some column of $M$ would lie outside their span, and would have been added when considered, a contradiction. Thus $|I|=r$ and $M_I$ is nonsingular.

Let $I=\{i_1<\cdots<i_r\}$ and $J=\{j_1<\cdots<j_r\}$, where $M_J$ is nonsingular. We prove that $i_\ell \ge j_\ell$ for all $\ell$.

Suppose not, and let $\ell$ be the largest index such that $i_\ell < j_\ell$. Then for all $t > \ell$, we have $i_t \ge j_t > j_\ell$. Hence when the algorithm considers $j_\ell$, the indices already chosen are exactly $i_{\ell+1},\dots,i_r$.

Assume that $j_\ell$ is not selected. Then $c_{j_\ell} \in \Span(c_{i_{\ell+1}},\dots,c_{i_r})$. For each $t > \ell$, the index $j_t$ is processed before $j_\ell$. If $j_t$ is selected, then $j_t \in \{i_{\ell+1},\dots,i_r\}$; otherwise, $c_{j_t}$ lies in the span of $c_{i_{\ell+1}},\dots,c_{i_r}$. In either case,
\[
c_{j_t} \in \Span(c_{i_{\ell+1}},\dots,c_{i_r}).
\]
Thus
\[
\Span(c_{j_{\ell+1}},\dots,c_{j_r})
\subseteq
\Span(c_{i_{\ell+1}},\dots,c_{i_r}).
\]
Both spaces have dimension $r-\ell$, since the corresponding sets of columns are linearly independent. Hence the two spans are equal, and therefore
\[
c_{j_\ell} \in \Span(c_{j_{\ell+1}},\dots,c_{j_r}),
\]
contradicting the linear independence of the columns indexed by $J$. Thus $j_\ell$ must be selected, contradicting $i_\ell < j_\ell$. This proves $i_\ell \ge j_\ell$ for all $\ell$.

Finally, if $w_1 > \cdots > w_k$, then $i_\ell \ge j_\ell$ implies $w_{i_\ell} \le w_{j_\ell}$ for each $\ell\in [r]$, and hence
\[
\sum_{i\in I} w_i \le \sum_{i\in J} w_i.
\]
Equality holds only if $w_{i_\ell} = w_{j_\ell}$ for all $\ell\in [r]$, which (since $w_1 > \cdots > w_k$) implies $i_\ell = j_\ell$ for all $\ell\in [r]$, i.e., $I = J$.
\end{proof}

Recall the definition of lossless rank extractors and that of subspace designs. The following lemmas show a connection between the two notions. 
It is similar to \cite[Proposition~6.1]{FG15} except that we consider both primal and dual forms and require the matrices to have full rank.

\begin{lemma}\label{lem:rc2design}
Let $r\leq k$.
Suppose $\mathcal{E}\subseteq \F^{r\times k}$ is a finite collection of matrices,
each of rank $r$.
Define the collections of subspaces
\[
\mathcal{V}(\mathcal{E})=\{\rowspan(E): E\in\mathcal{E}\}
\quad\text{and}\quad
\mathcal{V}^\perp(\mathcal{E})=\{\rowspan(E)^\perp: E\in\mathcal{E}\}.
\]
Then every subspace in $\mathcal{V}(\mathcal{E})$ has dimension $r$, and every subspace in $\mathcal{V}^\perp(\mathcal{E})$ has dimension $k-r$.

Moreover, $\mathcal{E}$ is a $(k,r,L)$ strong (resp. weak) lossless rank extractor if and only if
$\mathcal{V}(\mathcal{E})$ is a $(k-r,L)$ strong (resp. weak) subspace design, which holds if and only if
$\mathcal{V}^\perp(\mathcal{E})$ is an $(r,L)$ strong (resp. weak) subspace design.
\end{lemma}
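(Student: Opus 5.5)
The whole lemma rests on one linear-algebra identity, which we establish first and then apply term by term. For $E\in\F^{r\times k}$ of rank $r$ and any $M\in\F^{k\times r}$ of rank $r$, set $W=\colspan(M)$, a subspace of dimension $r$. We claim
\[
\rank(EM)=r-\dim(\rowspan(E)^\perp\cap W).
\]
To see this, note that $EM$ is the matrix of the map $W\to\F^r$, $w\mapsto Ew$, composed with the isomorphism $\F^r\to W$, $y\mapsto My$. The kernel of $w\mapsto Ew$ restricted to $W$ is $\ker(E)\cap W$, and $\ker(E)=\rowspan(E)^\perp$. Rank--nullity then gives the identity.

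The dimension claims are immediate. Since $E$ has rank $r$, we have $\dim\rowspan(E)=r$. Because $\dim V^\perp=k-\dim V$ holds for the standard bilinear form over any field (even though $V\cap V^\perp$ may be nonzero), we also get $\dim\rowspan(E)^\perp=k-r$.

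Next we compare $\mathcal{E}$ with $\mathcal{V}^\perp(\mathcal{E})$. As $M$ ranges over rank-$r$ matrices in $\F^{k\times r}$, the space $W=\colspan(M)$ ranges over all $r$-dimensional subspaces of $\F^k$. Summing the identity over $E\in\mathcal{E}$ gives
\[
\sum_{E\in\mathcal{E}}\bigl(r-\rank(EM)\bigr)=\sum_{E\in\mathcal{E}}\dim(\rowspan(E)^\perp\cap W).
\]
Moreover, $\rank(EM)<r$ holds exactly when $\rowspan(E)^\perp\cap W\ne\{0\}$. Hence the strong (resp.\ weak) extractor condition with bound $L$ is equivalent to $\mathcal{V}^\perp(\mathcal{E})$ being an $(r,L)$ strong (resp.\ weak) design. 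One caveat: $\mathcal{E}$ is treated as an indexed family, so repeated subspaces are counted with multiplicity in both sums.

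Finally we compare the two designs by a duality computation. Let $V=\rowspan(E)$ with $\dim V=r$, and let $U$ be any subspace of dimension $k-r$, so that $W=U^\perp$ has dimension $r$. Using $(A+B)^\perp=A^\perp\cap B^\perp$ and $(V^\perp)^\perp=V$, we get
\[
\dim(V\cap U)=k-\dim(V^\perp+W)=k-(k-r)-r+\dim(V^\perp\cap W)=\dim(V^\perp\cap W).
\]
The map $U\mapsto U^\perp$ is a bijection between $(k-r)$-dimensional and $r$-dimensional subspaces, so the sums (and the indicator counts) defining the two designs agree. Therefore $\mathcal{V}(\mathcal{E})$ is a $(k-r,L)$ design if and only if $\mathcal{V}^\perp(\mathcal{E})$ is an $(r,L)$ design, which completes the chain of equivalences.

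The only delicate point is that the standard form can be degenerate on subspaces over finite fields. This does no harm here: the argument uses only $\dim V^\perp=k-\dim V$, $(V^\perp)^\perp=V$, and $(A+B)^\perp=A^\perp\cap B^\perp$, and all three hold over any field because the standard form is nondegenerate on $\F^k$.
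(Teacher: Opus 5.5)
Your proof is correct and is essentially the paper's argument: rank--nullity relates $r-\rank(EM)$ to an intersection dimension, and then the identity $\dim(V\cap U)=\dim(V^\perp\cap U^\perp)$ passes between $\mathcal{V}(\mathcal{E})$ and $\mathcal{V}^\perp(\mathcal{E})$. The only difference is the order. You apply rank--nullity to $w\mapsto Ew$ on $\colspan(M)$ and so link $\mathcal{E}$ directly to $\mathcal{V}^\perp(\mathcal{E})$, whereas the paper applies it to $v\mapsto vM$ on $\rowspan(E)$ and links $\mathcal{E}$ directly to $\mathcal{V}(\mathcal{E})$ first.
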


\begin{proof}
Define $V_E:=\rowspan(E)$ for $E\in\mathcal E$.
Since each $E\in\mathcal E$ has rank $r$, every $V_E$ has dimension $r$,
and thus every $V_E^\perp$ has dimension $k-r$.

For the main claim, consider a full-rank matrix $M\in\F^{k\times r}$ and let $W=\colspan(M)^\perp$, which has dimension $k-r$.
For $E\in\mathcal E$, consider the linear map $\phi_{E,M}:V_E\to\F^r$ that sends $v\in V_E$ to $vM$. Its image is the row space of $EM$, whose dimension is $\rank(EM)$.
Its kernel is $V_E\cap \colspan(M)^\perp=V_E\cap W$. Therefore, $r-\rank(EM)=\dim(V_E\cap W)$. Summing over $E\in\mathcal{E}$, we obtain
\[
\sum_{E\in\mathcal{E}} (r-\rank(EM))=\sum_{E\in\mathcal{E}} \dim(V_E\cap W).
\]
Similarly, the number of $E\in\mathcal{E}$ satisfying $\rank(EM)<r$ equals the number of $V_E\in\mathcal{V}(\mathcal{E})$ satisfying $V_E\cap W\neq \{0\}$.
As $M$ ranges over all full-rank matrices in $\F^{k\times r}$, $W$ ranges over all $(k-r)$-dimensional subspaces, possibly with repetitions. 
So $\mathcal{E}$ is a $(k,r,L)$ strong (resp. weak) lossless rank extractor if and only if
$\mathcal{V}(\mathcal{E})$ is a $(k-r,L)$ strong (resp. weak) subspace design.

Finally, for any $r$-dimensional subspace $U$ and any $V\in \mathcal V(\mathcal E)$, we have
\begin{align*}
\dim(V^\perp\cap U)&=k-\dim((V^\perp\cap U)^\perp)=k-\dim(V+ U^\perp)\\
&=k-(\dim V+\dim U^\perp-\dim(V\cap U^\perp))
=\dim(V\cap U^\perp).
\end{align*}
As $U$ ranges over all $r$-dimensional subspaces, $U^\perp$ ranges over all $(k-r)$-dimensional subspaces. So $\mathcal{V}(\mathcal{E})$ is a $(k-r,L)$ strong (resp. weak) subspace design iff $\mathcal{V}^\perp(\mathcal{E})$ is an $(r,L)$ strong (resp. weak) subspace design.
\end{proof}

In general, the matrices in a lossless rank extractor need not have full rank.
However, by discarding the rank-deficient matrices, we can ensure full rank at the cost
of losing at most $L$ elements.

\begin{lemma}\label{lem:reduced}
Let $r\le k$. Suppose $\mathcal{E}_0\subseteq \F^{r\times k}$ is a $(k,r,L)$ strong (resp. weak) lossless
rank extractor of size $n\ge L$. Let
\[
\mathcal{E}=\{E\in\mathcal{E}_0 : \rank(E)=r\}.
\]
Then $\mathcal{E}$ is a $(k,r,L)$ strong (resp. weak) lossless rank extractor of size at least $n-L$
consisting only of rank-$r$ matrices. Consequently, $\mathcal{V}(\mathcal{E})$ and $\mathcal{V}^\perp(\mathcal{E})$ as defined in \cref{lem:rc2design} form a $(k-r,L)$ strong (resp. weak) subspace design and an $(r,L)$ strong (resp. weak) subspace design, respectively, both of size at least $n-L$.
\end{lemma}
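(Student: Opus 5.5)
The plan is to bound the number of rank-deficient matrices in $\mathcal{E}_0$ directly, and then observe that discarding them cannot break the extractor property.

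First, fix any full-rank $M\in\F^{k\times r}$; one exists since $r\le k$. If $\rank(E)<r$, then $\rank(EM)\le\rank(E)<r$, so every rank-deficient $E\in\mathcal{E}_0$ is a bad index for this $M$. In the weak case the number of bad indices for $M$ is at most $L$. In the strong case each bad index contributes at least $1$ to $\sum_i (r-\rank(E_iM))\le L$, so again there are at most $L$ of them. Hence at most $L$ elements of $\mathcal{E}_0$ fail to have rank $r$, and $|\mathcal{E}|\ge n-L$. If $\mathcal{E}_0$ is a multiset, we count with multiplicity. The hypothesis $n\ge L$ only serves to keep this bound meaningful.

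Second, the extractor property passes to subcollections. For any full-rank $M$, both the number of indices with $\rank(EM)<r$ and the sum of the nonnegative terms $r-\rank(EM)$ can only decrease when we restrict from $\mathcal{E}_0$ to $\mathcal{E}\subseteq\mathcal{E}_0$. So $\mathcal{E}$ is still a $(k,r,L)$ strong (resp. weak) lossless rank extractor.

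Finally, every matrix in $\mathcal{E}$ has rank $r$, so \cref{lem:rc2design} applies. It gives that $\mathcal{V}(\mathcal{E})$ is a $(k-r,L)$ subspace design and $\mathcal{V}^\perp(\mathcal{E})$ is an $(r,L)$ subspace design, of the matching strong or weak type. Their sizes equal $|\mathcal{E}|\ge n-L$, counted as collections indexed by $\mathcal{E}$, as in \cref{lem:rc2design}.

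No step is a real obstacle. The only point needing care is the first step, where we must note that a single full-rank $M$ already detects every rank-deficient $E$.
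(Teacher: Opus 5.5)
Your proposal is correct and follows the paper's proof: a single full-rank $M$ flags every rank-deficient $E$ as bad, so there are at most $L$ of them; removing matrices cannot increase either count; and \cref{lem:rc2design} then gives both subspace designs. Your observation that each bad index contributes at least $1$ to the strong sum is the same as the paper's remark that a strong extractor is also a weak one.
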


\begin{proof}
Fix any full-rank $M\in\F^{k\times r}$. If $\rank(E)<r$, then $\rank(EM)\le \rank(E)<r$.
Since $\mathcal{E}_0$ is a $(k,r,L)$ weak lossless rank extractor, there are at most $L$ such matrices $E$ in $\mathcal{E}_0$. Hence $|\mathcal{E}|\ge n-L$.

Removing these matrices does not increase the quantity $\sum_{E}(r-\rank(EM))$ or the number of $E$ satisfying $\rank(EM)<r$ for any $M$,
so $\mathcal{E}$ remains a $(k,r,L)$ strong (resp. weak) lossless rank extractor.
The rest follows from \cref{lem:rc2design}.
\end{proof}

\begin{remark}
Although lossless rank extractors and subspace designs are essentially equivalent, their family sizes are typically optimized in opposite directions. For a rank extractor, one often fixes a target error $\eps$ and seeks to minimize the family size $n$ subject to $L/n \leq \eps$, since $\log n$ is the seed length needed to sample a matrix from the family. Thus, an upper bound on the required family size is desirable. For a subspace design, one instead fixes the intersection bound $L$ and seeks constructions with $n$ as large as possible subject to this bound; thus, a lower bound on the achievable family size is desirable.
\end{remark}






\subsection{Preliminaries on Function Fields}

We recall basic facts about finitely generated function fields of transcendence degree one over a finite field $\F_q$. The theory of such function fields is equivalent to that of nonsingular irreducible projective curves over $\F_q$, via a well-known correspondence that is functorial up to isomorphism \cite[Section~1.6]{har77}. In this paper, we adopt the algebraic language of function fields. For a comprehensive treatment, see \cite{Sti09}.

Throughout this subsection, let $F$ be a finitely generated function field of transcendence degree one over $\F_q$.

\paragraph{Discrete valuations.}
A \emph{discrete valuation} on $F$ (trivial on $\F_q$) is a map $v:F\to \mathbb{Z}\cup\{\infty\}$
such that for all $x,y\in F$:
\begin{enumerate}
\item $v(xy)=v(x)+v(y)$;
\item $v(x+y)\ge \min\{v(x),v(y)\}$;
\item $v(x)=\infty \iff x=0$;
\item $v(x)=0$ for all $x\in\F_q^\times$.\footnote{This condition is redundant: it follows from $v(xy)=v(x)+v(y)$ and the finiteness of $\F_q^\times$, but we state it explicitly for clarity.}
\end{enumerate}
We say $v$ is \emph{normalized} if $v(F)=\Z$.

When $v(x)\neq v(y)$, the second condition is in fact an equality:
\begin{lemma}\label{lem:triangle}
If $v(x)\neq v(y)$, then $v(x+y)=\min\{v(x),v(y)\}$.
\end{lemma}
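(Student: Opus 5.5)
Without loss of generality assume $v(x) < v(y)$; the case $v(y)<v(x)$ is symmetric. Note that this forces $x \neq 0$, since $v(x)<v(y)\le\infty$ means $v(x)$ is finite. By property (2) of a discrete valuation we already have $v(x+y)\ge \min\{v(x),v(y)\} = v(x)$, so it remains to prove the reverse inequality $v(x+y)\le v(x)$.

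For the reverse inequality, I will write $x = (x+y) + (-y)$ and apply property (2) again:
\[
v(x) \ge \min\{v(x+y), v(-y)\}.
\]
To use this, I first note that $v(-y)=v(y)$. Indeed, $v(-y) = v(-1)+v(y)$ by property (1), and $v(-1)=0$ by property (4), since $-1\in\F_q^\times$. (Alternatively, $2v(-1)=v(1)=0$, which also works if $y=0$ with the convention $\infty+0=\infty$.)

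Now suppose, for contradiction, that $v(x+y) > v(x)$. Since also $v(-y)=v(y)>v(x)$, both terms in the minimum exceed $v(x)$, and the displayed inequality gives $v(x) > v(x)$, which is absurd. Hence $v(x+y)\le v(x)$, and therefore $v(x+y)=v(x)=\min\{v(x),v(y)\}$.

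The argument is entirely elementary. The only point needing care is the sign identity $v(-y)=v(y)$, together with the cases involving the value $\infty$: the case $y=0$ is trivially consistent, and $x=0$ is impossible as noted above.
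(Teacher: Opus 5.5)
Your proof is correct and takes essentially the same approach as the paper: both get $v(x+y)\ge v(x)$ directly from property (2), then apply it again to $x=(x+y)+(-y)$ and use $v(-y)=v(y)$ together with $v(x)<v(y)$ to conclude $v(x+y)\le v(x)$. Your explicit justification of $v(-y)=v(y)$ via $v(-1)=0$ fills in a step the paper uses without comment.
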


\begin{proof}
By symmetry, we may assume $v(x)<v(y)$.
We have $v(x+y)\ge \min\{v(x),v(y)\}=v(x)$.
Also,
\begin{equation}\label{eq:STI}
v(x)=v((x+y)-y)\ge \min\{v(x+y),v(-y)\}=\min\{v(x+y),v(y)\}.
\end{equation}

Since $v(x)<v(y)$, the minimum on the right-hand side of \eqref{eq:STI} cannot be $v(y)$; hence it must be
$v(x+y)$. 
So \eqref{eq:STI} becomes $v(x)\ge v(x+y)$. Combining with $v(x+y)\ge v(x)$ yields
$v(x+y)=v(x)$.
\end{proof}

\paragraph{Valuation rings.}
Given a discrete valuation $v$ on $F$, the associated \emph{valuation ring} is
\[
\mathcal{O}_v=\{x\in F : v(x)\ge 0\}.
\]
It is a local ring, i.e., a ring with a unique maximal ideal.
This maximal ideal is
\[
\mathfrak{m}_v=\{x\in F : v(x)>0\}.
\]
An element $x\in \mathcal{O}_v$ is a unit if and only if $v(x)=0$.
The \emph{residue field} of $v$ is defined as $\kappa_v=\mathcal{O}_v/\mathfrak{m}_v$.


\paragraph{Places.}
A \emph{place} $P$ of $F$ consists of a normalized
discrete valuation
$v_P : F \to \mathbb{Z} \cup \{\infty\}$
that is trivial on $\mathbb{F}_q$. We denote by $\mathcal{P}_F$ the set of all places of $F$. For a place $P \in \mathcal{P}_F$, we write $\mathcal{O}_P$ for its valuation ring, $\mathfrak{m}_P$ for its maximal ideal, and $\kappa_P = \mathcal{O}_P / \mathfrak{m}_P$ for the corresponding residue field. The degree of $P$ is defined as $\deg(P) = [\kappa_P : \mathbb{F}_q]$. A place $P$ is called \emph{rational} if $\deg(P)=1$, or equivalently if
$\kappa_P=\mathbb{F}_q$. For such a place $P$, the \emph{evaluation} of a function
$f\in \mathcal{O}_P$ at $P$, denoted by $f(P)\in\mathbb{F}_q$, is defined as the image of $f$ in
the residue field $\kappa_P$.

\paragraph{Divisors.}
A \emph{divisor} on $F$ is a formal integer linear combination of places,
\[
D = \sum_{P \in \mathcal{P}_F} n_P P,
\]
where all but finitely many coefficients $n_P$ are zero.
The degree of $D$ is $\deg(D) = \sum_{P\in \mathcal{P}_F}  n_P \deg(P)$. We call $D$ \emph{effective}, and write
$D\ge 0$, if $n_P\ge 0$ for all $P\in\mathcal{P}_F$.

The set of divisors on $F$ forms a free abelian group $\mathrm{Div}(F)$ generated by the places of $F$.
The degree map $\deg(\cdot):\mathrm{Div}(F) \to \mathbb{Z}$ is a group homomorphism.

For a nonzero function $f \in F^\times$, the associated \emph{principal divisor} is
\[
(f)=\sum_{P \in \mathcal{P}_F} v_P(f)\, P,
\]
where $v_P$ is the normalized discrete valuation corresponding to the place $P$. The degree of a principal divisor $(f)$ is always zero. Intuitively, this means that the sum of the orders of the zeros of $f$ equals the sum of the orders of its poles.

\paragraph{Riemann--Roch spaces.}
For a divisor $D$, we define the associated Riemann--Roch space
\[
\mathcal{L}(D)=\{f\in F^\times\cup\{0\} : (f)+D \ge 0\}.
\]
It is a finite-dimensional $\mathbb{F}_q$-vector space, and we denote its
dimension by $\ell(D)=\dim_{\mathbb{F}_q}\mathcal{L}(D)$.

We need the following lemma, which is a function-field generalization of the fact
that a nonzero polynomial of degree $d$ cannot have more than $d$ zeros.

\begin{lemma}\label{lem:zero-section}
$\ell(D)=0$ if $\deg(D)<0$.
\end{lemma}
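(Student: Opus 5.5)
We argue by contradiction: suppose $\deg(D)<0$ and there is a nonzero $f\in\mathcal{L}(D)$. The plan has two steps, using only that principal divisors have degree zero and that the degree map is additive.

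\emph{Step 1: an effective divisor.} By definition of $\mathcal{L}(D)$, the divisor $E:=(f)+D$ is effective. Writing $E=\sum_P n_P P$ with all $n_P\ge 0$, and noting that $\deg(P)\ge 1$ for every place, we get
\[
\deg(E)=\sum_{P} n_P\deg(P)\ge 0 .
\]

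\emph{Step 2: compute the degree.} Since $\deg$ is a group homomorphism on $\mathrm{Div}(F)$ and principal divisors have degree zero,
\[
\deg(E)=\deg((f))+\deg(D)=\deg(D)<0 .
\]
This contradicts Step 1, so no nonzero $f$ exists, $\mathcal{L}(D)=\{0\}$, and $\ell(D)=0$.

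There is no real obstacle here. The only nontrivial ingredient is the fact, stated earlier in the paper, that $\deg((f))=0$ for $f\in F^\times$. Everything else is sign bookkeeping: an effective divisor has nonnegative degree because each place has positive degree.
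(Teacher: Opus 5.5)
Your proof is correct and is the same argument as the paper's. You assume a nonzero $f\in\mathcal{L}(D)$, note that $(f)+D\ge 0$ has nonnegative degree, and contradict this with $\deg((f)+D)=\deg(D)<0$ using $\deg((f))=0$; your remark that every place has positive degree spells out a step the paper leaves implicit.
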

\begin{proof}
Assume to the contrary that $\ell(D)>0$, i.e., $\mathcal{L}(D)$ contains a nonzero function $f\in F^\times$. Then $(f)+D\geq 0$. So $\deg((f)+D)\geq 0$. On the other hand, $\deg((f)+D)=\deg((f))+\deg(D)=\deg(D)$ since the degree of the principal divisor $(f)$ is zero. But this is impossible since $\deg(D)<0$.
\end{proof}

The following important theorem gives an estimate on the dimension of a Riemann--Roch space.

\begin{theorem}[Riemann--Roch theorem]\label{thm:RR}
There exist an integer $g\ge 0$ and a divisor $K$ on $F$ such that for every divisor
$D$ on $F$,
\[
\ell(D)=\deg(D)+1-g+\ell(K-D).
\]
The integer $g$ is called the \emph{genus} of $F$, and $K$ is called a
\emph{canonical divisor}. Moreover, $\deg(K)=2g-2$.
\end{theorem}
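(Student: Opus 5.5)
This is the classical Riemann--Roch theorem, and in the paper it would simply be cited from \cite[Theorem~1.5.15]{Sti09}. If I had to prove it, I would follow the adelic approach of Weil, as in \cite[Chapter~1]{Sti09}.

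\textbf{Step 1: finiteness and comparison.} First show that $\ell(D)$ is finite and that $\deg(D)-\ell(D)$ is bounded above independently of $D$. The basic local fact is this: if $A\le B$, then
\[
\ell(B)-\ell(A)\le \deg B-\deg A.
\]
It suffices to prove this when $B=A+P$. In that case one evaluates $t^{-(v_P(A)+1)}f$ in $\kappa_P$, where $t$ is a uniformizer at $P$. This gives an $\F_q$-linear map from $\mathcal{L}(A+P)$ to $\kappa_P$ whose kernel is $\mathcal{L}(A)$, so the dimension jumps by at most $\deg P$. Combined with $\ell(0)=1$ and \cref{lem:zero-section}, this gives finiteness.

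\textbf{Step 2: Riemann's inequality.} Fix a transcendental element $x\in F$. Use the identity $\deg (x)_\infty=[F:\F_q(x)]$. Then exhibit, via an integral basis over $\F_q[x]$, enough elements of $\mathcal{L}(n(x)_\infty)$ to show
\[
\ell(n(x)_\infty)\ge n\deg(x)_\infty - c
\]
for a constant $c$ independent of $n$. Together with Step 1, this shows that
\[
g:=\max_D\bigl(\deg D-\ell(D)+1\bigr)
\]
exists and is a nonnegative integer. Hence $\ell(D)\ge \deg D+1-g$.

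\textbf{Step 3: the defect as a cohomology group.} Introduce the adele ring $\mathcal{A}_F$ and its subspaces $\mathcal{A}_F(D)$. Show that the defect
\[
i(D):=\ell(D)-\deg D-1+g
\]
equals $\dim_{\F_q}\mathcal{A}_F/(\mathcal{A}_F(D)+F)$. This follows from the same comparison argument as in Step 1, applied to pairs $A\le B$, together with the choice of a divisor realizing the maximum $g$.

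\textbf{Step 4: Weil differentials.} Define Weil differentials as linear functionals on $\mathcal{A}_F$ that vanish on $\mathcal{A}_F(D)+F$ for some $D$. Let $\Omega(D)$ be the space of those vanishing on $\mathcal{A}_F(D)+F$, so $\dim\Omega(D)=i(D)$. The space of all Weil differentials is naturally an $F$-vector space. The key claim, and the main obstacle, is that this $F$-vector space is one-dimensional. I would prove it by contradiction: if $\omega_1,\omega_2$ were $F$-independent, then the two spaces
\[
\mathcal{L}(A+B_1)\omega_1 \quad\text{and}\quad \mathcal{L}(A+B_2)\omega_2,
\]
both sitting inside $\Omega(-A)$, would be independent. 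Taking $A$ of large degree and applying Riemann's inequality to both makes the sum of their dimensions exceed $i(-A)=\deg A+g-1$, which is a contradiction.

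\textbf{Step 5: canonical divisor and conclusion.} Given one-dimensionality, each nonzero Weil differential $\omega$ has a well-defined divisor $(\omega)$, the largest $D$ with $\omega\in\Omega(D)$. Set $K=(\omega)$. Then $f\mapsto f\omega$ is an isomorphism $\mathcal{L}(K-D)\to\Omega(D)$, so $i(D)=\ell(K-D)$, which is exactly the Riemann--Roch formula. Finally, put $D=0$ to get $\ell(K)=g$, and put $D=K$ to get $1=\deg K+1-g+g$ in the form $\ell(K)=\deg K+1-g+\ell(0)$. Together these give $\deg K=2g-2$.
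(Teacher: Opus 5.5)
The paper gives no proof of this theorem; it is quoted as standard background with a pointer to \cite{Sti09}, and your outline is exactly the adelic/Weil-differential proof from that source, so the approach is correct (note that $\ell(0)=1$ tacitly uses that $\F_q$ is the full constant field of $F$, which the paper also assumes implicitly and which holds whenever $F$ has a rational place). Two slips to fix: in Step~1 the map should be $f\mapsto (t^{v_P(A)+1}f)(P)$ with a positive exponent, since $t^{-(v_P(A)+1)}f$ need not lie in $\mathcal{O}_P$; and in Step~5 substituting $D=K$ gives $g=\deg K+1-g+1$, not $1=\deg K+1-g+g$.
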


As $\ell(K-D)=\dim_{\F_q}\mathcal{L}(K-D)\geq 0$, we have the following corollary.

\begin{corollary}[Riemann's inequality]\label{cor:riemann}
$\ell(D)\geq \deg(D)+1-g$.
\end{corollary}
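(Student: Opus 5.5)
Riemann's inequality follows directly from the Riemann--Roch theorem (\cref{thm:RR}). Fix the genus $g$ and a canonical divisor $K$ given there. For any divisor $D$ on $F$, the theorem gives
\[
\ell(D)=\deg(D)+1-g+\ell(K-D).
\]
The only remaining point is that the correction term $\ell(K-D)$ is nonnegative.

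By definition, $\ell(K-D)=\dim_{\F_q}\mathcal{L}(K-D)$. Here $\mathcal{L}(K-D)$ is a finite-dimensional $\F_q$-vector space; it always contains $0$ and may be trivial. A dimension is a nonnegative integer, so $\ell(K-D)\ge 0$. Dropping this term from the right-hand side yields $\ell(D)\ge \deg(D)+1-g$, as claimed.

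There is no real obstacle here. The one point to check is that $\mathcal{L}(K-D)$ is a genuine vector space (it includes $0$ by definition), so its dimension is well defined. This is part of the setup preceding \cref{thm:RR}.
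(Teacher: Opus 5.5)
Your proposal is correct and matches the paper's argument: apply the Riemann--Roch theorem (\cref{thm:RR}) and drop the nonnegative term $\ell(K-D)=\dim_{\F_q}\mathcal{L}(K-D)\ge 0$.
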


This inequality is in fact an equality when the degree of $D$ is sufficiently large:

\begin{corollary}\label{cor:riemann-equality}
Suppose $\deg(D)\geq 2g-1$. Then $\ell(D)=\deg(D)+1-g$. 
\end{corollary}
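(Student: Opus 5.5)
The corollary should follow directly from the Riemann--Roch theorem (\cref{thm:RR}) together with the vanishing statement \cref{lem:zero-section}. Let $K$ be a canonical divisor of $F$ and $g$ the genus, as provided by \cref{thm:RR}. For any divisor $D$, that theorem gives
\[
\ell(D)=\deg(D)+1-g+\ell(K-D).
\]
It therefore suffices to show that $\ell(K-D)=0$ whenever $\deg(D)\ge 2g-1$.

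To prove this, I will compute the degree of $K-D$. Since the degree map is a group homomorphism on $\mathrm{Div}(F)$ and $\deg(K)=2g-2$ by \cref{thm:RR}, we have
\[
\deg(K-D)=\deg(K)-\deg(D)=2g-2-\deg(D)\le 2g-2-(2g-1)=-1<0.
\]
\cref{lem:zero-section} then gives $\ell(K-D)=0$. Substituting this into the Riemann--Roch identity yields $\ell(D)=\deg(D)+1-g$, as claimed.

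I do not expect any real obstacle here. The only content is the degree bookkeeping for $K-D$, together with the observation that the correction term $\ell(K-D)$ in \cref{thm:RR} vanishes exactly when this degree is negative. The threshold $2g-1$ is precisely what makes $\deg(K-D)$ negative.
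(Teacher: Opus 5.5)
Your proposal is correct and matches the paper's proof: you compute $\deg(K-D)=2g-2-\deg(D)<0$, apply \cref{lem:zero-section} to get $\ell(K-D)=0$, and substitute into the Riemann--Roch identity.
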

\begin{proof}
We have $\deg(K-D)=\deg(K)-\deg(D)=(2g-2)-\deg(D)<0$. So $\ell(K-D)=0$ by \cref{lem:zero-section}. The claim then follows from the Riemann--Roch theorem.
\end{proof}

\cref{cor:riemann-equality} further yields the following corollary, which guarantees the existence of elements in a Riemann–Roch space with a prescribed order of zero or pole at a given place.

\begin{corollary}\label{cor:existence-of-functions}
Let $P$ be a place of $F$ and $D$ be a divisor on $F$. Let $n$ be the coefficient of $P$ in $D$. 
Suppose $\deg(D-P)\geq 2g-1$. Then there exists $f\in\mathcal{L}(D)$ such that $v_P(f)=-n$.
\end{corollary}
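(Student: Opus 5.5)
The plan is to compare the Riemann--Roch spaces $\mathcal{L}(D)$ and $\mathcal{L}(D-P)$. Both divisors have degree large enough for \cref{cor:riemann-equality}, and the difference in their dimensions will produce the required function.

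First I check the degree hypotheses. Since $\deg(P)\ge 1$, we have $\deg(D)=\deg(D-P)+\deg(P)\ge \deg(D-P)\ge 2g-1$. So \cref{cor:riemann-equality} applies to both $D$ and $D-P$, and gives
\[
\ell(D)-\ell(D-P)=\deg(D)-\deg(D-P)=\deg(P)\ge 1 .
\]
In particular $\mathcal{L}(D-P)\subsetneq \mathcal{L}(D)$; the inclusion holds because $D-P\le D$. Choose any $f\in\mathcal{L}(D)\setminus\mathcal{L}(D-P)$. Then $f\neq 0$, since $0$ lies in every Riemann--Roch space.

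Next I read off the valuation at $P$. Because $f\in\mathcal{L}(D)$, we have $v_Q(f)\ge -n_Q$ for every place $Q$; in particular $v_P(f)\ge -n$. The divisors $D-P$ and $D$ have the same coefficient at every place $Q\ne P$, so $f$ satisfies the defining inequality of $\mathcal{L}(D-P)$ at all such $Q$. Since $f\notin\mathcal{L}(D-P)$, the inequality must fail at $P$ itself: $v_P(f)<-(n-1)$, that is, $v_P(f)\le -n$. Combining the two bounds gives $v_P(f)=-n$.

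There is no real obstacle here. The only care needed is to confirm that the degree hypothesis transfers to $D$ (it does, because $\deg P>0$) and that the failure of membership in $\mathcal{L}(D-P)$ can only occur at the coefficient of $P$.
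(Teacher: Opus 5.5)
Your proof is correct and follows the paper's argument essentially step for step: apply \cref{cor:riemann-equality} to both $D$ and $D-P$ to get $\ell(D)-\ell(D-P)=\deg(P)>0$, pick $f\in\mathcal{L}(D)\setminus\mathcal{L}(D-P)$, and read off $v_P(f)=-n$ from the two membership conditions. You also check explicitly that $\deg(D)\ge 2g-1$ and that membership in $\mathcal{L}(D-P)$ can only fail at $P$, details the paper leaves implicit.
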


\begin{proof}
Applying \cref{cor:riemann-equality} to $D$ and $D-P$ shows that $\ell(D)=\deg(D)+1-g$ and $\ell(D-P)=\deg(D)+1-g-\deg(P)=\ell(D)-\deg(P)<\ell(D)$.
So there exists $f\in \mathcal{L}(D)$ that is not in $\mathcal{L}(D-P)$.
By definition, we have $v_P(f)+n\geq 0$ but $v_P(f)+(n-1)<0$. So $v_P(f)=-n$.
\end{proof}

We also record the following corollary of Riemann's inequality on the existence of functions with distinct valuations at a given rational place.

\begin{corollary}\label{cor:interval}
Let $P$ be a rational place of $F$. Let $n$ be a nonnegative integer and $d=n-1+g$. Then there exist $f_1,f_2,\dots,f_n\in \mathcal{L}(d P)$ such that
\[
0\geq v_P(f_1)>v_P(f_2)>\dots >v_P(f_n)\geq -d.
\]
\end{corollary}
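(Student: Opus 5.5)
The plan is to use the filtration $\mathcal{L}(0)\subseteq \mathcal{L}(P)\subseteq \mathcal{L}(2P)\subseteq\cdots\subseteq\mathcal{L}(dP)$ and count how many of its steps are strict. Since $P$ is rational, $\deg(P)=1$, and each quotient $\mathcal{L}(jP)/\mathcal{L}((j-1)P)$ has dimension at most $1$. Indeed, the map $f\mapsto (t^{j}f)(P)$, for a local parameter $t$ at $P$, injects this quotient into $\kappa_P=\F_q$. Alternatively, one can argue directly: if $f,f'\in\mathcal{L}(jP)$ both have $v_P=-j$, then some $c\in\F_q^\times$ satisfies $v_P(f-cf')>-j$, because the residue field is $\F_q$.

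Consequently, $\dim\mathcal{L}(jP)-\dim\mathcal{L}((j-1)P)\in\{0,1\}$ for $1\le j\le d$, and it equals $1$ exactly when some $f\in\mathcal{L}(jP)$ has $v_P(f)=-j$. This holds because every $f\in\mathcal{L}(jP)\setminus\mathcal{L}((j-1)P)$ must have $v_P(f)=-j$, since $P$ is the only place at which poles are allowed.

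Next, I bound the total number of jumps. The number of indices $j\in\{0,1,\dots,d\}$ that admit a function in $\mathcal{L}(dP)$ with $v_P=-j$ is exactly $\ell(dP)$: index $0$ contributes the constants, since $\ell(0)=1$ and $v_P(1)=0$, and each index $j\ge1$ contributes its jump. By Riemann's inequality (\cref{cor:riemann}),
\[
\ell(dP)\ge d+1-g=n.
\]
So there are at least $n$ distinct integers $0\ge -j_1>-j_2>\dots>-j_n\ge -d$ that are realized as $v_P(f)$ for some $f\in\mathcal{L}(dP)$. Choosing one such function for each value gives $f_1,\dots,f_n$ with the required strictly decreasing valuations.

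The edge case $n=0$ is vacuous. If $d<0$ (only possible when $n=0$ and $g=0$), the statement is empty as well.

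The only step that needs care is justifying that each jump has size at most $1$. I would prove it in the elementary way above, using $\kappa_P=\F_q$ to cancel leading terms. This also shows $\ell(dP)\le d+1$, which is consistent with the count. Everything else is bookkeeping together with Riemann's inequality.
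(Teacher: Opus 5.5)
Your proof is correct and follows essentially the same route as the paper: filter $\mathcal{L}(dP)$ by the spaces $\mathcal{L}(jP)$, show each step has dimension at most one using a uniformizer at the rational place $P$, and use Riemann's inequality to get at least $n$ strict steps, each yielding a function with a new pole order. (The paper indexes the chain downward as $V_t=\mathcal{L}((d-t)P)$, so its chosen $f_i$ actually satisfy $v_P(f_i)=-(d-t_i)$ rather than $-t_i$ as written there; your upward indexing avoids that bookkeeping slip.)
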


\begin{proof}
For $t\ge 0$, let $V_t=\mathcal{L}((d-t)P)$. Then
\begin{equation}\label{eq:chain-V}
V_0 \supseteq V_1 \supseteq \cdots \supseteq V_{d+1}=\{0\},
\end{equation}
where the last equality holds by \cref{lem:zero-section}.
By Riemann's inequality (\cref{cor:riemann}), we have $\dim_{\F_q} V_0=\ell(dP)\ge d+1-g=n$.

Moreover, $\dim_{\F_q} V_t/V_{t+1}\le 1$ for every $t$. Indeed, let $u\in F^\times$ satisfy $v_P(u)=1$. Then the map
\[
V_t \to \kappa_P,\qquad f\mapsto (u^{d-t} f)(P),
\]
is well-defined, and its kernel is exactly $V_{t+1}$. Hence $V_t/V_{t+1}$ embeds into $\kappa_P\cong\F_q$, so it has dimension at most one over $\F_q$.

As $\dim_{\F_q} V_t$ drops from at least $n$ to $0$ along the chain \eqref{eq:chain-V}, and each step decreases the dimension by at most $1$, there are at least $n$ indices $0\le t_1<\cdots<t_n\le d$ such that $\dim V_{t_i}/V_{t_i+1}=1$.

For each $i\in[n]$, choose $f_i\in V_{t_i}\setminus V_{t_i+1}$. Then $f_i\in \mathcal{L}(dP)$ and $v_P(f_i)=-t_i$. Since $0\leq t_1<\cdots<t_n\leq d$, we obtain
\[
0\ge v_P(f_1)>v_P(f_2)>\cdots>v_P(f_n)\ge -d.
\]
This proves the claim.
\end{proof}

Finally, we need the following lemma, which lower-bounds the valuation of the determinant at a place by the corank of the reduced matrix.

\begin{lemma}\label{lem:val-from-rank}
Let $P$ be a place of $F$, and let $A\in \mathcal{O}_P^{r\times r}$. Let
$\overline{A}\in \kappa_P^{r\times r}$ denote the matrix obtained by reducing the
entries of $A$ modulo $\mathfrak{m}_P$. Then
\[
v_P(\det(A))\ge r-\rank(\overline{A}).
\]
\end{lemma}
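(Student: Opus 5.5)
The plan is to reduce to a statement about matrices over the local ring $\mathcal{O}_P$ by performing invertible row and column operations that do not change the valuation of the determinant. Let $s=\rank(\overline{A})$. Since $\kappa_P$ is a field, there exist invertible matrices $\overline{U},\overline{V}\in\GL_r(\kappa_P)$ with $\overline{U}\,\overline{A}\,\overline{V}$ equal to the block matrix with $I_s$ in the top-left corner and zeros elsewhere. First I lift $\overline{U},\overline{V}$ entrywise to matrices $U,V\in\mathcal{O}_P^{r\times r}$. Their determinants reduce to $\det\overline{U},\det\overline{V}\neq 0$ in $\kappa_P$, so $\det U,\det V$ lie in $\mathcal{O}_P\setminus\mathfrak{m}_P$; that is, they are units with valuation $0$. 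Hence $v_P(\det(UAV))=v_P(\det A)$, and it suffices to prove the claim for $A':=UAV$, whose reduction is $\mathrm{diag}(I_s,0)$.

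Next I would observe that every entry of $A'$ in the last $r-s$ rows lies in $\mathfrak{m}_P$, so it has valuation at least $1$. Then I expand $\det(A')=\sum_{\sigma\in\Sym(r)}\mathrm{sgn}(\sigma)\prod_{i}A'_{i,\sigma(i)}$. Each term is a product of $r$ entries of $\mathcal{O}_P$ (valuation $\ge 0$), including exactly one entry from each of the last $r-s$ rows (valuation $\ge 1$). So each term has valuation at least $r-s$.

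By the ultrametric inequality (property 2 of a valuation, iterated over the finite sum), $v_P(\det A')\ge r-s$. This gives $v_P(\det A)\ge r-\rank(\overline{A})$. The inequality also holds trivially if $\det A=0$, since then the valuation is $\infty$.

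There is no real obstacle here. The only point needing care is to check that the lifts $U,V$ are invertible over $\mathcal{O}_P$, or at least that their determinants have valuation zero. This follows because reduction modulo $\mathfrak{m}_P$ is a ring homomorphism that commutes with the determinant, and elements of $\mathcal{O}_P$ not in $\mathfrak{m}_P$ have valuation $0$.

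Alternatively, one could avoid the lifts. Pick $r-s$ independent $\kappa_P$-linear relations among the rows of $\overline{A}$ and apply the corresponding lifted unimodular row operations directly. But the lifting approach above is cleaner.
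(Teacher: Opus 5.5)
Your proof is correct and follows essentially the same route as the paper: transform $A$, without changing $v_P(\det A)$, into a matrix whose last $r-s$ rows have all entries in $\mathfrak{m}_P$, then bound each term of the determinant expansion by $r-s$. The only difference is cosmetic. The paper uses lifted row operations alone (your ``alternative''), which preserve the determinant exactly, whereas you multiply on both sides by lifted invertible matrices whose determinants are units.
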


\begin{proof}
Let $s:=\rank(\overline{A})$. Choose $s$ rows of $\overline{A}$ that form a basis
of its row space, and let $A_{i_1},\dots,A_{i_s}$ be the corresponding rows of
$A$. For every other row $A_i$, write
$\overline{A}_i=\sum_{j=1}^s c_{i,j}\overline{A}_{i_j}$ with $c_{i,j}\in\kappa_P$
and lift each $c_{i,j}$ to some $\widetilde c_{i,j}\in\mathcal O_P$.
Replacing $A_i$ by $A_i-\sum_{j=1}^s \widetilde c_{i,j}A_{i_j}$
does not change the determinant. Performing these row operations for all
$i\notin\{i_1,\dots,i_s\}$ yields a matrix $B\in\mathcal O_P^{r\times r}$
with $\det(B)=\det(A)$ such that all entries in $r-s$ rows of $B$ lie in
$\mathfrak m_P$.

In the determinant expansion of $B$, every term contains one entry from each
of these $r-s$ rows. Hence every term lies in $\mathfrak m_P^{r-s}$, and therefore $\det(B)\in\mathfrak m_P^{r-s}$.
Thus
\[
v_P(\det(A))
=
v_P(\det(B))
\ge r-s
=
r-\rank(\overline{A}).\qedhere
\]
\end{proof}

\subsection{Explicit Function Field Towers}

Our constructions use explicit function field towers $F_1\subseteq F_2\subseteq \cdots$ that are \emph{asymptotically good}, meaning that the ratio $N_i/g_i$ between the number $N_i$ of rational places of $F_i$ and its genus $g_i$ is bounded away from zero by a positive constant. Moreover, as $q$ varies, the towers we use satisfy $N_i/g_i = \Omega(q^c)$ for some constant $c>0$.

\paragraph{The Garcia--Stichtenoth tower.}

The Garcia--Stichtenoth tower constructed in \cite{GS96} is a well-known sequence
of function fields over finite fields with particularly favorable asymptotic properties.
An earlier asymptotically optimal tower was constructed in \cite{GS95}.
We briefly recall the definition and basic properties of the tower from \cite{GS96},
which will play a central role in our constructions. 

\begin{definition}[{Garcia--Stichtenoth tower \cite{GS96}}]
Let $q$ be a square, and write $q=\ell^2$ for some prime power $\ell$.
The Garcia--Stichtenoth tower over $\F_q$ is the sequence of function fields
\[
F_1 \subseteq F_2 \subseteq \cdots
\]
defined recursively as follows.
Let $F_1=\F_q(x_1)$ be the rational function field, and for $i\ge 1$, define
\[
F_{i+1}=F_i(x_{i+1}),
\qquad
x_{i+1}^\ell + x_{i+1} = \frac{x_i^\ell}{x_i^{\ell-1}+1}.
\]
\end{definition}

The following lemma gives estimates for the number of rational places and the genus of the fields in the Garcia--Stichtenoth tower.

\begin{lemma}[{\cite{GS96}}]\label{lem:GS}
For $i\geq 1$, let $N_i$ and $g_i$ denote the number of rational places and the genus of the $i$-th field $F_i$ in the Garcia--Stichtenoth tower over $\F_q$, where $q=\ell^2$. Then
\[
[F_i:F_1]=\ell^{i-1},
\]
\[
N_i\geq \ell^i(\ell-1)+1,
\]
and
\[
g_i =
\begin{cases}
(\ell^{i/2}-1)^2, & \text{if } i \text{ is even,}\\[4pt]
(\ell^{(i+1)/2}-1)(\ell^{(i-1)/2}-1), & \text{if } i \text{ is odd.}
\end{cases}
\]
In particular, $g_i\leq \ell^i$.
\end{lemma}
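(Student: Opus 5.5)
Everything except the final inequality is quoted from \cite{GS96}, so the plan is to recall how those facts are proved there and then derive the bound $g_i\le \ell^i$, which is the only part we actually use.

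\textbf{Degree.} Each step $F_{i+1}=F_i(x_{i+1})$ is given by an equation of Artin--Schreier type $x_{i+1}^\ell+x_{i+1}=u_i$, where $u_i=x_i^\ell/(x_i^{\ell-1}+1)$, so $[F_{i+1}:F_i]\le \ell$. To get equality, I would follow the pole $P_\infty$ of $x_1$. By induction, $P_\infty$ has a unique extension $Q_i$ to $F_i$, and $v_{Q_i}(x_i)$ is negative and coprime to $p$. Then $v_{Q_i}(u_i)=v_{Q_i}(x_i)$, and the standard Artin--Schreier criterion (\cite[Prop.~3.7.8]{Sti09}) gives that $Q_i$ is totally ramified in $F_{i+1}/F_i$. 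Hence the step has degree exactly $\ell$ and $[F_i:F_1]=\ell^{i-1}$.

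\textbf{Rational places.} Take $\alpha\in\F_q$ with $\alpha^\ell+\alpha\neq 0$. The map $t\mapsto t^\ell+t$ is the trace $\F_q\to\F_\ell$, so there are $\ell^2-\ell$ such $\alpha$. For each one, $u_1(\alpha)=\alpha^\ell/(\alpha^{\ell-1}+1)$ is again a nonzero trace value in $\F_\ell$. Therefore $T^\ell+T=u_1(\alpha)$ has $\ell$ distinct roots in $\F_q$, each again of nonzero trace. By induction, the place $x_1=\alpha$ splits completely in every $F_i$, giving $\ell^{i-1}(\ell^2-\ell)=\ell^i(\ell-1)$ rational places. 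The totally ramified place over $P_\infty$ adds one more, so $N_i\ge \ell^i(\ell-1)+1$.

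\textbf{Genus.} This is the main obstacle, and I would cite \cite{GS96} rather than redo it. Their argument locates all ramified places, which lie over $x_1=\infty$ and over the roots of $x^{\ell-1}+1$. It then computes the different exponents of the wildly ramified Artin--Schreier steps via the Hasse--Arf theorem and the transitivity of differents. Finally, the Hurwitz genus formula yields the closed forms $(\ell^{i/2}-1)^2$ for even $i$ and $(\ell^{(i+1)/2}-1)(\ell^{(i-1)/2}-1)$ for odd $i$.

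\textbf{The bound $g_i\le\ell^i$.} This follows directly from the closed forms. For even $i$, $(\ell^{i/2}-1)^2\le (\ell^{i/2})^2=\ell^i$. For odd $i$, each factor is at most $\ell^{(i+1)/2}$ and $\ell^{(i-1)/2}$ respectively, and both factors are nonnegative, so the product is at most $\ell^{(i+1)/2}\cdot\ell^{(i-1)/2}=\ell^i$.
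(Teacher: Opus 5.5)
Your proposal is correct and matches the paper, which gives no argument beyond quoting \cite{GS96} and reading $g_i\le\ell^i$ off the closed forms exactly as you do; your added sketches of $[F_i:F_1]=\ell^{i-1}$ (the pole of $x_1$ stays totally ramified because $v(u_i)=v(x_i)=-1$ at every level) and of the $\ell^{i-1}(\ell^2-\ell)$ completely split rational places are sound, with the key unstated check being $u_1(\alpha)=\alpha^{\ell+1}/(\alpha^\ell+\alpha)\in\F_\ell^\times$, a norm divided by a nonzero trace. Two slips, neither load-bearing: the standard Artin--Schreier criterion is stated for $y^p-y=u$, so for $\ell=p^a$ you should either use its generalization to additive polynomials or argue directly that $-e=\ell\,v(x_{i+1})$ forces $e=\ell$; and the ramification locus of the tower over $F_1$ also contains the zero of $x_1$ (places above it ramify further up the tower), not only $x_1=\infty$ and the roots of $x^{\ell-1}+1$.
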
 

In particular, we have
\[
\limsup_{i\to\infty}\frac{N_i}{g_i}\ge \ell-1=\sqrt{q}-1,
\]
which matches the largest possible asymptotic ratio between the number of rational places and the genus of a function field over $\F_q$, known as the Drinfeld--Vl\u{a}du\c{t} bound~\cite{VS83}.

\paragraph{The Bassa--Beelen--Garcia--Stichtenoth tower.}
The Garcia--Stichtenoth tower is defined over $\F_q$ only when $q$ is a square. When $q=p^{2m+1}$ with $m\ge 1$, we instead use the Bassa--Beelen--Garcia--Stichtenoth tower~\cite{BBGS15}. We recall below a special case of this construction.

\begin{definition}[{Bassa--Beelen--Garcia--Stichtenoth tower \cite{BBGS15}}]
Let $q=p^{2m+1}$ be a prime power with $m\geq 1$, where $p$ is a prime.
Define
\[
j=\begin{cases}
m & \text{if } p \nmid m,\\
m+1 & \text{otherwise,}
\end{cases}
\qquad
k=2m+1-j\in\{m,m+1\}.
\]
For an integer $a\geq 1$, define
\[
\Tr_a(T):=T+T^p+\cdots+T^{p^{a-1}}\in\F_q[T].
\]
The Bassa--Beelen--Garcia--Stichtenoth tower over $\F_q$ is the sequence of function fields
\[
E_1 \subseteq E_2 \subseteq \cdots
\]
defined recursively as follows.
Let $E_1=\F_q(x_1)$ be the rational function field, and for $i\ge 1$, define
\[
E_{i+1}=E_i(x_{i+1}),
\qquad
\Tr_j\!\left(\frac{x_{i+1}}{x_i^{p^k}}\right)
+
\Tr_k\!\left(\frac{x_{i+1}^{p^j}}{x_i}\right)
=1.
\]
\end{definition}

The following lemma gives estimates for the number of rational places and the genus of the fields in the Bassa--Beelen--Garcia--Stichtenoth tower.

\begin{lemma}[{\cite{BBGS15}}]\label{lem:BGGS}
Let $q=p^{2m+1}$ be a prime power with $m\geq 1$, where $p$ is a prime. For $i\ge 1$, let $N_i$ and $g_i$ denote the number of rational places and the genus of the $i$-th field $E_i$ in the Bassa--Beelen--Garcia--Stichtenoth tower over $\F_q$. Then
\[
[E_i:E_1]=p^{2m(i-1)},
\]
\[
N_i\ge [E_i:E_1]\,(q-1)=p^{2m(i-1)}(q-1),
\]
and
\[
g_i \le \frac{[E_i:E_1]}{2}\left(\frac{q-1}{p^{m}-1}+\frac{q-1}{p^{m+1}-1}\right)= \frac{p^{2m(i-1)}}{2}\left(\frac{q-1}{p^{m}-1}+\frac{q-1}{p^{m+1}-1}\right).
\]
\end{lemma}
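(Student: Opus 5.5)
This lemma is quoted from \cite{BBGS15}, and in the paper we will simply cite it. If we were to reprove it, we would follow the standard template for recursive towers and treat the three claims in turn: the degree, the number of rational places, and the genus.

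\emph{Degree.} We view the defining relation as a polynomial in $x_{i+1}$ over $E_i$. The term $\Tr_k(x_{i+1}^{p^j}/x_i)$ has degree $p^{j}p^{k-1}=p^{2m}$, since $j+k=2m+1$. The term $\Tr_j(x_{i+1}/x_i^{p^k})$ has degree only $p^{j-1}$. So the relation has degree $p^{2m}$ in $x_{i+1}$. To show that $[E_{i+1}:E_i]=p^{2m}$, we would exhibit a place of $E_i$ lying over the pole (or zero) of $x_1$ that is totally ramified in $E_{i+1}/E_i$. Concretely, we would substitute a local parameter and observe an Eisenstein-type valuation pattern: at such a place the valuation of $x_{i+1}$ is forced to be a fraction with denominator $p^{2m}$. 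Total ramification gives irreducibility, and induction gives $[E_i:E_1]=p^{2m(i-1)}$.

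\emph{Rational places.} We would show that every rational place $x_1=\alpha$ with $\alpha\in\F_q^\times$ splits completely in the whole tower. The key point is that at each level the relation specializes to an equation over $\F_q$. It has the form of an additive (linearized) polynomial in a suitable variable, set equal to $1$, and we would show it has $p^{2m}$ distinct roots in $\F_q$, each again nonzero. This propagates the condition $x_i\in\F_q^\times$ up the tower. We expect this to come from a trace identity relating $\Tr_j$, $\Tr_k$ and the absolute trace $\F_{p^{2m+1}}\to\F_p$. The identity should make the map $T\mapsto \Tr_j(T/\beta^{p^k})+\Tr_k(T^{p^j}/\beta)$ an $\F_p$-linear map $\F_q\to\F_q$ whose image contains $1$ and whose kernel has size $p^{2m}$. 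Since the fibres of this map are cosets of the kernel, the value $1$ has exactly $p^{2m}$ preimages. This gives $(q-1)[E_i:E_1]$ rational places.

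\emph{Genus.} We would show that ramification occurs only above the places $x_1=0$ and $x_1=\infty$ of $E_1$, and that the resulting extensions are wildly ramified with controlled different exponents. For this we would use Abhyankar's lemma together with the structure of Artin--Schreier-type steps built from the two trace terms, along the lines of the analysis of \cite{BBGS15}. Plugging these different exponents into the Hurwitz genus formula then gives the bound
\[
g_i\le \frac{[E_i:E_1]}{2}\left(\frac{q-1}{p^m-1}+\frac{q-1}{p^{m+1}-1}\right).
\]

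The main obstacle is the different-exponent computation over $0$ and $\infty$. The splitting count and the degree are comparatively routine; it is the exponents that produce the precise constants in the genus bound. To handle them, we would first pass to the compositum with suitable intermediate subfields in which each step is a tower of elementary abelian $p$-extensions, and then apply transitivity of the different.
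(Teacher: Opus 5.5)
Citing \cite{BBGS15} is exactly what the paper does: it imports the lemma and gives no argument of its own, so as a plan your proposal matches. Your rational-places sketch is also correct. The identity you were looking for is elementary. For $\alpha\in\F_q^\times$, put $z=y/\alpha^{p^k}$. Since $\alpha^{p^{j+k}}=\alpha^q=\alpha$, we get $y^{p^j}/\alpha=z^{p^j}$, so the relation becomes $\sum_{u=0}^{2m}z^{p^u}=1$, i.e.\ $\Tr_{\F_q/\F_p}(z)=1$. This has exactly $p^{2m}$ roots, all in $\F_q^\times$.

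The degree step of your optional reproof, however, would fail as written. Normalize $v(x_1)=-1$ at the pole $P_\infty$ of $x_1$ and compare the valuations of the terms $(x_2/x_1^{p^k})^{p^s}$, $(x_2^{p^j}/x_1)^{p^t}$ and $1$. The only consistent value is $v(x_2)=-p^{-j}$. Thus any $Q\mid P_\infty$ with index $e$ has $v_Q(x_2)=-e/p^j$, so the valuations force only $p^j\mid e$, not denominator $p^{2m}$. The residue of the unit $x_2^{p^j}/x_1$ is then a root of $\Tr_k(T)=1$; these roots lie in $\F_{p^k}$, and at most one lies in $\F_q$ because $\gcd(k,2m+1)=1$. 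At the zero of $x_1$ the Newton polygon has two segments, so at least two places lie above it.

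Concretely, take $q=8$ ($p=2$, $m=1$, $j=1$, $k=2$), where the relation is $x_2/x_1^4+x_2^2/x_1+x_2^4/x_1^2=1$. Above the pole of $x_1$ there is a single place, with $e=f=2$: here $T^2+T+1$ has no root in $\F_8$. Above the zero there are two places, with $e=1$ and $e=3$. No other place ramifies, since the derivative in $x_2$ is $x_1^{-4}$. So no place of $E_1$ is totally ramified in $E_2$, and irreducibility must come from elsewhere (a place with $ef=p^{2m}$, several places combined, or \cite{BBGS15} itself).

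The same example contradicts your description of the ramification as wild: $e=3$ is prime to $p$. More generally, for $j=m$ the second slope at the zero has denominator $(p^{m+1}-1)/(p-1)$, so there is genuinely tame ramification over $x_1=0$. That tame part is where Abhyankar's lemma enters the genus computation.
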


In particular, we have
\[
\limsup_{i\to\infty}\frac{N_i}{g_i}\ge
2\left(\frac{1}{p^{m}-1}+\frac{1}{p^{m+1}-1}\right)^{-1}.
\]

This does not match the Drinfeld--Vl\u{a}du\c{t} bound~\cite{VS83},
which in this case is $\sqrt{q}-1=p^{m+1/2}-1$, but it is polynomially related to it.

\begin{remark}
Our constructions based on function fields require the ability to compute bases of
Riemann--Roch spaces $\mathcal{L}(dP_\infty)$ associated with a distinguished rational place $P_\infty$, as well as to evaluate such functions at rational places.

For the function field towers we use, namely the Garcia--Stichtenoth and
Bassa--Beelen--Garcia--Stichtenoth towers, these operations can be carried out
in polynomial time using standard algorithms from computational algebraic function field
theory; see, e.g., \cite{He02,SAKSD02}.
\end{remark}

\section{A Function-Field Analog of the Gabizon--Raz Construction}\label{sec:basic}

In this section, we present a construction that can be viewed as a function-field analog of the construction of Gabizon and Raz~\cite{GR08}.

\paragraph{Construction.}

Let $1\le r\le k$ be integers.
Let $F$ be a finitely generated function field of transcendence degree one over a finite field $\F_q$, and let $g$ be its genus.
Let $P_\infty$ be the distinguished rational place of $F$, and let $v_{P_\infty}$ be its normalized valuation. Let $S$ be the set of rational places of $F$ other than $P_\infty$.

For each $i\in [r]$ and $j\in [k]$, define
\[
d_{ij}=i\cdot (j-1)+2g\geq 2g,
\]
and choose $f_{ij}\in \mathcal{L}(d_{ij}P_\infty)$ such that $v_{P_\infty}(f_{ij})=-d_{ij}$.
The existence of such a function follows from \cref{cor:existence-of-functions} applied to $D=d_{ij}P_\infty$, since
$\deg(d_{ij}P_\infty-P_\infty)=d_{ij}-1\ge 2g-1$.

Construct the $r\times k$ matrix $E$ over $F$ by
\[
E=(f_{ij})_{i\in [r],\, j\in [k]}.
\]

For each rational place $P\in S$, define the $r\times k$ matrix $E(P)$ over $\F_q$ by
\[
E(P)=(f_{ij}(P))_{i\in [r],\, j\in [k]}.
\]

\paragraph{Analysis.}

The following lemma establishes a nonvanishing property of certain determinants.

\begin{lemma}\label{lem:GR-deg}
Let $M\in \mathbb{F}_q^{k\times r}$ be a full-rank matrix. Then $\det(EM)\in F$ is nonzero, $v_P(\det(EM))\geq 0$ for all places $P\neq P_\infty$, and
\[
v_{P_\infty}(\det(EM))\geq -r\left(\frac{(r+1)(3k-r-2)}{6}+2g\right).
\]
\end{lemma}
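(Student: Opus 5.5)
Apply the Cauchy--Binet formula (\cref{lem:CB}) to write
\[
\det(EM)=\sum_{J\subseteq[k],\,|J|=r}\det(E_J)\det(M^J),
\]
and then analyze each term $\det(E_J)$ through its expansion $\det(E_J)=\sum_{\sigma}\mathrm{sgn}(\sigma)\prod_{i\in[r]} f_{i,j_{\sigma(i)}}$, where $J=\{j_1<\dots<j_r\}$ and $\sigma$ ranges over $\Sym(r)$.

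\textbf{Regularity away from $P_\infty$.} Every $f_{ij}$ lies in $\mathcal{L}(d_{ij}P_\infty)$, so $v_P(f_{ij})\ge 0$ for every place $P\neq P_\infty$. The entries of $M$ are constants. Hence $\det(EM)$ is a polynomial expression in elements of $\mathcal{O}_P$, and so $v_P(\det(EM))\ge 0$.

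\textbf{Nonvanishing.} Each monomial satisfies $v_{P_\infty}\bigl(\prod_i f_{i,j_{\sigma(i)}}\bigr)=-\sum_i\bigl(i(j_{\sigma(i)}-1)+2g\bigr)$. By the rearrangement inequality, since $i$ and $j_\ell$ are both increasing, the sum $\sum_i i\,j_{\sigma(i)}$ is uniquely maximized at $\sigma=\mathrm{id}$ because the $j_\ell$ are distinct. So the identity term has strictly smaller valuation than all other terms, and repeated use of \cref{lem:triangle} gives
\[
v_{P_\infty}(\det E_J)=-w(J),\qquad w(J):=\sum_{\ell=1}^r \ell(j_\ell-1)+2gr.
\]
In particular $\det(E_J)\neq0$.

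Now apply \cref{lem:unique} to $M^{T}$, which is an $r\times k$ full-rank matrix whose column $j$ is row $j$ of $M$, so $(M^T)_J=(M^J)^T$. Let $I$ be the distinguished set it produces. Then $\det(M^I)\neq 0$, and for every other $J$ with $\det(M^J)\neq0$ we have $i_\ell\ge j_\ell$ for all $\ell$, with strict inequality for some $\ell$ when $J\neq I$. Consequently $w(I)>w(J)$ for all such $J\neq I$, which is the analog of the weight statement in \cref{lem:unique} with the inequality reversed; it follows directly from part 1 of that lemma with weights $\ell$. Since the nonzero scalars $\det(M^J)$ have valuation $0$, the $I$-term has strictly the smallest valuation among all nonzero terms. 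By \cref{lem:triangle} again, $v_{P_\infty}(\det(EM))=-w(I)$, which is finite, so $\det(EM)\neq0$.

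\textbf{Bound.} It remains to show $w(I)\le r\bigl(\frac{(r+1)(3k-r-2)}{6}+2g\bigr)$. The quantity $w(I)$ is maximized by taking $j_\ell=k-r+\ell$, which gives
\[
\sum_{\ell}\ell(k-r+\ell-1)=(k-r-1)\frac{r(r+1)}{2}+\frac{r(r+1)(2r+1)}{6}=\frac{r(r+1)(3k-r-2)}{6}.
\]
Adding $2gr$ then yields exactly the claimed bound.

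The main point of care is the strict-domination argument. One must combine the uniqueness of the optimal permutation inside each $\det(E_J)$ with the uniqueness of the optimal $J$ from \cref{lem:unique}, and check that the partial-order statement $i_\ell\ge j_\ell$ (with some strict inequality) indeed yields strict inequality of the weighted sums $\sum_\ell \ell\, j_\ell$. Everything else is routine bookkeeping.
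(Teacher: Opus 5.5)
Your proof is correct and follows the paper's argument essentially step for step: Cauchy--Binet, a unique dominant permutation in each $\det(E_J)$, a unique dominant subset $I$ via \cref{lem:unique}, and the extremal bound $j_\ell\le k-r+\ell$. The differences are cosmetic. You use the rearrangement inequality instead of the paper's inversion-swapping computation, and you make explicit that the unique optimal $J$ comes from part~1 of \cref{lem:unique} applied to $M^T$; part~2 does not apply directly, because the weight $\ell$ attached to $j_\ell$ depends on its position in $J$, and the paper leaves this point implicit.
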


\begin{proof}
For any place $P\neq P_\infty$, we have $v_P(\det(EM))\geq 0$ since $v_P(a)\geq 0$ whenever $a$ is an entry of $E$ or an entry of $M$.

By the Cauchy--Binet formula (\cref{lem:CB}),
\[
\det(EM)=\sum_{I\subseteq [k],\, |I|=r}\det(E_I)\det(M^I).
\]

Fix $I=\{j_1<\cdots<j_r\}\subseteq [k]$. Expanding $\det(E_I)$, we write
\[
\det(E_I)=\sum_{\sigma\in \mathrm{Sym}(r)} T_\sigma,
\quad
T_\sigma:=\mathrm{sgn}(\sigma)\prod_{i=1}^r f_{i j_{\sigma(i)}}.
\]

We claim that the identity permutation uniquely minimizes $v_{P_\infty}(T_\sigma)$. Indeed,
\[
v_{P_\infty}(T_\sigma) = \sum_{i=1}^r v_{P_\infty}(f_{ij_{\sigma(i)}}) = -\sum_{i=1}^r d_{ij_{\sigma(i)}}
= -\sum_{i=1}^r \bigl(i\cdot (j_{\sigma(i)}-1)+2g\bigr).
\]
If $\sigma$ has an inversion $(i,i')$, i.e., $i<i'$ but $\sigma(i)>\sigma(i')$, then
$j_{\sigma(i)}>j_{\sigma(i')}$.
Let $\sigma'$ be obtained from $\sigma$ by swapping $\sigma(i)$ and $\sigma(i')$.
Then
\[
v_{P_\infty}(T_{\sigma'}) - v_{P_\infty}(T_\sigma)
= -\bigl(i\,j_{\sigma(i')} + i'j_{\sigma(i)}\bigr)
+ \bigl(i\,j_{\sigma(i)} + i'j_{\sigma(i')}\bigr)
= (i'-i)\bigl(j_{\sigma(i')}-j_{\sigma(i)}\bigr) < 0,
\]
since $i'<i$ is false and $j_{\sigma(i')}-j_{\sigma(i)}<0$.
Thus swapping strictly decreases the valuation. Hence the identity is the unique minimizer.

By \cref{lem:triangle}, it follows that
\[
v_{P_\infty}(\det(E_I))
= -\sum_{i=1}^r \bigl(i\cdot (j_i-1)+2g\bigr)
= -\sum_{i=1}^r i\,j_i - \sum_{i=1}^r (2g-i).
\]

Since $\det(M^I)\in \F_q$, we have $v_{P_\infty}(\det(M^I))\in \{0,\infty\}$, and at least one $I$ satisfies $\det(M^I)\neq 0$ because $M$ has full rank.

Thus the minimum of $v_{P_\infty}(\det(E_I)\det(M^I))$ is attained uniquely by the subset $I^*$ minimizing $-\sum_{i=1}^r i\,j_i$ subject to $\det(M^I)\neq 0$. By \cref{lem:unique}, such $I^*$ is unique.

Let $I^*=\{j_1^*<\cdots<j_r^*\}$. Since the minimum valuation in the above expansion is attained uniquely at $I^*$, \cref{lem:triangle} implies that
\[
v_{P_\infty}(\det(EM))
= -\sum_{i=1}^r i\,j_i^* - \sum_{i=1}^r (2g-i)
< \infty.
\]
In particular, $\det(EM)\neq 0$.

Moreover,
\[
v_{P_\infty}(\det(EM))
\ge -\sum_{i=1}^r i(k-r+i) - \sum_{i=1}^r (2g-i),
\]
which simplifies to
\[
v_{P_\infty}(\det(EM))
\ge -r\left(\frac{(r+1)(3k-r-2)}{6}+2g\right).\qedhere
\]
\end{proof}

Now we prove the main result of this section.

\begin{theorem}\label{thm:bound-bad-ones}
Let $M\in \mathbb{F}_q^{k\times r}$ be a full-rank matrix.
Then 
\[
\sum_{P\in S} (r-\rank(E(P)M))\leq r\left(\frac{(r+1)(3k-r-2)}{6}+2g\right).
\]
\end{theorem}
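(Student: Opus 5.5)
We will apply the function-field polynomial method to the single function $h:=\det(EM)\in F$. By \cref{lem:GR-deg}, $h\neq 0$, $v_P(h)\ge 0$ for every place $P\neq P_\infty$, and $v_{P_\infty}(h)\ge -D$, where we abbreviate
\[
D:=r\left(\frac{(r+1)(3k-r-2)}{6}+2g\right).
\]
The principal divisor $(h)$ has degree zero. Hence the total (degree-weighted) order of zeros of $h$ away from $P_\infty$ equals $-v_{P_\infty}(h)\le D$. Restricting to the rational places in $S$, each of degree one, and using that all valuations there are nonnegative, we get
\[
\sum_{P\in S} v_P(h)\le \sum_{P\neq P_\infty} v_P(h)\deg(P) = -v_{P_\infty}(h)\le D.
\]

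Next we relate $v_P(h)$ to the rank defect at each $P\in S$. The entries $f_{ij}$ lie in $\mathcal{L}(d_{ij}P_\infty)$, so they belong to $\mathcal{O}_P$ for every $P\neq P_\infty$. The entries of $M$ lie in $\F_q\subseteq\mathcal{O}_P$. Therefore $A:=EM\in\mathcal{O}_P^{r\times r}$. Reduction modulo $\mathfrak m_P$ is a ring homomorphism that fixes $\F_q$, and for rational $P$ the residue field is $\kappa_P=\F_q$. It follows that $\overline{A}=E(P)M$. \cref{lem:val-from-rank} then gives
\[
v_P(h)\ge r-\rank(E(P)M).
\]

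Summing this inequality over $P\in S$ and combining with the previous bound yields $\sum_{P\in S}(r-\rank(E(P)M))\le D$, which is the claim.

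There is no serious obstacle, since the heavy lifting is already done in \cref{lem:GR-deg}. The one point that needs care is the degree-zero step: we must use that $h$ has no poles outside $P_\infty$, so that discarding non-rational places and places outside $S$ can only decrease the zero count, never increase it.
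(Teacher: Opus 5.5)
Your proof is correct and follows the paper's argument: both combine \cref{lem:GR-deg} with \cref{lem:val-from-rank} at each $P\in S$. The only cosmetic difference is in the final counting step. The paper writes $\det(EM)\in\mathcal{L}\bigl(dP_\infty-\sum_{P\in S}(r-\rank(E(P)M))P\bigr)$ and invokes \cref{lem:zero-section}, whereas you unpack that lemma directly via $\deg((h))=0$.
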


\begin{proof}
Let
\[
d=r\left(\frac{(r+1)(3k-r-2)}{6}+2g\right).
\]
By \cref{lem:GR-deg}, we have $\det(EM)\neq 0$ and $\det(EM)\in \mathcal{L}(d P_\infty)$.

Consider $P\in S$. As the entries of $E$ are in $\mathcal{O}_P$, we have $E M\in\mathcal{O}_P^{r\times r}$. 
By \cref{lem:val-from-rank}, we have $v(\det(EM))\geq r-\rank(E(P)M)$. Therefore, taking all $P\in S$ into account, we have
\[
\det(EM)\in \mathcal{L}(D) \quad\text{where}\quad D:=dP_\infty-\sum_{P\in S} (r-\rank(E(P)M)) P.
\]
Since $\det(EM)\neq 0$, \cref{lem:zero-section} implies that $\deg(D)\geq 0$, i.e., $\sum_{P\in S} (r-\rank(E(P)M))\le d$.
\end{proof}

One can instantiate the function field and the parameters to obtain explicit lossless rank extractors over fields of size at least $\poly(r)$. We defer this to the next section, where we present an alternative construction that improves the $O(r^2 k)$ term in \cref{thm:bound-bad-ones} to $O(rk)$.

\section{A Function-Field Analog of the Forbes--Shpilka Construction}\label{sec:FS}

We now present an alternative construction, which can be viewed as a function-field analog of the construction of Forbes and Shpilka~\cite{FS12}.

\paragraph{Construction.}

Let $1\le r\le k$ be integers.
Let $F$ be a finitely generated function field of transcendence degree one over a finite field $\F_q$, and let $g$ be its genus.
Let $P_\infty$ be the distinguished rational place of $F$, and let $v_{P_\infty}$ be its normalized valuation. Let $S$ be the set of rational places of $F$ other than $P_\infty$. 

Let $h=\left\lceil\frac{k}{q-1}\right\rceil$.
Let $d_1=k-1+g$ and $d_2=h-1+g$. By Riemann's inequality (\cref{cor:riemann}), we have $\ell(d_1P_\infty)\geq k$ and $\ell(d_2P_\infty)\geq h$.

By \cref{cor:interval}, there exist $f_1,\dots,f_k\in \mathcal{L}(d_1P_\infty)$ and $g_1,\dots,g_h\in \mathcal{L}(d_2 P_\infty)$ such that
\begin{equation}\label{eq:dec-ord-1}
0\geq v_{P_\infty}(f_1) > \cdots > v_{P_\infty}(f_k) \geq -d_1
\end{equation}
and
\begin{equation}\label{eq:dec-ord-2}
0\geq v_{P_\infty}(g_1) > \cdots > v_{P_\infty}(g_h) \geq -d_2.
\end{equation}

For $j\in [k]$, define $\alpha(j)=\lfloor\frac{j-1}{q-1}\rfloor+1\in\{1,\dots,h\}$ and $\beta(j)=(j-1)\bmod (q-1)\in \{0,\dots,q-2\}$.
Let $\gamma$ be a generator of $\F_q^\times$.

Construct the $r\times k$ matrix $E$ over $F$ by
\[
E_{ij}=(\gamma^{\beta(j)} g_{\alpha(j)})^{i-1}  f_j.
\]

For each $P\in S$, define
\[
E(P)=(E_{ij}(P))_{i\in [r], j\in [k]}.
\]

\begin{remark}
Instead of using \cref{cor:interval}, which follows from Riemann's inequality (\cref{cor:riemann}), one can use \cref{cor:existence-of-functions}, as in \cref{sec:basic}, which relies on the full Riemann--Roch theorem rather than its weaker consequence. This yields functions $f_i$ and $g_i$ with prescribed valuations $v_{P_\infty}(f_i)$ and $v_{P_\infty}(g_i)$, rather than only the bounds in \eqref{eq:dec-ord-1} and \eqref{eq:dec-ord-2}. This increases the $g$ term in the analysis to $2g$, but does not affect the asymptotics up to constant factors.
\end{remark}

\paragraph{Analysis.}

Next, we prove the following analog of \cref{lem:GR-deg} concerning the nonvanishing of determinants.

\begin{lemma}\label{lem:FS-deg}
Let $M\in \mathbb{F}_q^{k\times r}$ be full rank. Then $\det(EM)\in F$ is nonzero, $v_P(\det(EM))\geq 0$ for all places $P\neq P_\infty$, and 
\[
v_{P_\infty}(\det(EM))\geq -r\left(
k-r+\frac{r-1}{2}\left\lceil \frac{k}{q-1}\right\rceil+\frac{r+1}{2}g 
\right).
\]
\end{lemma}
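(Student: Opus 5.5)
The plan is to mimic the proof of \cref{lem:GR-deg}, with one change: instead of a permutation expansion we factor each maximal minor $\det(E_I)$ as a generalized Vandermonde determinant. Nonnegativity of $v_P(\det(EM))$ for $P\neq P_\infty$ is immediate, since all $f_j,g_\alpha\in\mathcal{O}_P$ and $M$ has entries in $\F_q$.

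Write $y_j:=\gamma^{\beta(j)}g_{\alpha(j)}$, so the $j$-th column of $E$ is $f_j\cdot(1,y_j,\dots,y_j^{r-1})^T$. For $I=\{i_1<\cdots<i_r\}$ this gives
\[
\det(E_I)=\prod_{\ell=1}^r f_{i_\ell}\prod_{1\le a<b\le r}\bigl(y_{i_b}-y_{i_a}\bigr).
\]
We then compute the valuation of each factor $y_{i_b}-y_{i_a}$ with $a<b$, in two cases.
\begin{itemize}
\item If $\alpha(i_a)=\alpha(i_b)$, then $\beta(i_a)\ne\beta(i_b)$. The difference equals $(\gamma^{\beta(i_b)}-\gamma^{\beta(i_a)})g_{\alpha(i_b)}$, a nonzero constant times $g_{\alpha(i_b)}$, because $\gamma$ is a generator of $\F_q^\times$.
\item If $\alpha(i_a)<\alpha(i_b)$, then by \eqref{eq:dec-ord-2} we have $v_{P_\infty}(y_{i_b})<v_{P_\infty}(y_{i_a})$, so \cref{lem:triangle} gives valuation $v_{P_\infty}(g_{\alpha(i_b)})$.
\end{itemize}
Since $\alpha$ is nondecreasing, these are the only cases, and in both the valuation is $v_{P_\infty}(g_{\alpha(i_b)})$. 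Grouping the pairs by $b$, we obtain the exact formula
\[
v_{P_\infty}(\det(E_I))=\sum_{\ell=1}^r\Bigl(v_{P_\infty}(f_{i_\ell})+(\ell-1)\,v_{P_\infty}(g_{\alpha(i_\ell)})\Bigr).
\]
In particular $\det(E_I)\ne 0$ for every $I$.

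Next we show that a unique term of the Cauchy--Binet expansion (\cref{lem:CB}) attains the minimum valuation. Let $I^*$ be the set produced by \cref{lem:unique}, applied to $M^T$, so that the relevant maximal minors are the $\det(M^J)$. For any $J=\{j_1<\cdots<j_r\}$ with $\det(M^J)\ne0$, part 1 of that lemma gives $i^*_\ell\ge j_\ell$ for all $\ell$. Since the valuations of the $f_j$ strictly decrease in $j$, and the valuations of the $g_{\alpha(j)}$ weakly decrease in $j$ (as $\alpha$ is monotone), each summand in the formula for $I^*$ is at most the corresponding summand for $J$. Moreover, the $f$-part is strictly smaller unless $J=I^*$. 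Hence the term $I^*$ is the unique minimizer. Because $\det(M^I)\in\F_q$ has valuation $0$ or $\infty$, \cref{lem:triangle} yields $v_{P_\infty}(\det(EM))=v_{P_\infty}(\det(E_{I^*}))<\infty$, so $\det(EM)\neq 0$.

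Finally we bound this valuation. In \eqref{eq:dec-ord-1} the $k$ valuations are distinct integers in $[-d_1,0]$, so $v_{P_\infty}(f_j)\ge -(d_1-(k-j))=-(j-1+g)$. Likewise $v_{P_\infty}(g_\alpha)\ge -d_2=-(h-1+g)$. Using $i^*_\ell\le k-r+\ell$, the sum is at least
\[
-\sum_{\ell=1}^r(k-r+\ell-1+g)-\frac{r(r-1)}{2}(h-1+g),
\]
which simplifies to $-r\bigl(k-r+\tfrac{r-1}{2}h+\tfrac{r+1}{2}g\bigr)$, the claimed bound. The main obstacle is the valuation of the Vandermonde factors, in particular the equal-$\alpha$ case, where one needs the $\gamma^{\beta}$ to be distinct; the rest is bookkeeping.
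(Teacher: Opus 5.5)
Your proposal is correct and follows essentially the same route as the paper: a Cauchy--Binet expansion, factoring each $\det(E_I)$ as $\prod_\ell f_{i_\ell}$ times a Vandermonde whose factors all have valuation $v_{P_\infty}(g_{\alpha(i_b)})$ (via the same two cases), unique minimality of the term at the greedy set $I^*$, and the same final bookkeeping. The only differences are cosmetic. You get strict minimality of the $f$-part directly from the pointwise comparison $i^*_\ell\ge j_\ell$ (part 1 of the lemma), whereas the paper invokes part 2. You also apply the greedy procedure to $M^T$, i.e.\ to the rows of $M$, which is slightly more careful than the paper's phrasing.
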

\begin{proof}
For any place $P\neq P_\infty$, we have $v_P(\det(EM))\geq 0$ since $v_P(a)\geq 0$ whenever $a$ is an entry of $E$ or an entry of $M$.

By the Cauchy--Binet formula,
\[
\det(EM)=\sum_{I\subseteq [k],\, |I|=r}\det(E_I)\det(M^I).
\]

Fix $I=\{j_1<\cdots<j_r\}\subseteq [k]$. The matrix $E_I$ has the form
\[
E_I=\bigl((\gamma^{\beta(j_t)} g_{\alpha(j_t)})^{i-1} f_{j_t}\bigr)_{i,t\in [r]}.
\]
Factoring out $f_{j_t}$ from each column and using multilinearity of the determinant, we obtain
\[
\det(E_I)=\left(\prod_{u=1}^r f_{j_u}\right)
\det\bigl((\gamma^{\beta(j_t)} g_{\alpha(j_t)})^{i-1}\bigr)_{i,t\in [r]}.
\]
The remaining determinant is a Vandermonde determinant, and hence
\begin{equation}\label{eq:expansion-EI}
\det(E_I)=\left(\prod_{u=1}^r f_{j_u}\right)
\prod_{1\le t_1<t_2\le r}
\left(\gamma^{\beta(j_{t_2})} g_{\alpha(j_{t_2})}-\gamma^{\beta(j_{t_1})} g_{\alpha(j_{t_1})}\right).
\end{equation}

\medskip
We now analyze the valuation of each difference term. For $t_1<t_2$, consider
\[
\gamma^{\beta(j_{t_2})} g_{\alpha(j_{t_2})}
-
\gamma^{\beta(j_{t_1})} g_{\alpha(j_{t_1})}.
\]
There are two cases.

\smallskip
\noindent\textbf{Case 1:} $\alpha(j_{t_1})=\alpha(j_{t_2})$.
Then $\beta(j_{t_1})<\beta(j_{t_2})$, and
\[
v_{P_\infty}\!\left(
\gamma^{\beta(j_{t_2})} g_{\alpha(j_{t_2})}
-
\gamma^{\beta(j_{t_1})} g_{\alpha(j_{t_1})}
\right)
=
v_{P_\infty}\!\left(
(\gamma^{\beta(j_{t_2})}-\gamma^{\beta(j_{t_1})}) g_{\alpha(j_{t_2})}
\right)
=
v_{P_\infty}(g_{\alpha(j_{t_2})}).
\]

\smallskip
\noindent\textbf{Case 2:} $\alpha(j_{t_1})<\alpha(j_{t_2})$.
Then by \eqref{eq:dec-ord-2},
\[
v_{P_\infty}(g_{\alpha(j_{t_2})})<v_{P_\infty}(g_{\alpha(j_{t_1})}),
\]
and hence, by \cref{lem:triangle},  $v_{P_\infty}\!\left(
\gamma^{\beta(j_{t_2})} g_{\alpha(j_{t_2})}
-
\gamma^{\beta(j_{t_1})} g_{\alpha(j_{t_1})}
\right)
=
v_{P_\infty}(g_{\alpha(j_{t_2})})$.

\smallskip
Thus in both cases, 
\begin{equation}\label{eq:diff}
v_{P_\infty}\!\left(
\gamma^{\beta(j_{t_2})} g_{\alpha(j_{t_2})}
-
\gamma^{\beta(j_{t_1})} g_{\alpha(j_{t_1})}
\right)
=
v_{P_\infty}(g_{\alpha(j_{t_2})}).
\end{equation}

\medskip
For each $I=\{j_1<\cdots<j_r\}$, define
\[
T_I=\det(E_I)\det(M^I).
\]
From \eqref{eq:expansion-EI} and the definition of $T_I$, we obtain
\[
v_{P_\infty}(T_I)
=
\sum_{u=1}^r v_{P_\infty}(f_{j_u})
+
\sum_{1\le t_1<t_2\le r}
v_{P_\infty}\!\left(
\gamma^{\beta(j_{t_2})} g_{\alpha(j_{t_2})}
-
\gamma^{\beta(j_{t_1})} g_{\alpha(j_{t_1})}
\right)
+
v_{P_\infty}(\det(M^I)).
\]
By \eqref{eq:diff}, each term in the second sum equals
$v_{P_\infty}(g_{\alpha(j_{t_2})})$, and hence
\begin{equation}\label{eq:expansion-TI}
\begin{aligned}
v_{P_\infty}(T_I)
&=
\sum_{u=1}^r v_{P_\infty}(f_{j_u})
+
\sum_{t=1}^r (t-1)v_{P_\infty}(g_{\alpha(j_t)})
+
v_{P_\infty}(\det(M^I)).\\
&=\begin{cases}
\sum_{u=1}^r v_{P_\infty}(f_{j_u})
+
\sum_{t=1}^r (t-1)v_{P_\infty}(g_{\alpha(j_t)}) & \text{if } \det(M^I)\neq 0,\\
\infty & \text{otherwise.}
\end{cases}
\end{aligned}
\end{equation}

Let $I^*\subseteq [k]$ be the subset produced by the following greedy procedure:
initialize $I^*=\emptyset$, and for $i=k,k-1,\dots,1$, add $i$ to $I^*$ if the $i$-th column of $M$ does not lie in the $\F$-linear span of the columns indexed by the current set $I^*$.
By \cref{lem:unique}, we have $|I^*|=r$ and $\det(M^{I^*})\neq 0$.

First, we show that $I^*$ uniquely minimizes
the first sum $\sum_{u=1}^r v_{P_\infty}(f_{j_u})$ in \eqref{eq:expansion-TI}
among all subsets $I\subseteq [k]$ of size $r$ satisfying $\det(M^I)\neq 0$.
Indeed, by \eqref{eq:dec-ord-1}, the sequence
\[
v_{P_\infty}(f_1) > v_{P_\infty}(f_2) > \cdots > v_{P_\infty}(f_k)
\]
is strictly decreasing. Applying the second item of \cref{lem:unique} with weights $w_i = v_{P_\infty}(f_i)$, we conclude that
\[
\sum_{i\in I^*} v_{P_\infty}(f_i)
\le \sum_{i\in J} v_{P_\infty}(f_i)
\]
for every subset $J\subseteq [k]$ of size $r$ such that $\det(M^J)\neq 0$, with equality if and only if $J=I^*$. This proves the claim.

Next, we show that $I^*$ also minimizes the second sum $\sum_{t=1}^r (t-1)v_{P_\infty}(g_{\alpha(j_t)})$
among all subsets $I=\{j_1<\cdots<j_r\}$ satisfying $\det(M^I)\neq 0$.

Let $I=\{j_1<\cdots<j_r\}$ be any such subset, and write $I^*=\{j_1^*<\cdots<j_r^*\}$.
By the first item of \cref{lem:unique}, we have $j_t^*\ge j_t$ for all $t\in [r]$.
Since $\alpha(\cdot)$ is nondecreasing and the sequence
$v_{P_\infty}(g_1) > \cdots > v_{P_\infty}(g_h)$
is strictly decreasing by \eqref{eq:dec-ord-2}, it follows that
\[
v_{P_\infty}(g_{\alpha(j_t^*)})
\le
v_{P_\infty}(g_{\alpha(j_t)})
\qquad\text{for all } t\in [r].
\]
Multiplying by $(t-1)\ge 0$ and summing over $t$, we obtain
\[
\sum_{t=1}^r (t-1)v_{P_\infty}(g_{\alpha(j_t^*)})
\le
\sum_{t=1}^r (t-1)v_{P_\infty}(g_{\alpha(j_t)}),
\]
as claimed.

Combining the two parts above, we conclude that $I^*$ uniquely minimizes $v_{P_\infty}(T_I)$ among all subsets $I\subseteq [k]$ of size $r$.

Returning to the Cauchy--Binet expansion
\[
\det(EM)=\sum_{I\subseteq [k],\, |I|=r} T_I,
\]
we have $\det(M^{I^*})\neq 0$, and hence \eqref{eq:expansion-TI} implies that
$v_{P_\infty}(T_{I^*}) < \infty$.
Since $T_{I^*}$ is the unique term attaining the minimum valuation, by \cref{lem:triangle},
\[
v_{P_\infty}(\det(EM)) = v_{P_\infty}(T_{I^*}) < \infty,
\]
and thus $\det(EM)\neq 0$.

Moreover, writing $I^*=\{j_1^*<\cdots<j_r^*\}$ and using \eqref{eq:expansion-TI}, we have
\begin{align*}
v_{P_\infty}(\det(EM))
&=
\sum_{u=1}^r v_{P_\infty}(f_{j_u^*})
+
\sum_{t=1}^r (t-1)v_{P_\infty}(g_{\alpha(j_t^*)})\\
&\stackrel{\eqref{eq:dec-ord-1},\eqref{eq:dec-ord-2}}{\ge}
-\sum_{u=1}^r(d_1-r+u)-\sum_{t=1}^r (t-1)d_2\\
&=-\sum_{u=1}^r(k-1+g-r+u)-\sum_{t=1}^r (t-1)\left(\left\lceil\frac{k}{q-1}\right\rceil-1+g\right),
\end{align*}
which simplifies to
\[
v_{P_\infty}(\det(EM))\geq -r\left(
k-r+\frac{r-1}{2}\left\lceil \frac{k}{q-1}\right\rceil+\frac{r+1}{2}g 
\right).\qedhere
\]
\end{proof}

The proof of the following theorem is identical to that of \cref{thm:bound-bad-ones}, except that we use \cref{lem:FS-deg} in place of \cref{lem:GR-deg}, improving the $O(r^2 k)$ term in \cref{thm:bound-bad-ones} to $O(rk)$.

\begin{theorem}\label{thm:bound-bad-ones-2}
Let $M\in \mathbb{F}_q^{k\times r}$ be a full-rank matrix. 
Then 
\[
\sum_{P\in S} (r-\rank(E(P)M))\leq r\left(
k-r+\frac{r-1}{2}\left\lceil \frac{k}{q-1}\right\rceil+\frac{r+1}{2}g
\right).
\]
\end{theorem}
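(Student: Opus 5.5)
The plan is to mirror the proof of \cref{thm:bound-bad-ones}, with \cref{lem:FS-deg} taking the place of \cref{lem:GR-deg}. Define
\[
d=r\left(k-r+\frac{r-1}{2}\left\lceil \frac{k}{q-1}\right\rceil+\frac{r+1}{2}g\right).
\]
\cref{lem:FS-deg} tells us three things: $\det(EM)\neq 0$, $v_P(\det(EM))\ge 0$ for every place $P\neq P_\infty$, and $v_{P_\infty}(\det(EM))\ge -d$. Together these give $\det(EM)\in\mathcal{L}(dP_\infty)$.

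The first point to check is that $d$ is an integer, so that $dP_\infty$ is a genuine divisor. If it is not, we simply replace $d$ by $\lfloor d\rfloor$. This is harmless, since valuations are integers and so $v_{P_\infty}(\det(EM))\ge -\lfloor d\rfloor$ still holds; in the end it only makes the bound better.

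Next, fix $P\in S$. Each $E_{ij}$ is a product of the functions $f_j$ and $g_{\alpha(j)}$ and constants, and all of these have poles only at $P_\infty$. Hence the entries of $E$ lie in $\mathcal{O}_P$, and since $M$ has entries in $\F_q$, we get $EM\in\mathcal{O}_P^{r\times r}$. Reducing modulo $\mathfrak{m}_P$ gives $\overline{EM}=E(P)M$, because $P$ is rational and reduction is a ring homomorphism. \cref{lem:val-from-rank} then yields
\[
v_P(\det(EM))\ge r-\rank(E(P)M).
\]

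Combining these over all $P\in S$, we obtain $\det(EM)\in\mathcal{L}(D)$ with
\[
D=dP_\infty-\sum_{P\in S}(r-\rank(E(P)M))P.
\]
The sum is finite because $S$ is finite. Since $\det(EM)\neq 0$, \cref{lem:zero-section} forces $\deg D\ge 0$. All the places in $D$ are rational, so this says exactly that $\sum_{P\in S}(r-\rank(E(P)M))\le d$, which is the claimed bound.

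I don't expect any real obstacle here. All the substantive work, namely the unique-minimizer valuation argument, is already done in \cref{lem:FS-deg}. The only detail that needs care is the integrality of $d$, and the floor handles it.
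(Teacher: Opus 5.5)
Your proposal is correct and follows the paper's proof exactly: the paper proves this theorem by rerunning the argument of \cref{thm:bound-bad-ones} with \cref{lem:FS-deg} in place of \cref{lem:GR-deg}, just as you do. Your floor step is harmless but unnecessary. Since $r(r-1)/2$ and $r(r+1)/2$ are integers, $d=r(k-r)+\frac{r(r-1)}{2}\lceil k/(q-1)\rceil+\frac{r(r+1)}{2}g$ is already an integer.
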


\begin{remark}\label{rem:fs}
If we take $F=\F_q(X)$ with $g=0$ and $q>k$, then the above construction reduces to (a generalization of) that of Forbes and Shpilka~\cite{FS12}. In this case, the bound in \cref{thm:bound-bad-ones-2} becomes
\[
r\left(k-\frac{r+1}{2}\right)=rk-\binom{r+1}{2},
\]
recovering the bound in \cite{FS12}, except that \cite{FS12} established only the weak lossless rank extractor property.

For the construction of Forbes and Shpilka \cite{FS12}, this bound was later improved to $r(k-r)$ in~\cite{FSS14,For14}, using additional properties of the monomials $X^{i-1}$ for $i\in [k]$. It is unclear whether a similar improvement can be obtained for general function fields, or for those defined over small fields that we consider here. In any case, such an improvement would be minor in our setting where $r\ll k$, and would be dominated by the error terms.
\end{remark}

\paragraph{Instantiating the function field and parameters.}

We now choose $F$ to be a function field from either the Garcia--Stichtenoth tower or the Bassa--Beelen--Garcia--Stichtenoth tower, in order to obtain explicit rank extractors over non-prime fields. This construction also yields explicit subspace designs; we defer the corresponding discussion to \cref{sec:combine}.

\begin{theorem}[Formal version of \cref{thm:rankboundinformal}]\label{thm:fullrankbound}
Let $1\le r<k$ be integers. 
Suppose $q$ is a prime power that is not a prime and satisfies $q\ge (2r)^{c_0}$ for a sufficiently large absolute constant $c_0$. Then there exists an explicit $(k,r,L)$ strong lossless rank extractor over $\F_q$ of size $n$ consisting of full-rank matrices only, where $L=O(r(k-r)q)$ and $L/n\le q^{-1/4}$.
Moreover, for each fixed $r$ and $q$, we have $L\le c_1 r(k-r)$ for some absolute constant $c_1>0$ and infinitely many $k$.
\end{theorem}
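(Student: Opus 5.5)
We instantiate the construction of \cref{sec:FS} with a suitable field $F$ from an explicit tower. We then take $\mathcal{E}_0=\{E(P):P\in S'\}$ for a subset $S'\subseteq S$ of size $n$. By \cref{thm:bound-bad-ones-2}, $\mathcal{E}_0$ is a $(k,r,L)$ strong lossless rank extractor with
\[
L=r\Bigl(k-r+\tfrac{r-1}{2}\bigl\lceil\tfrac{k}{q-1}\bigr\rceil+\tfrac{r+1}{2}g\Bigr).
\]
Finally, we apply \cref{lem:reduced} to discard rank-deficient matrices, losing at most $L$ elements. The task is therefore to choose the tower level so that $g$ is $O((k-r)q^{1-c})$-ish while $N(F)-1\ge q^{1/4}L$. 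Explicitness follows from the remark on computing Riemann--Roch bases and evaluations.

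\textbf{Choice of tower.} If $q=\ell^2$ is a square, we use the Garcia--Stichtenoth tower. By \cref{lem:GS}, $g_i\le \ell^i$ and $N_i\ge \ell^i(\ell-1)$, so $N_i/g_i\ge \sqrt q-1$ and consecutive genera grow by a factor $\ell\le\sqrt q$. If $q=p^{2m+1}$, we use the Bassa--Beelen--Garcia--Stichtenoth tower. By \cref{lem:BGGS}, $N_i/g_i\ge c\,p^m\ge q^{1/3}$ (up to constants), and consecutive levels grow by the factor $p^{2m}\le q$. In either case we have a ratio $N_i/g_i\ge q^{c}$ with $c\ge 1/3$, and a growth factor at most $q$ between levels.

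\textbf{Choosing the level and bounding $L$.} Pick the smallest $i$ such that $N_i-1\ge q^{1/4}\cdot r\bigl(k-r+\tfrac{r-1}{2}\lceil k/(q-1)\rceil+\tfrac{r+1}{2}g_i\bigr)$.
\begin{itemize}
\item The $g_i$-term is harmless. Its contribution to the right side is $q^{1/4}r(r+1)g_i/2\le N_i/2$, because $N_i/g_i\ge q^{1/3}/O(1)\ge q^{1/4}r^2$. This is where $q\ge(2r)^{c_0}$ is needed.
\item Hence it suffices that $N_i\gtrsim q^{1/4}r(k-r)$. Here we use $\lceil k/(q-1)\rceil(r-1)/2=O(k-r)$ when $q>2r$ and $k>r$; note that $k/(q-1)\le k-r$ unless $k$ is close to $r$, and we handle that edge separately by $k-r\ge1$.
\item By minimality of $i$, we get $N_{i-1}$ below threshold and $g_i\le q\, g_{i-1}\lesssim q\,N_{i-1}/q^{1/3}$. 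Thus $r^2g_i=O(r(k-r)q)$, giving $L=O(r(k-r)q)$.
\end{itemize}

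\textbf{Size and the ratio $L/n$.} Set $n=|S|=N_i-1$, or take a subset of $S$ exactly at threshold so that $L/n\le q^{-1/4}$. After applying \cref{lem:reduced}, the size becomes $n-L$; adjusting the threshold constant (e.g.\ requiring $n\ge 2q^{1/4}L$) preserves $L/(n-L)\le q^{-1/4}$.

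\textbf{The ``infinitely many $k$'' clause.} For fixed $r,q$ and each level $i$, choose $k$ with $r(k-r)\approx N_i/(q^{1/4}C)$. Then the chosen level has $r^2 g_i\le r^2N_i/q^{1/3}=O(r(k-r))$, since $q^{1/4}/q^{1/3}$ absorbs the constants, so $L\le c_1r(k-r)$.

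\textbf{Main obstacle.} The delicate part is the bookkeeping of exponents. We must verify that $N/g\ge q^{1/3}/O(1)$ for the BBGS tower beats $q^{1/4}r^2$, and we must check the small-$k$ edge where $\lceil k/(q-1)\rceil$ is not $O(k-r)$.
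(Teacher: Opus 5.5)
Your route is sound, but it is organized differently from the paper's. The paper first reduces to $r\le k/2$ by duality (\cref{lem:rc2design}), so that $k=O(k-r)$. It then fixes the tower level by the \emph{genus}: it takes the smallest $i$ with $\ell^i\ge k/r$ (resp.\ with the BBGS genus bound at least $k/r$). Comparing consecutive genus bounds gives $g\le (k/r)q$, hence $L=O(r^2\ell^i)$. The estimate $L/|S|=O(r^2q^{-1/2})$ (resp.\ $O(r^2q^{-m/(2m+1)})$) then follows directly from $N/g\ge q^{c}$, and the ``infinitely many $k$'' clause comes from taking $k=r\ell^i$ exactly. You instead fix the level by the \emph{target inequality} $N_i-1\ge 2q^{1/4}L_i$. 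This makes $L/n\le q^{-1/4}$ automatic, but you must then argue by minimality (level $i-1$ fails the threshold) that the genus did not overshoot. That works, at the cost of extra exponent bookkeeping that the paper's direct choice avoids.

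Two of your steps need repair.

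First, $g_i\le q\,g_{i-1}$ is false at $i=2$: both towers start with the rational function field, so $g_1=0$. Moreover, \cref{lem:BGGS} only gives upper bounds on genera, so no comparison of actual genera is available there. Compare the genus bound at level $i$ with the place bound at level $i-1$ instead. For $i\ge 2$ this gives $g_i\le \ell^i\le 2N_{i-1}$ (GS) and $g_i\le 2p^mN_{i-1}$ (BBGS), which yields your intended $g_i\lesssim q^{1/2}N_{i-1}$.

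Second, $\frac{r-1}{2}\lceil k/(q-1)\rceil=O(k-r)$ fails when $k-r\ll r$, because the ceiling is always at least $1$. Using $k\le (r+1)(k-r)$, you only get $\frac{r-1}{2}\lceil k/(q-1)\rceil\le (r-1)(k-r)$ for $q\ge r+2$. The resulting extra factor of $r$ gives $N_{i-1}=O(q^{1/4}r^2(k-r))$ and $r^2g_i=O(q^{3/4}r^4(k-r))$, which is still absorbed by $q\ge(2r)^{c_0}$. Alternatively, the paper's duality reduction to $r\le k/2$ removes this edge case entirely.
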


\begin{proof}
We will assume $r\leq k/2$, so that $k=O(k-r)$. If $r>k/2$, we can reduce to the $r\leq k/2$ case by applying \cref{lem:rc2design} to exchange the dimension and the codimension.

If $q=\ell^2$ is a square, let $i$ be the smallest positive integer such that $\ell^i\ge k/r$, and let $F=F_i$ be the $i$-th field in the Garcia--Stichtenoth tower over $\F_q$.
Otherwise, $q=p^{2m+1}$ with $m\ge 1$, and we let $i$ be the smallest positive integer such that
\[
\frac{p^{2m(i-1)}}{2}\left(\frac{q-1}{p^{m}-1}+\frac{q-1}{p^{m+1}-1}\right)\ge k/r,
\]
and let $F=F_i$ be the $i$-th field in the Bassa--Beelen--Garcia--Stichtenoth tower over $\F_q$.
In either case, this choice of $i$ ensures that $g\le (k/r)q$ by \cref{lem:GS,lem:BGGS}.\footnote{We elaborate on the case $q=p^{2m+1}$; the case $q=\ell^2$ is similar. If $i=1$, then $F=E_1=\F_q(x_1)$ has genus $g=0$, so $g\leq (k/r)q$ trivially. Suppose now that $i\geq 2$. By the minimality of $i$, we have $\frac{p^{2m(i-2)}}{2}\left(\frac{q-1}{p^{m}-1}+\frac{q-1}{p^{m+1}-1}\right)<k/r$. So $g\leq \frac{p^{2m(i-1)}}{2}\left(\frac{q-1}{p^{m}-1}+\frac{q-1}{p^{m+1}-1}\right)\leq q\cdot \frac{p^{2m(i-2)}}{2}\left(\frac{q-1}{p^{m}-1}+\frac{q-1}{p^{m+1}-1}\right)\leq (k/r)q$.}
(In fact, this extra factor of $q$ can be improved to $\ell$ or $p^{2m}$, but we use this crude bound for simplicity.)

Let
\[
L=r\left(
k-r+\frac{r-1}{2}\left\lceil \frac{k}{q-1}\right\rceil+\frac{r+1}{2}g
\right).
\]
By \cref{thm:bound-bad-ones-2}, the collection $\mathcal{E}=\{E(P): P\in S\}$ is an explicit $(k,r,L)$ strong lossless rank extractor over $\F_q$.

Since $g\le (k/r)q$ and $q\ge (2r)^{c_0}$, we obtain $L=O(rkq)=O(r(k-r)q)$.

Moreover, for infinitely many $k$, we have $g=O(k/r)$. For instance, in the case $q=\ell^2$, we may fix $i$ and take $k=\ell^i r$. In the case $q=p^{2m+1}$, we may take $i$ sufficiently large and set
\[
k=\left\lfloor r\cdot \frac{p^{2m(i-1)}}{2}\left(\frac{q-1}{p^{m}-1}+\frac{q-1}{p^{m+1}-1}\right)\right\rfloor.
\]
For such $k$, we obtain $L=O(rk)=O(r(k-r))$.

By \cref{lem:reduced}, we may assume $\mathcal{E}$ consists of full-rank matrices only by decreasing the size $n$ by at most $L$ if necessary. So $n\geq |S|-L$.

Finally, we bound $L/n$. If $q=\ell^2$, then $|S|\ge \ell^i(\ell-1)$ by \cref{lem:GS}. By the choice of $i$, we have $L=O(r^2\ell^i)$, so
\[
L/|S|=O(r^2\ell^i/|S|)=O(r^2q^{-1/2}),
\]
which is at most $\frac{1}{2}q^{-1/4}$ for sufficiently large $c_0$.
Then $n\geq |S|-L\geq \frac{|S|}{2}$, implying that $L/n\leq q^{-1/4}$.

If $q=p^{2m+1}$, then $|S|\ge p^{2m(i-1)}(q-1)-1$ by \cref{lem:BGGS}. By the choice of $i$, we have
\[
L=O\!\left(r^2 p^{2m(i-1)}\frac{q-1}{p^m}\right),
\]
and hence
\[
L/|S|=O(r^2/p^m)=O\!\left(r^2 q^{-m/(2m+1)}\right),
\]
which is again at most $q^{-1/4}$ for sufficiently large $c_0$. Then again, we have $n\geq |S|-L\geq \frac{|S|}{2}$, implying that $L/n\leq q^{-1/4}$.
\end{proof}

\section{Field Reduction}\label{sec:concatenation}

As mentioned in the introduction, for a field extension $\F_Q/\F_q$, lossless rank condensers over $\F_Q$ can be converted into ones over $\F_q$ via \cite[Proposition~8.5]{FG15}, but this transformation does not preserve the lossless rank extractor or disperser property. In \cref{sec:det}, we develop a different field-reduction technique based on the multilinearity of the determinant, and show that it transforms lossless rank dispersers over $\F_Q$ into ones over $\F_q$, albeit with an exponential blow-up in size in $r$.

In \cref{sec:pit}, we further refine this approach by combining it with polynomial identity testing. This is particularly useful over sufficiently large prime fields $\F_q$, where techniques such as the Garcia--Stichtenoth tower and the Bassa--Beelen--Garcia--Stichtenoth tower do not directly apply.

Finally, in \cref{sec:combine}, we apply these techniques to prove our main theorems.

\subsection{Field Reduction via Multilinearity of the Determinant}\label{sec:det}

Let $r\leq k$ be positive integers.
Let $\F_Q/\F_q$ be a finite field extension of degree $d=[\F_Q:\F_q]$.

Fix a basis $e_1,\dots,e_d\in \F_Q$ of $\F_Q$ over $\F_q$.
For $a\in\F_Q$, let $a^{(1)},\dots,a^{(d)}\in\F_q$ denote the coordinates of $a$ with respect to this basis, so that $a=\sum_{t=1}^d a^{(t)} e_t$.

For an $r\times k$ matrix $E=(E_{ij})\in \F_Q^{r\times k}$, let $E_i$ denote its $i$-th row for $i\in [r]$. 
For $t\in [d]$, define
\[
E_i^{(t)} := (E_{i1}^{(t)},\dots,E_{ik}^{(t)})\in \F_q^k,
\]
the vector obtained by taking the $t$-th coordinate of each entry of $E_i$.

For a tuple $\sigma=(\sigma_1,\dots,\sigma_r)\in [d]^r$, define the matrix $E^{(\sigma)}\in \F_q^{r\times k}$ by
\[
E^{(\sigma)} := (E_i^{(\sigma_i)})_{i\in [r]}.
\]

Finally, for a finite collection $\mathcal{E}\subseteq\F_Q^{r\times k}$, define the collection
\[
\mathcal{E}_{\F_Q\to\F_q}
:= \{E^{(\sigma)} : E\in \mathcal{E},\ \sigma\in [d]^r\}
\subseteq \F_q^{r\times k}.
\]
Then
\[
|\mathcal{E}_{\F_Q\to\F_q}|= d^r |\mathcal{E}| = (\log_q Q)^r |\mathcal{E}|.
\]

Recall the definition of a lossless rank disperser (\cref{defi:disperser}). 
We now show that if $\mathcal{E}$ is a lossless rank disperser over $\F_Q$, then $\mathcal{E}_{\F_Q\to\F_q}$ is a lossless rank disperser over $\F_q$.
 
\begin{theorem} \label{thm:ReverseExtension}
Let $\mathcal{E}\subseteq\F_Q^{r\times k}$ be a finite collection.
Suppose $M\in \F_Q^{k\times r}$ is a full-rank matrix such that
\[
|\{E\in \mathcal{E}:\rank(EM)=r\}|>0.
\]
Then
\[
|\{E\in \mathcal{E}_{\F_Q\to\F_q}:\rank(EM)=r\}|>0.
\]
In particular, this holds for every $M\in \F_q^{k\times r}$ of rank $r$.
\end{theorem}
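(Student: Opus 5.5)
Fix $E\in\mathcal{E}$ with $\rank(EM)=r$, so $\det(EM)\neq 0$. By construction each row of $E$ decomposes as $E_i=\sum_{t=1}^d e_t E_i^{(t)}$ with $E_i^{(t)}\in\F_q^k$. The $i$-th row of $EM$ is therefore
\[
E_iM=\sum_{t=1}^d e_t\,E_i^{(t)}M .
\]
Here $e_t\in\F_Q$ is a scalar and $E_i^{(t)}M\in\F_Q^{r}$.

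We expand $\det(EM)$ using multilinearity of the determinant in its rows:
\[
\det(EM)=\sum_{\sigma\in[d]^r}\Bigl(\prod_{i=1}^r e_{\sigma_i}\Bigr)\det\bigl(E^{(\sigma)}M\bigr).
\]
This identity holds over $\F_Q$. The key point is that $E^{(\sigma)}M$ is exactly the matrix whose $i$-th row is $E_i^{(\sigma_i)}M$. Since the left-hand side is nonzero, at least one summand is nonzero. So there exists $\sigma$ with $\det(E^{(\sigma)}M)\neq 0$, that is, $\rank(E^{(\sigma)}M)=r$. Because $E^{(\sigma)}\in\mathcal{E}_{\F_Q\to\F_q}$, the set in the conclusion is nonempty.

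For the ``in particular'' clause, let $M\in\F_q^{k\times r}$ have rank $r$. Viewed over $\F_Q$, it still has rank $r$, because rank is preserved under field extension (a nonzero $r\times r$ minor stays nonzero). The hypothesis then applies whenever $\mathcal{E}$ is a lossless rank disperser over $\F_Q$.

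There is no real obstacle here. The only point to check is that the multilinear expansion is legitimate even though the coefficients $e_t$ lie in $\F_Q$ and the vectors $E_i^{(t)}M$ may have $\F_Q$-entries. This is fine because the determinant is $\F_Q$-multilinear in the rows.
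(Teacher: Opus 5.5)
Your proof is correct and matches the paper's argument: fix $E$ with $\det(EM)\neq 0$, expand each row as $E_i=\sum_t e_t E_i^{(t)}$, and use multilinearity of the determinant to write $\det(EM)$ as an $\F_Q$-linear combination of the $\det(E^{(\sigma)}M)$, so some summand must be nonzero. Your justification of the ``in particular'' clause (a rank-$r$ matrix over $\F_q$ keeps a nonzero $r\times r$ minor over $\F_Q$) is correct; the paper leaves this point implicit.
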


\begin{corollary}\label{cor:field-reduction}
If $\mathcal{E}$ is a $(k,r)$ lossless rank disperser over $\F_Q$, then $\mathcal{E}_{\F_Q\to\F_q}$ is a $(k,r)$ lossless rank disperser over $\F_q$.
\end{corollary}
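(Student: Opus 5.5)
The plan is to use multilinearity of the determinant in the rows, which is the mechanism named in the section title.

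Fix $E\in\mathcal{E}$ with $\rank(EM)=r$; such an $E$ exists by hypothesis. Since $EM$ is an $r\times r$ matrix, $\det(EM)\neq 0$.

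First I will decompose the rows of $E$ along the fixed basis $e_1,\dots,e_d$. Each entry satisfies $E_{ij}=\sum_{t=1}^d E_{ij}^{(t)}e_t$, so the $i$-th row is $E_i=\sum_{t=1}^d e_t E_i^{(t)}$, where $E_i^{(t)}\in\F_q^k\subseteq\F_Q^k$ and $e_t\in\F_Q$ is a scalar. Multiplying on the right by $M$, the $i$-th row of $EM$ is
\[
E_iM=\sum_{t=1}^d e_t\,\bigl(E_i^{(t)}M\bigr).
\]
This identity holds over $\F_Q$ whether $M$ has entries in $\F_q$ or in $\F_Q$.

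Next I will expand the determinant using multilinearity in each of the $r$ rows. This gives
\[
\det(EM)=\sum_{\sigma\in[d]^r}\Bigl(\prod_{i=1}^r e_{\sigma_i}\Bigr)\det\bigl(E^{(\sigma)}M\bigr),
\]
because the matrix whose $i$-th row is $E_i^{(\sigma_i)}M$ is exactly $E^{(\sigma)}M$. The left-hand side is nonzero, so at least one summand must be nonzero. Hence there is some $\sigma\in[d]^r$ with $\det(E^{(\sigma)}M)\neq 0$, that is, $\rank(E^{(\sigma)}M)=r$. Since $E^{(\sigma)}\in\mathcal{E}_{\F_Q\to\F_q}$, this is the required element.

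For the "in particular" clause, let $M\in\F_q^{k\times r}$ have rank $r$. Rank is unchanged under field extension, so $M$ is also a full-rank matrix in $\F_Q^{k\times r}$. The hypothesis then applies whenever some $E\in\mathcal{E}$ satisfies $\rank(EM)=r$. This also yields \cref{cor:field-reduction} directly: the disperser property over $\F_Q$ supplies such an $E$ for every full-rank $M$ over $\F_Q$, and in particular for every full-rank $M$ over $\F_q$.

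I do not expect a real obstacle. The only point needing care is checking that the multilinear expansion is legitimate even though the coefficients $e_t$ lie in $\F_Q$ while the $E_i^{(t)}$ have entries in $\F_q$. This is fine because all computations take place in $\F_Q$, and multilinearity holds over any field.
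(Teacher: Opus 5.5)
Your proposal is correct and follows the paper's own argument: you prove \cref{thm:ReverseExtension} by expanding $\det(EM)$ row by row along the basis $e_1,\dots,e_d$, which forces some $\det(E^{(\sigma)}M)\neq 0$. You then get the corollary by noting that a full-rank $M\in\F_q^{k\times r}$ stays full rank over $\F_Q$, so the disperser property over $\F_Q$ supplies the required $E$.
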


\begin{proof}[Proof of \cref{thm:ReverseExtension}]
By assumption, there exists $E\in\mathcal{E}$ such that $\rank(EM)=r$, or equivalently, $\det(EM)\neq 0$. Fix such an $E=(E_{ij})$.

For each $i\in [r]$ and $j\in [k]$, we have
\[
E_i = \sum_{t\in [d]} e_{t} E_i^{(t)},
\]
and hence
\begin{equation}\label{eq:row-expansion}
E_i M = \sum_{t\in [d]} e_{t} \left(E_i^{(t)} M\right).
\end{equation}

For $\sigma=(\sigma_1,\dots,\sigma_r)\in [d]^r$, let $A_\sigma$ be the $r\times r$ matrix whose $i$-th row is $E_i^{(\sigma_i)} M$ for $i\in [r]$. Then $A_\sigma = E^{(\sigma)} M$ since the $i$-th row of $E^{(\sigma)} M$ is also $E_i^{(\sigma_i)} M$ for $i\in [r]$.

By \eqref{eq:row-expansion} and multilinearity of the determinant with respect to rows, we have
\[
\det(EM)
= \sum_{\sigma\in [d]^r}
\left(\prod_{i=1}^r e_{\sigma_i}\right)\det(A_\sigma)
= \sum_{\sigma\in [d]^r}
\left(\prod_{i=1}^r e_{\sigma_i}\right)\det(E^{(\sigma)} M).
\]

Since $\det(EM)\neq 0$, there exists $\sigma\in [d]^r$ such that $\det(E^{(\sigma)} M)\neq 0$. As $E^{(\sigma)}\in \mathcal{E}_{\F_Q\to\F_q}$, the claim follows.
\end{proof}

\subsection{Field Reduction via PIT for Symbolic Determinants with Rank-One Summands}\label{sec:pit}

\cref{cor:field-reduction} allows us to construct lossless rank dispersers over an arbitrary finite field $\F_q$ from those over an extension field $\F_Q$. However, it increases the size by a factor exponential in $r$.

We now show that, when $\F_q$ is sufficiently large, this drawback can be mitigated using techniques from polynomial identity testing. In particular, using the current best known result of \cite{DBLP:journals/cc/GurjarT20}, the size increases by only a quasi-polynomial factor in $r$, rather than an exponential one. Moreover, this approach also yields the stronger lossless rank extractor properties.

We begin by defining a relevant class of polynomials.

\begin{definition}[Symbolic determinants with rank-one summands]
Let $r$ and $N$ be positive integers, let $x=\{x_1,\dots,x_N\}$ be a set of variables, and let $\F$ be a field. 
Define $\VP{r}{N}{\F}\subseteq \F[x]$ to be the class of polynomials $f\in \F[x]$ that can be written as
\[
f(x)=\det\!\left(\sum_{i=1}^N x_i A_i\right),
\]
where each $A_i\in \F^{r\times r}$ satisfies $\rank(A_i)\le 1$.
We call such polynomials \emph{$(r,N)$ symbolic determinants with rank-one summands}.  
\end{definition}

We also need the notion of \emph{$\delta$-hitting sets}.

\begin{definition}[$\delta$-hitting sets]
Let $\mathcal{C}\subseteq \F[x_1,\dots,x_N]$ be a class of polynomials, and let $\delta\geq 0$. A set $\mathcal{H}\subseteq \F^{N}$ is called a $\delta$-hitting set for $\mathcal{C}$ if for every nonzero polynomial $f\in \mathcal{C}$,
\[
\bigl|\{a\in \mathcal{H} : f(a)\neq 0\}\bigr|\ge (1-\delta)|\mathcal{H}|.
\]
\end{definition}

The following lemma shows that an explicit lossless rank extractor over $\F_Q$ can be transformed into one over $\F_q$, assuming the existence of an explicit $\delta$-hitting set $\mathcal{H}\subseteq \F_q^{rd}$ for the class $\VP{r}{rd}{\F_q}$, where $d=[\F_Q:\F_q]$.

\begin{lemma}\label{lem:hittingsetVBPtoRankExtractor}
Let $r\le k$ be positive integers, and let $\F_Q/\F_q$ be a finite field extension
of degree $d=[\F_Q:\F_q]$. Let
$\mathcal{E}_Q\subseteq \F_Q^{r\times k}$ be an explicit $(k,r,L)$ strong (resp. weak) lossless
rank extractor over $\F_Q$ of size $n$, and let
$\mathcal{H}\subseteq \F_q^{dr}$ be an explicit $\delta$-hitting set for
$\VP{r}{dr}{\F_q}$. Then one can construct an explicit
$(k,r,|\mathcal{H}|(L+\delta(rn-L)))$ strong lossless rank extractor (resp. an explicit
$(k,r,|\mathcal{H}|(L+\delta(n-L)))$ weak lossless rank extractor) over
$\F_q$ of size $n|\mathcal{H}|$.
\end{lemma}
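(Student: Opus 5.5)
The construction replaces each row of a matrix over $\F_Q$ by an $\F_q$-linear combination of its coordinate rows, with the coefficients read off from a point of $\mathcal{H}$. Index the coordinates of a point $a\in\mathcal{H}\subseteq\F_q^{dr}$ as $a_{i,t}$ for $i\in[r]$ and $t\in[d]$. Using the notation $E_i^{(t)}$ of \cref{sec:det}, for $E\in\mathcal{E}_Q$ and $a\in\mathcal{H}$ define $E^{(a)}\in\F_q^{r\times k}$ as the matrix whose $i$-th row is $\sum_{t\in[d]} a_{i,t}E_i^{(t)}$. The output family is $\{E^{(a)}:E\in\mathcal{E}_Q,\ a\in\mathcal{H}\}$, taken as a multiset of size $n|\mathcal{H}|$. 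Explicitness is immediate from the explicitness of $\mathcal{E}_Q$ and $\mathcal{H}$.

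Fix a full-rank $M\in\F_q^{k\times r}$ and some $E\in\mathcal{E}_Q$. Introduce variables $x=(x_{i,t})$ and the symbolic matrix $E^{(x)}$ whose $i$-th row is $\sum_t x_{i,t}E_i^{(t)}$. Then
\[
E^{(x)}M=\sum_{i,t} x_{i,t}\, D_i\, E_i^{(t)}M,
\]
where $D_i$ is the elementary matrix with a single $1$ in entry $(i,i)$. Each summand is supported on one row, so it has rank at most one, and all entries lie in $\F_q$. Therefore $f_{E,M}(x):=\det(E^{(x)}M)\in\VP{r}{dr}{\F_q}$. The substitution $x_{i,t}=e_t$ recovers $\det(EM)$. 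Consequently, if $\rank(EM)=r$, then $f_{E,M}\neq0$ as a polynomial, and the $\delta$-hitting property gives $\rank(E^{(a)}M)=r$ for at least $(1-\delta)|\mathcal{H}|$ points $a\in\mathcal{H}$.

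For the weak bound, let $B$ be the set of $E\in\mathcal{E}_Q$ with $\rank(EM)<r$, so $|B|\le L$. A bad $E\in B$ contributes at most $|\mathcal{H}|$ failures, and each of the remaining $n-|B|$ good $E$'s contributes at most $\delta|\mathcal{H}|$ failures. The total is at most $|\mathcal{H}|(|B|+\delta(n-|B|))$. Since $\delta\le1$, this expression is increasing in $|B|$, so it is at most $|\mathcal{H}|(L+\delta(n-L))$.

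The strong case needs one additional ingredient. The bad $E$'s must be charged by their deficiency rather than by count, which requires showing $r-\rank(E^{(a)}M)\le r-\rank(EM)$ for every $a$. The planned argument is that $\rank(E^{(a)}M)\ge\rank(EM)$ for generic specializations of $x$. Concretely, a nonvanishing $s\times s$ minor of $EM$ with $s=\rank(EM)$ is the specialization of the corresponding minor of $E^{(x)}M$, and that minor is again a rank-one-summand determinant of a submatrix. Hence it is a nonzero polynomial, and at most a $\delta$ fraction of points of $\mathcal{H}$ make it vanish. This is the step I expect to be the main obstacle, for two reasons. First, the hitting set is for $r\times r$ determinants, so we need either a padding trick or the easy fact that $\mathcal{H}$ also hits these smaller minors. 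Second, the bound only holds for all but a $\delta$ fraction of points, not pointwise.

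To avoid these issues in the accounting, I would fall back on crude bounds that already reach the stated quantity. Each of the $n-|B|$ good $E$'s has total deficiency at most $r\delta|\mathcal{H}|$ over $a\in\mathcal{H}$. Each bad $E$ has deficiency $r-\rank(EM)\ge1$, and its total deficiency over $\mathcal{H}$ should be at most $|\mathcal{H}|(r-\rank(EM))$; I would obtain this via the rank comparison above, absorbing the exceptional $\delta$ fraction into the slack $\delta(rn-L)$. Since $\sum_{E\in B}(r-\rank(EM))\le L$, the total deficiency is at most $|\mathcal{H}|\bigl(L+\delta r(n-|B|)\bigr)$. Using $|B|\ge L/r$, this is at most $|\mathcal{H}|(L+\delta(rn-L))$, as claimed.
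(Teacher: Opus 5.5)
Your construction and weak-case count match the paper's, but your strong-case accounting has a false step. You use $|B|\ge L/r$, which need not hold. $L$ only upper-bounds $\Delta:=\sum_{E\in B}(r-\rank(EM))$, and for a particular $M$ one can have $B=\emptyset$ while $L>0$. In that case your bound $|\mathcal{H}|(L+\delta rn)$ exceeds the target by $\delta L|\mathcal{H}|$. The trouble is that you replaced $\Delta$ by $L$ before comparing it with $r|B|$. Separately, the intermediate claim that a bad $E$ contributes at most $|\mathcal{H}|(r-\rank(EM))$ is unjustified: on the exceptional points of $\mathcal{H}$ the deficiency can be as large as $r$. So ``absorbing into the slack'' has to be an actual computation.

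The fix is the paper's argument, and it uses the padding you anticipated, applied uniformly to every $E$, good or bad. Let $s=\rank(EM)$, and let $I$ be the row set of a nonsingular $s\times s$ minor of $EM$. Let $N(x)$ be the corresponding minor of $E^{(x)}M$. Then $\mathrm{diag}(N(x),\,x_{i,1}:i\notin I)$ is an $r\times r$ matrix with rank-one summands. Two points matter here. You must pad with variables, since the class has no constant term. And the padding variables $x_{i,1}$ with $i\notin I$ do not occur in $N(x)$, so every coefficient matrix still has rank at most one. Its determinant $\det N(x)\prod_{i\notin I}x_{i,1}$ is nonzero and lies in $\VP{r}{dr}{\F_q}$. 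Hence $\rank(E^{(a)}M)\ge s$ for all but $\delta|\mathcal{H}|$ points $a$. Put $d_E:=r-s\le r$. Bounding each exceptional point by $r$ and each other point by $d_E$ gives
\[
\sum_{a\in\mathcal{H}}\bigl(r-\rank(E^{(a)}M)\bigr)\le(1-\delta)|\mathcal{H}|\,d_E+\delta|\mathcal{H}|\,r .
\]
Summing over all $E$ yields $|\mathcal{H}|\bigl((1-\delta)\Delta+\delta rn\bigr)\le|\mathcal{H}|(L+\delta(rn-L))$, using only $\Delta\le L$. No lower bound on $|B|$ is needed, and good $E$'s are simply the case $d_E=0$.
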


\begin{proof}
Fix a basis $e_1,\dots,e_d$ of $\F_Q$ over $\F_q$.
For $a\in\F_Q$, let $a^{(1)},\dots,a^{(d)}\in\F_q$ denote its coordinates
with respect to this basis.

For $E=(E_{ij})\in\mathcal{E}_Q$ and $i\in[r],j\in[d]$, let
$A_{i,j}(E)\in\F_q^{r\times k}$ be the matrix whose $i$-th row is
$(E_{i1}^{(j)},\dots,E_{ik}^{(j)})$ and whose other rows are zero.
Then $\rank(A_{i,j}(E))\le 1$. 
Let $x=\{x_{i,j} : i\in [r],\, j\in [d]\}$ be variables, and define
\[
E^*(x):=\sum_{i\in[r],\,j\in[d]}x_{i,j}A_{i,j}(E)
\in\F_q[x]^{r\times k},
\]
Define the collection of matrices
\[
\mathcal{E}_q := \{E^*(a) : E\in \mathcal{E}_Q,\ a\in \mathcal{H}\}\subseteq \F_q^{r\times k}.
\]
Then $|\mathcal{E}_q|=n|\mathcal{H}|$.

Fix a full-rank matrix $M\in\F_q^{k\times r}$. We first make the following
observation. Suppose $\rank(EM)=t$. Choose a nonsingular $t\times t$ submatrix
of $EM$. Let $B(x)\in\F_q[x]$ be the corresponding submatrix of $E^*(x)M$, using the
row set $I\subseteq [r]$.
Evaluating $E^*(x)$ at $x_{i,j}=e_j$ for all $i\in [r]$ and $j\in [d]$ yields $E$. So $\det(B(x))\neq 0$.

Form the $r\times r$
block-diagonal symbolic matrix
\[
C(x):=
\begin{pmatrix}
B(x) & 0\\
0 & \operatorname{diag}(x_{i,1}:i\notin I)
\end{pmatrix}.
\]
Then $\det(C(x))=\det(B(x))\prod_{i\notin I}x_{i,1}\neq 0$.
Moreover, $C(x)$ is a symbolic matrix with rank-one summands. Indeed, write $B(x)=\sum_{i\in I,\,j\in[d]}x_{i,j}B_{i,j}$,
where $B_{i,j}$ is the corresponding $t\times t$ submatrix of
$A_{i,j}(E)M$. Since $\rank(A_{i,j}(E)M)\le 1$, we have
$\rank(B_{i,j})\le 1$. After embedding each $B_{i,j}$ into the upper-left
$t\times t$ block of an $r\times r$ matrix, and using the rank-one matrix
having a single nonzero diagonal entry for each variable $x_{i,1}$ with
$i\notin I$, we may write $C(x)=\sum_{i\in[r],\,j\in[d]}x_{i,j}C_{i,j}$
with $\rank(C_{i,j})\le 1$ for all $i,j$. Hence $\det(C(x))\in\VP{r}{dr}{\F_q}$.

Therefore, by the $\delta$-hitting set property of $\mathcal{H}$, we have
\begin{equation}\label{eq:bada}
\bigl|\{a\in\mathcal{H}:\rank(E^*(a)M)<t\}\bigr|
\le \delta|\mathcal{H}|, \quad\text{where } t=\rank(EM).
\end{equation}

Suppose first that $\mathcal{E}_Q$ is an explicit $(k,r,L)$ weak lossless rank extractor. For at least $n-L$ matrices
$E\in\mathcal{E}_Q$, we have $\rank(EM)=r$. Applying \eqref{eq:bada} with $t=r$
shows that at most $|\mathcal{H}|L+\delta|\mathcal{H}|(n-L)
=|\mathcal{H}|(L+\delta(n-L))$
matrices in $\mathcal{E}_q$ have rank less than $r$ after multiplication
by $M$. So $\mathcal{E}_q$ is an explicit $(k,r,|\mathcal{H}|(L+\delta(n-L)))$ weak lossless rank extractor.

Now suppose that $\mathcal{E}_Q$ is an explicit $(k,r,L)$ strong lossless rank extractor. For $E\in\mathcal{E}_Q$, let $d_E:=r-\rank(EM)$.
Then $\sum_{E\in \mathcal{E}_Q} d_E\le L$. 

By \eqref{eq:bada}, for all but at most $\delta|\mathcal{H}|$ points
$a\in\mathcal{H}$, we have $r-\rank(E^*(a)M)\leq d_E$.
For the remaining points, we use the trivial bound
$r-\rank(E^*(a)M)\le r$. Therefore,
\[
\sum_{a\in\mathcal{H}}
\bigl(r-\rank(E^*(a)M)\bigr)
\le
(1-\delta)|\mathcal{H}|d_E+\delta|\mathcal{H}|r.
\]
Summing over $E\in\mathcal{E}_Q$ gives
\begin{align*}
\sum_{E\in\mathcal{E}_Q}\sum_{a\in\mathcal{H}}
\bigl(r-\rank(E^*(a)M)\bigr)
\le
\sum_{E\in\mathcal{E}_Q} ((1-\delta)|\mathcal{H}|d_E+\delta|\mathcal{H}|r)
&\leq (1-\delta)|\mathcal{H}| L+\delta|H|rn\\
&=
|\mathcal{H}|((L+\delta(rn-L)).
\end{align*}
So $\mathcal{E}_q$ is an explicit
$(k,r,|\mathcal{H}|(L+\delta(rn-L)))$ strong lossless rank extractor.
\end{proof}

Gurjar and Thierauf~\cite{DBLP:journals/cc/GurjarT20} constructed explicit hitting sets for symbolic determinants with rank-one summands. We adapt their construction to prove the following lemma. The proof is deferred to \cref{sec:hittingsetVBP1}.

\begin{lemma}[store=hittingset]\label{lem:alphaHittingset}
    Let  $N$ be a positive integer, $\delta\in (0,1)$, and $\F$ a field such that $|\F| \geq (N/\delta)^c$ for some large enough absolute constant $c>0$. Then, one can construct an explicit $\delta$-hitting set $\mathcal{H}\subseteq \F^N$ for $\VP{r}{N}{\F}$ of size polynomial in $(N/\delta)^{\log N}$.
\end{lemma}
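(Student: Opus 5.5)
I will follow the Gurjar--Thierauf strategy, which is an isolation-based argument for matroid intersection. The key structural fact is the following. If $f=\det(\sum_i x_iA_i)$ with $A_i=u_iv_i^{T}$ (rank at most one), then by Cauchy--Binet
\[
f(x)=\sum_{S\subseteq[N],\,|S|=r}\det(U_S)\det(V_S)\prod_{i\in S}x_i ,
\]
where $U$ has columns $u_i$ and $V$ has columns $v_i$. So $f$ is multilinear, and its monomials are exactly the common bases of the two linear matroids given by $U$ and $V$. A weight assignment $w:[N]\to\Z_{\ge0}$ that gives a unique minimum-weight common basis therefore makes the univariate substitution $x_i\mapsto y^{w(i)}$ nonzero. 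Gurjar--Thierauf construct such weight assignments for all matroid-intersection instances simultaneously, using $N^{O(\log N)}$ candidate weight vectors with entries at most $N^{O(\log N)}$. I will take their family $\mathcal{W}$ as a black box. Its property is that for every pair of linear matroids on $[N]$ with a common base, some $w\in\mathcal{W}$ isolates the minimum-weight common base. This holds over any field, since isolation is a combinatorial property of the common-base family.

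The first step is to turn this into a set with density guarantees. For each $w\in\mathcal W$, the univariate polynomial $f_w(y)=f(y^{w_1},\dots,y^{w_N})$ has degree at most $D:=r\max_i w_i\le N^{O(\log N)}$. This degree bound is the obstacle: a $\delta$-fraction guarantee over $\F$ via the substitution alone would need $|\F|\ge D/\delta$, which is quasipolynomial, while we only have $|\F|\ge(N/\delta)^c$.

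To avoid this, I will evaluate $f$ at points of the form $x_i=\prod_{\ell}z_\ell^{w^{(\ell)}_i}$ in a more economical way. Gurjar--Thierauf build their weights in $O(\log N)$ rounds, with each round's weight polynomially bounded. I will use a separate fresh variable $z_\ell$ for each round, setting $x_i=\prod_{\ell=1}^{m} z_\ell^{w^{(\ell)}_i}$ with $m=O(\log N)$ and each $w^{(\ell)}$ bounded by $\poly(N)$. The isolation argument needs to be redone to show that the resulting $m$-variate polynomial is nonzero. Under a lexicographic order on the tuple of weights (round $1$ first), the round-by-round isolation yields a unique lex-minimal common base, and hence a unique extremal monomial that survives. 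This adaptation of their proof is the step that needs the most care.

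The resulting polynomial has total degree at most $m\cdot r\cdot\poly(N)=\poly(N)$. Pick $T\subseteq\F$ with $|T|=\poly(N)/\delta$, which is possible since $|\F|\ge(N/\delta)^c$. By Schwartz--Zippel, the nonzero polynomial fails on at most a $\delta$-fraction of $T^m$. The set is then
\[
\mathcal H=\bigcup_{\text{choices }(w^{(1)},\dots,w^{(m)})}\Bigl\{\bigl(\textstyle\prod_\ell a_\ell^{w^{(\ell)}_i}\bigr)_{i\in[N]}:a\in T^m\Bigr\}.
\]
Only one choice of weights needs to be good, so the density guarantee has to be arranged with care. I will have the Gurjar--Thierauf family supply, for each round, a polynomial-size list of candidates of which most work for each fixed instance. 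This is done by taking weights $w_i=\text{(base weight)}\bmod p$ over primes $p$, as in their proof, where all but a small fraction of primes work. With $\poly(N/\delta)$ primes per round, each round fails with probability at most $\delta/(2m)$. A union bound over the $m$ rounds plus the Schwartz--Zippel error then gives total failure fraction at most $\delta$. The size is $(\poly(N/\delta))^{m}\cdot|T|^m=(N/\delta)^{O(\log N)}$, which is polynomial in $(N/\delta)^{\log N}$, and the construction is explicit.
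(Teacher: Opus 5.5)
Your proposal is correct and follows essentially the same route as the paper. The shared ingredients are: one fresh variable per Gurjar--Thierauf round with $t=\lceil\log_2 N\rceil$; lexicographic isolation, which is equivalent to the combined weight $W_t=\sum_j H^{t-j}u_j$ for large $H$; per-round weight families in which at least a $1-\delta/(2t)$ fraction give nonzero circulation to the at most $N^4$ relevant cycles; and Schwartz--Zippel over $S^t$.

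The one difference is the per-round family. The paper uses $u_a[i]=a^i \bmod p$ for a single prime $p$ and varying $a$, counting roots of $\prod_C f_C$ in $\F_p$. Your reduction of a base weight such as $2^i$ modulo varying primes, counting prime divisors of the product of circulations, works equally well.

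One small correction: in either version the weights, and hence the degree, are $\poly(N/\delta)$ rather than $\poly(N)$ as you wrote. This is still covered by the assumption $|\F|\ge (N/\delta)^c$.
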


By combining \cref{lem:hittingsetVBPtoRankExtractor} and
\cref{lem:alphaHittingset}, we obtain the following theorem, which is the
main result of this subsection.

\begin{theorem}\label{thm:Qtoq}
Let $\F_Q/\F_q$ be a finite field extension of degree $d=[\F_Q:\F_q]$.
Let $r\le k$ be positive integers, let $\delta\in(0,1)$, and let $q$ be
a prime power satisfying $q\ge (dr/\delta)^c$ for a sufficiently large
absolute constant $c>0$.
Let $\mathcal{E}_Q\subseteq\F_Q^{r\times k}$ be an explicit $(k,r,L)$
strong (resp.\ weak) lossless rank extractor over $\F_Q$ of size $n$.
Then there exists an explicit $(k,r,(L+\delta n)h)$ strong\footnote{In the case of strong lossless rank extractors, we apply
\cref{lem:alphaHittingset} with error parameter $\delta/r$ instead of
$\delta$ to absorb the extra factor of $r$ in \cref{lem:hittingsetVBPtoRankExtractor}.} (resp.\ weak) lossless
rank extractor $\mathcal{E}_q$ over $\F_q$ of size $nh$, where $h\le (dr/\delta)^{O(\log(dr))}$.
\end{theorem}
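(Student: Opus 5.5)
The plan is to instantiate \cref{lem:hittingsetVBPtoRankExtractor} with the hitting set produced by \cref{lem:alphaHittingset}, and then simplify the resulting bound. Set $N=dr$. For the weak case, apply \cref{lem:alphaHittingset} with error parameter $\delta$ over the field $\F=\F_q$. The field-size requirement $|\F_q|\ge (N/\delta)^{c'}$ (with $c'$ the constant from that lemma) follows from $q\ge (dr/\delta)^c$ once $c\ge c'$. This gives an explicit $\delta$-hitting set $\mathcal{H}\subseteq\F_q^{dr}$ for $\VP{r}{dr}{\F_q}$ of size $h:=|\mathcal{H}|\le \poly((dr/\delta)^{\log(dr)})=(dr/\delta)^{O(\log(dr))}$. \cref{lem:hittingsetVBPtoRankExtractor} then yields an explicit weak extractor of size $nh$ with parameter $h(L+\delta(n-L))\le h(L+\delta n)$, which is the claim.

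For the strong case, apply \cref{lem:alphaHittingset} with error parameter $\delta'=\delta/r$. The field condition becomes $q\ge (dr^2/\delta)^{c'}$. This is implied by $q\ge (dr/\delta)^{c}$ for $c\ge 2c'$, since $dr^2/\delta\le (dr/\delta)^2$ (using $d,r\ge1$ and $\delta<1$). The hitting set size is $(dr^2/\delta)^{O(\log(dr))}=(dr/\delta)^{O(\log(dr))}$, again since $dr^2/\delta\le (dr/\delta)^2$. Plugging this into the strong part of \cref{lem:hittingsetVBPtoRankExtractor} gives the parameter
\[
h\bigl(L+\tfrac{\delta}{r}(rn-L)\bigr)\le h(L+\delta n).
\]

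Explicitness is inherited: $\mathcal{E}_q=\{E^*(a)\}$ is computed from $\mathcal{E}_Q$, a fixed basis of $\F_Q/\F_q$ (found by brute force or linear algebra), and $\mathcal{H}$. I expect no real obstacle here. The only care needed is absorbing the $\delta/r$ substitution and the polynomial blow-up into the constants $c$ and $O(\cdot)$, with all the genuine work deferred to \cref{lem:alphaHittingset}.
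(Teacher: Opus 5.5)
Your proposal is correct and matches the paper's argument: instantiate \cref{lem:hittingsetVBPtoRankExtractor} with the hitting set from \cref{lem:alphaHittingset} for $N=dr$, using error parameter $\delta/r$ in the strong case (as the paper's footnote indicates), and absorb $dr^2/\delta\le (dr/\delta)^2$ into the constants $c$ and $O(\cdot)$. Your simplification $h\bigl(L+\tfrac{\delta}{r}(rn-L)\bigr)\le h(L+\delta n)$ is exactly the step needed to reach the stated bound.
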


\subsection{Putting It Together} \label{sec:combine}

We begin by proving the following result over arbitrary finite fields via field reduction:

\begin{theorem}[Formal version of \cref{thm:rcsmall-informal}]\label{thm:rcsmall}
Let $1\le r< k$ be integers, and let $q>1$ be a prime power. Then there exists an explicit $(k,r)$ lossless rank disperser over $\F_q$ of size $n$, where
\[
n = O\!\left(\max\left\{\frac{c\log (2r)}{\log q},2\right\}^r r(k-r)q\right),
\]
for some absolute constant $c>0$. Moreover, for each fixed $r$ and $q$, we have
\[
n \le c_1 \max\left\{\frac{c\log (2r)}{\log q},2\right\}^r r(k-r)
\]
for some absolute constant $c_1>0$ and infinitely many $k$.
\end{theorem}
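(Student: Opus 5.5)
The plan is to combine \cref{thm:fullrankbound} over a suitable extension field $\F_Q$ with the field-reduction result \cref{cor:field-reduction}. Let $c_0$ be the constant from \cref{thm:fullrankbound}. Choose $d$ to be the smallest integer with $d\ge 2$ and $q^d\ge (2r)^{c_0}$. Then $d=\max\{\lceil c_0\log(2r)/\log q\rceil,2\}$, which is at most $\max\{c\log(2r)/\log q,2\}$ for a suitable absolute constant $c$ (e.g.\ $c=2c_0$, using $\lceil x\rceil\le 2x$ when $x\ge 1$ and that $d=2$ otherwise). Set $Q=q^d$. Since $d\ge 2$, $Q$ is non-prime, and $Q\ge (2r)^{c_0}$, so \cref{thm:fullrankbound} applies over $\F_Q$.

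\cref{thm:fullrankbound} yields an explicit $(k,r,L)$ strong lossless rank extractor $\mathcal{E}\subseteq\F_Q^{r\times k}$ with $L=O(r(k-r)Q)$ and $L/|\mathcal{E}|\le Q^{-1/4}<1$. In particular, for every full-rank $M$ fewer than $|\mathcal{E}|$ matrices fail, so $\mathcal{E}$ is a $(k,r)$ lossless rank disperser over $\F_Q$. By \cref{cor:field-reduction} (which relies on \cref{thm:ReverseExtension} for matrices $M\in\F_q^{k\times r}\subseteq\F_Q^{k\times r}$), the family $\mathcal{E}_{\F_Q\to\F_q}$ is a $(k,r)$ lossless rank disperser over $\F_q$ of size $d^r|\mathcal{E}|$. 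It is explicit because computing coordinates with respect to a fixed basis of $\F_Q/\F_q$ is easy and $Q=\poly(q,r)$.

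The main obstacle is the size bookkeeping, because $|\mathcal{E}|$ must be bounded above, whereas \cref{thm:fullrankbound} primarily bounds $L$ and the ratio $L/n$. I will therefore make sure the family is not too large by taking a subset. Any subfamily of $\mathcal{E}$ of size $n'>L$ is still a disperser, since at most $L$ of its members can fail for a given $M$. So I truncate to exactly $L+1$ members, which is possible because $|\mathcal{E}|\ge L+1$. This gives total size $d^r(L+1)=O(d^r r(k-r)Q)$.

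The remaining issue is the factor $Q$ in place of the claimed $q$. When $d=2$, $Q=q^2$, and I need to absorb one factor of $q$. I would handle this by recomputing the general-$k$ bound directly from the proof of \cref{thm:fullrankbound}, choosing the tower level more tightly. With the finer genus estimate noted in the proof (the extra factor is only $\ell$ or $p^{2m}$), $L=O(r(k-r)\sqrt{Q})$-type bounds follow. When $d=2$ this gives $O(r(k-r)q)$. When $d>2$, $Q\le q(2r)^{c_0}$, and the $\poly(r)$ excess would have to be absorbed into $d^r$ by enlarging $c$. If that fails, I would simply state the bound with $Q$ and note that $Q$ is at most $\poly(r)q^2$.

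For the ``moreover'' part, I take the infinitely many $k$ given by \cref{thm:fullrankbound} for the fixed pair $(r,Q)$, for which $L\le c_1r(k-r)$. Truncating to $L+1$ members then yields size at most $d^r(c_1r(k-r)+1)$, which has the required form after adjusting $c_1$. The case $r>k/2$ is already handled inside \cref{thm:fullrankbound} via duality.
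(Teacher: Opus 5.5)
Your proposal is correct and follows the paper's proof essentially step for step. Both arguments take the minimal $d\ge 2$ with $q^d\ge(2r)^{c_0}$, truncate the $\F_Q$-extractor from \cref{thm:fullrankbound} to $L+1$ members, apply \cref{thm:ReverseExtension}, and, when $d=2$, use the sharper square-field genus estimate to get $L+1=O(r(k-r)q)$.

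Your hedge in the $d>2$ case is unnecessary, because the absorption works (the paper does it implicitly): there $d\ge 3$, so the maximum is the logarithmic term, and since $(2r)^{c_0}\le 2^{c_0 r}$, replacing $c$ by $2^{c_0}c$ absorbs the extra factor $Q/q<(2r)^{c_0}$ into the $r$-th power.
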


\begin{proof}
Choose the smallest integer $d\geq 2$ such that $Q=q^d$ satisfies $Q\ge (2r)^{c_0}$, where $c_0$ is the constant from \cref{thm:fullrankbound}. 
Apply \cref{thm:fullrankbound} to obtain an explicit $(k,r,L)$ strong lossless rank extractor $\mathcal{E}_0$ over $\F_Q$ of size $n_0 > L$.  If $d=2$, then $Q=q^2$ is a square, and the sharper estimate for square fields noted in the proof of \cref{thm:fullrankbound} gives $L+1=O(r(k-r)q)$.
If $d>2$, then by \cref{thm:fullrankbound} and the minimality of $d$, we have $L+1=O(r(k-r)Q)
=O\bigl(r(k-r)q(2r)^{c_0}\bigr)$.
Moreover, for each fixed $r$ and $q$, we have $L+1 \le c_1 r(k-r)$ for some absolute constant $c_1>0$ and infinitely many $k$.

Remove $n_0-(L+1)$ elements from $\mathcal{E}_0$. The resulting collection $\mathcal{E}$ is still a $(k,r)$ lossless rank disperser over $\F_Q$, and its size is $L+1$.

By \cref{thm:ReverseExtension}, the collection $\mathcal{E}_{\F_Q\to \F_q}$ defined in \cref{sec:det} is a $(k,r)$ lossless rank disperser over $\F_q$ of size
\[
d^r (L+1) \le \max\left\{\frac{c'\log (2r)}{\log q},2\right\}^r (L+1)=O\!\left(\max\left\{\frac{c\log (2r)}{\log q},2\right\}^r r(k-r)q\right)
\]
for some absolute constants $c,c'>0$. The bound for infinitely many $k$ follows similarly from $L+1=O(r(k-r))$.
\end{proof}

\begin{remark}\label{remark:two}
It follows from the proof above that the number $2$ in $\max\left\{\frac{c\log (2r)}{\log q},2\right\}$ can be improved to $1$ if $q$ is non-prime. The reason we choose $2$ in general is that we need $Q$ to be non-prime so that \cref{thm:fullrankbound} applies.
\end{remark}

Next, we prove the following result over prime fields:

\begin{theorem}[Formal version of \cref{thm:rankprime-informal}]\label{thm:rankprime}
Let $1\le r<k$ be integers, and let $\delta\in (0,1)$.
Suppose $q>1$ is a prime satisfying $q\ge (2r/\delta)^{c_0}$ for a sufficiently large absolute constant $c_0$. Then there exists an explicit $(k,r,L)$ strong lossless rank extractor over $\F_q$ of size $n$, where 
\[
L\le (2r/\delta)^{c\log (2r)} r(k-r)q
\]
for some absolute constant $c>0$, and $L/n\le \delta$.
Moreover, for each fixed $r$ and $q$, we have 
\[
L\le (2r/\delta)^{c\log (2r)} r(k-r)
\]
for infinitely many $k$.
\end{theorem}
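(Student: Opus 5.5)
The plan is to combine \cref{thm:fullrankbound} over the quadratic extension $\F_Q$, $Q=q^2$, with the PIT-based field reduction of \cref{thm:Qtoq} (here $d=2$). As in the proof of \cref{thm:fullrankbound}, we may assume $r\le k/2$, so that $k=O(k-r)$. Since $q\ge(2r/\delta)^{c_0}$, the field $\F_Q$ is a non-prime square field with $Q\ge (2r)^{c_0}$. So \cref{thm:fullrankbound} applies and gives an explicit $(k,r,L_0)$ strong lossless rank extractor $\mathcal{E}_Q$ of size $n_0$. Because $Q=\ell^2$ with $\ell=q$, the sharper square-field estimate in that proof (the genus factor is $\ell$ rather than $Q$) gives $L_0=O(r(k-r)q)$. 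For infinitely many $k$, namely $k=q^i r$, it gives $L_0=O(r(k-r))$. Moreover, $L_0/n_0\le Q^{-1/4}=q^{-1/2}$.

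Next, apply \cref{thm:Qtoq} with $d=2$ and error parameter $\delta'=\delta/4$; the hypothesis $q\ge (2r/\delta')^{c}$ holds once $c_0$ is large. This produces an explicit strong lossless rank extractor $\mathcal{E}_q$ over $\F_q$ of size $n=n_0h$, with
\[
L=(L_0+\delta' n_0)h,\qquad h\le (2r/\delta)^{O(\log(2r))}.
\]

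The error ratio is then $L/n=L_0/n_0+\delta'\le q^{-1/2}+\delta/4\le\delta$, since $q^{-1/2}\le (2r/\delta)^{-c_0/2}\le \delta/2$.

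For the absolute bound on $L$, the $L_0h$ term is fine, but $\delta' n_0 h$ requires $n_0=O(L_0/\delta)$. I would therefore not use all rational places. Instead, I would truncate $\mathcal{E}_Q$ to a subcollection of size $n_0'=\lceil L_0/\delta'\rceil$. This is possible because $n_0\ge L_0 q^{1/2}\ge L_0/\delta'$. Truncation keeps the $(k,r,L_0)$ strong property, and the new ratio is $L_0/n_0'\le\delta'$. Then
\[
L\le (L_0+\delta' n_0')h\le (2L_0+\delta')h=O(L_0h).
\]
Redoing the error computation gives $L/n\le 2\delta'\le\delta$.

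Substituting the bounds on $L_0$ and $h$ yields $L\le (2r/\delta)^{c\log(2r)}r(k-r)q$, and $L\le (2r/\delta)^{c\log(2r)}r(k-r)$ for infinitely many $k$. Here the constant factor is absorbed into the exponent using $2r/\delta\ge 2$. The reduction from $r>k/2$ goes through \cref{lem:rc2design}, exactly as in \cref{thm:fullrankbound}.

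The main obstacle is controlling the additive $\delta n_0$ loss from the hitting set, which would blow up if $n_0\gg L_0$. The truncation step above handles it. A secondary point is to confirm that the square-field estimate really gives a factor $q$ rather than $Q=q^2$.
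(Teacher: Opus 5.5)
Your proposal is correct and follows essentially the same route as the paper's proof. You pass to $Q=q^2$, apply \cref{thm:fullrankbound} with the sharper square-field genus estimate to get $L_0=O(r(k-r)q)$, truncate to size $\lceil L_0/\delta'\rceil$ to control the additive $\delta' n$ loss, and then apply \cref{thm:Qtoq} with $d=2$. The remaining differences (your $\delta'=\delta/4$ versus the paper's $\delta/2$, and your explicit remark on the $r>k/2$ duality) are immaterial.
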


\begin{proof}
Let $Q=q^2$. Apply \cref{thm:fullrankbound} to obtain an explicit $(k,r,L_0)$ strong lossless rank extractor $\mathcal{E}_0$ over $\F_Q$ of size $n_0 > L_0$, where $L_0 = O(r(k-r)q)$ and $L_0/n_0\le Q^{-1/4}=q^{-1/2}$. Moreover, for each fixed $r$ and $q$, we have $L_0\le c_1 r(k-r)$ for some absolute constant $c_1>0$ and infinitely many $k$.

Let $\delta'=\delta/2$. Since $q\ge (2r/\delta)^{c_0}$, we may assume $q^{-1/2}\le \delta'$.
By removing elements from $\mathcal{E}_0$, we obtain an explicit $(k,r,L_0)$ strong lossless rank extractor $\mathcal{E}_Q$ over $\F_Q$ of size $n_1=\lceil L_0/\delta'\rceil$.

By \cref{thm:Qtoq}, there exists an explicit $(k,r,L)$ strong lossless rank extractor $\mathcal{E}$ over $\F_q$ of size $n$, where $L=(L_0+\delta' n_1)h$, $n=n_1 h$, and $h\le (2r/\delta')^{O(\log(2r))}$. Then
\[
L/n\le (L_0/n_1)+\delta'\le 2\delta'=\delta.
\]
Moreover, since $L_0=O(r(k-r)q)$,
\[
L=(L_0+\delta'n_1)h \le (2L_0+\delta')h = (2L_0+\delta/2)h \le (2r/\delta)^{c\log (2r)} r(k-r)q,
\]
where $c>0$ is a sufficiently large absolute constant.
Finally, since $L_0\le c_1 r(k-r)$ for infinitely many $k$, by choosing $c$ large enough, we obtain
\[
L \le (2r/\delta)^{c\log (2r)} r(k-r)
\]
for infinitely many $k$.
\end{proof}

Next, we prove our main results on explicit subspace designs.

\begin{theorem}[Formal version of \cref{thm:design-nonprime-informal,thm:design-prime-informal,thm:design-constant-informal}]\label{thm:designbound}

Let $1\le r<k$ be integers, let $r'=\min\{r,k-r\}$, and let $\delta\in(0,1)$.
Suppose $q>1$ is a prime power. In each of the cases below, there exist explicit collections
$\mathcal{V}$ and $\mathcal{V}^\perp$ of size $n$, consisting respectively of $r$-dimensional
and $(k-r)$-dimensional subspaces of $\F_q^k$, such that
$\mathcal{V}=\{V^\perp: V\in\mathcal{V}^\perp\}$ and $\mathcal{V}^\perp=\{V^\perp: V\in\mathcal{V}\}$.
Moreover, the following hold.

\begin{enumerate}
\item\label{item:nonprime}
Suppose $q$ is a non-prime prime power satisfying
$q\ge (2r')^{c_0}$ for some sufficiently large absolute constant $c_0>0$.
Then $\mathcal{V}$ is a $(k-r,A)$ strong subspace design and
$\mathcal{V}^\perp$ is an $(r,A)$ strong subspace design, where
$A=O(r(k-r)q)$ and $n\ge q^{1/4}A$.
Moreover, for each fixed $r$ and $q$, we have
$A\le c_1r(k-r)$ for some absolute constant $c_1>0$ and infinitely many $k$.

\item\label{item:prime}
Suppose $q$ is a prime satisfying
$q\ge (2r'/\delta)^{c_0}$ for some sufficiently large absolute constant $c_0>0$.
Then $\mathcal{V}$ is a $(k-r,A)$ strong subspace design and
$\mathcal{V}^\perp$ is an $(r,A)$ strong subspace design, where
$A\le (2r'/\delta)^{c\log(2r')}r(k-r)q$
for some absolute constant $c>0$, and $n\ge A/\delta$.
Moreover, for each fixed $r$ and $q$, we have $A\le (2r'/\delta)^{c\log(2r')}r(k-r)$
for infinitely many $k$.

\item\label{item:small}
In general, $\mathcal{V}$ is a $(k-r,A)$ weak subspace design and
$\mathcal{V}^\perp$ is an $(r,A)$ weak subspace design, where
\[
A=n-1=
O\left(
\max\left\{\frac{c\log(2r')}{\log q},2\right\}^{r'}
r(k-r)q
\right)
\]
for some absolute constant $c>0$.
Moreover, for each fixed $r$ and $q$, we have
\[
A=n-1\le
c_1
\max\left\{\frac{c\log(2r')}{\log q},2\right\}^{r'}
r(k-r)
\]
for some absolute constant $c_1>0$ and infinitely many $k$.
\end{enumerate}
\end{theorem}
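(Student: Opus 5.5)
The plan is to derive all three items from the rank-extractor results via \cref{lem:rc2design} and \cref{lem:reduced}, after first reducing to the case where the rank parameter equals $r'=\min\{r,k-r\}$. Suppose $r\le k-r$, so $r'=r$. We build a collection $\mathcal{E}\subseteq\F_q^{r\times k}$ of full-rank matrices and set $\mathcal{V}=\mathcal{V}(\mathcal{E})$ and $\mathcal{V}^\perp=\mathcal{V}^\perp(\mathcal{E})$. These consist of $r$- and $(k-r)$-dimensional subspaces, are mutually orthogonal complements, and by \cref{lem:rc2design} are a $(k-r,L)$ and an $(r,L)$ design of the appropriate (strong or weak) type. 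If instead $r>k-r$, we run the same construction with rank parameter $k-r=r'$, obtaining matrices in $\F_q^{(k-r)\times k}$. Then $\mathcal{V}(\mathcal{E})$ consists of $(k-r)$-dimensional subspaces forming an $(r,L)$ design, and $\mathcal{V}^\perp(\mathcal{E})$ consists of $r$-dimensional subspaces forming a $(k-r,L)$ design. Swapping the names of the two collections gives the statement. The bounds are symmetric, since $r'(k-r')=r(k-r)$. One should also check that the size $n$ counts distinct subspaces; if two matrices share a row space, the design inequality still holds with multiplicity, so I would treat the collections as multisets, as the paper implicitly does.

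\textbf{Item (1).} Apply \cref{thm:fullrankbound} with rank $r'$, using the hypothesis $q\ge(2r')^{c_0}$. This gives full-rank $\mathcal{E}$ with $L=O(r(k-r)q)$, $L/n\le q^{-1/4}$, and $L\le c_1r(k-r)$ for infinitely many $k$. Setting $A=L$, we get $n\ge q^{1/4}A$.

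\textbf{Item (2).} Apply \cref{thm:rankprime} with rank $r'$ to obtain $L\le(2r'/\delta)^{c\log(2r')}r(k-r)q$ and $L/n\le\delta$. That theorem does not assert full rank, so we apply \cref{lem:reduced} to discard rank-deficient matrices, leaving size at least $n-L$. To keep $n\ge A/\delta$, I would first invoke \cref{thm:rankprime} with $\delta/2$ in place of $\delta$. Then the remaining size $n'$ satisfies
\[
n'\ge n-L\ge (2/\delta-1)L\ge L/\delta ,
\]
and the change from $\delta$ to $\delta/2$ only alters the constants $c$ and $c_0$.

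\textbf{Item (3).} Take the disperser from \cref{thm:rcsmall} with rank $r'$. A disperser of size $n$ is a weak $(k,r',n-1)$ extractor, but \cref{lem:rc2design} needs full-rank matrices. Every rank-deficient matrix fails for every $M$, so deleting it preserves the disperser property and only decreases $n$. After deletion, $A=n-1$ for the new $n$, and the upper bound on $n$ is inherited.

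The only real subtlety is this bookkeeping of full rank and of the ratio $n/A$ after pruning, together with the dimension swap when $r>k-r$. Everything else is direct substitution into the earlier theorems.
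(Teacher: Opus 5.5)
Your proposal is correct and follows essentially the same route as the paper's proof: reduce to $r\le k-r$ by duality, then obtain item (1) from \cref{thm:fullrankbound} and \cref{lem:rc2design}, item (2) from \cref{thm:rankprime} at error $\delta/2$ followed by \cref{lem:reduced} and the estimate $n\ge 2L/\delta-L\ge L/\delta$, and item (3) from \cref{thm:rcsmall} after discarding rank-deficient matrices. Your remark about treating the collections as multisets makes explicit a convention the paper uses implicitly.
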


\begin{proof}
Note that if the theorem holds for $r\le k/2$, then the case $r>k/2$ follows by replacing $r$ with $k-r$ and swapping $\mathcal{V}$ and $\mathcal{V}^\perp$. Thus, we may assume $r\le k/2$, and hence $r'=r$. 

To prove \cref{item:nonprime}, we apply \cref{thm:fullrankbound} to obtain an explicit $(k,r,L)$ strong lossless rank extractor $\mathcal{E}_0$ consisting of full-rank matrices, with $L/|\mathcal{E}_0|\le q^{-1/4}$. 
Applying \cref{lem:rc2design} directly gives strong subspace designs
$\mathcal{V}$ and $\mathcal{V}^\perp$ of size $n=|\mathcal{E}_0|\ge q^{1/4}L$.
Taking $A=L$ proves \cref{item:nonprime}.

The proof of \cref{item:prime} is similar. We apply \cref{thm:rankprime} with error parameter $\delta/2$ to obtain an explicit $(k,r,L)$ strong lossless rank extractor $\mathcal{E}_0$ satisfying $L/|\mathcal{E}_0|\le \delta/2$, and then apply \cref{lem:reduced}. The resulting strong subspace designs have size
$n\ge |\mathcal{E}_0|-L\ge \frac{2L}{\delta}-L\ge \frac{L}{\delta}$.
Letting $A=L$, and increasing the absolute constants $c$ and $c_0$ if necessary to absorb the change from $\delta$ to $\delta/2$, proves \cref{item:prime}.

Finally, to prove \cref{item:small}, we apply \cref{thm:rcsmall} to obtain an explicit $(k,r)$ lossless rank disperser $\mathcal{E}_0$, and then remove all matrices of rank less than $r$. This does not affect the lossless rank disperser property: if $E\in\F_q^{r\times k}$ satisfies $\rank(EM)=r$ for some $M\in\F_q^{k\times r}$, then necessarily $\rank(E)=r$, and hence such matrices are not removed. Let $\mathcal{E}_1$ denote the resulting collection. We then apply \cref{lem:rc2design} to obtain $\mathcal{V}=\mathcal{V}(\mathcal{E}_1)$ and $\mathcal{V}^\perp=\mathcal{V}^\perp(\mathcal{E}_1)$. Let $A=|\mathcal{V}|-1=|\mathcal{V}^\perp|-1$, and \cref{item:small} follows.
\end{proof}

\begin{remark}
Using the same idea of exchanging $r$ with $k-r$ via duality, one may relax the condition $q\geq \poly(r)$ in \cref{thm:rankboundinformal} and \cref{thm:rankprime-informal} to $q\geq \poly(r')$, where $r'=\min\{r,k-r\}$. We choose not to adopt this formulation for simplicity.
\end{remark}

\begin{remark}[Comparison with previous subspace-design constructions] 
In our subspace-design constructions, the dimension of the test subspace equals the codimension of each member of the family. 
This property is crucial for the application to strong $s$-blocking sets (see \cref{thm:lre-to-sbs}).
The strong subspace designs of Guruswami, Xing, and Yuan~\cite{guruswami2018subspace} also cover this case. Over a fixed field $\F_q$ and for arbitrarily large ambient dimension $k$, however, their intersection bound contains an additional factor of $\log_q k$. Combined with Lemma~\ref{thm:lre-to-sbs}, their construction therefore gives strong $s$-blocking sets whose size is larger than the bounds in Corollary~\ref{cor:blocking} by this logarithmic factor. For ambient dimensions of the form $k=q^C-1$, where $C$ is fixed, this factor is absorbed into the constant. The subspace-design codes of Goyal, Guruswami, and Hsieh~\cite{GGH26}, on the other hand, give coordinate kernels whose codimension is substantially larger than the dimension of the test subspaces covered by their guarantee, and hence do not directly provide the equal-dimension/codimension designs required by Lemma~\ref{thm:lre-to-sbs}.
\end{remark}

\section{Explicit Constructions of Strong $s$-Blocking Sets}\label{sec:explicit-sbb}

Recall that a set $B \subseteq \PG(k-1,q)$ is a \emph{strong $s$-blocking set} if for every codimension-$s$ projective subspace $\Sigma$ of $\PG(k-1,q)$, the intersection $B \cap \Sigma$ spans $\Sigma$.
In this section, we present two types of explicit constructions of strong $s$-blocking sets in $\PG(k-1,q)$. The first arises from lossless rank extractors over $\F_q$, while the second is obtained from $\eps$-biased sets in $\F_q^k$ with sufficiently small bias. 

For a subset $S\subseteq \F_q^k$, let ${\rm span}(S)$ denote its $\F_q$-linear span, and let $\widetilde{S}\subseteq \PG(k-1,q)$ denote its projectivization, i.e., 
\[
\widetilde{S}=\{\langle u\rangle : u\in S\setminus\{0\}\},
\]
where $\langle u\rangle$ denotes the one-dimensional subspace spanned by $u$.

\subsection{Strong $s$-Blocking Sets via Lossless Rank Dispersers}

The following lemma shows that every $(k,s+1)$ lossless rank disperser over $\F_q$ (see \cref{defi:disperser}) gives rise to a strong $s$-blocking set in $\PG(k-1,q)$. This was essentially already observed in \cite[Proposition~10]{FZ16}.\footnote{A minor technical difference is that \cite[Proposition~10]{FZ16} assumes the matrices all have full rank, i.e., their row spaces have dimension $s+1$. This is unimportant, as rank-deficient matrices can be safely ignored.}

\begin{lemma}\label{thm:lre-to-sbs}
Let $s<k$ be positive integers.
Suppose that $\mathcal E=\{E_i\}_{i=1}^n\subseteq \F_q^{(s+1)\times k}$ is a $(k,s+1)$ lossless rank disperser over $\F_q$. For $i\in[n]$, let $V_i=\rowspan(E_i)\subseteq \F_q^k$. Then $B:=\bigcup_{i=1}^n \widetilde{V_i}$ is a strong $s$-blocking set in $\PG(k-1,q)$. Moreover, $|B|\le n\frac{q^{s+1}-1}{q-1}\le 2nq^s$.
\end{lemma}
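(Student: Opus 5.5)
We work with the affine lift of the problem. A codimension-$s$ projective subspace $\Sigma$ of $\PG(k-1,q)$ corresponds to a linear subspace $U\subseteq\F_q^k$ of dimension $k-s$, with $\Sigma=\widetilde{U}$. The condition that $B\cap\Sigma$ spans $\Sigma$ is equivalent to
\[
\Span\Bigl(\bigcup_i (V_i\cap U)\Bigr)=U .
\]
The plan is to argue by contradiction. Suppose $W:=\Span(\bigcup_i (V_i\cap U))$ is a proper subspace of $U$. Then $\dim W\le k-s-1$, and we can enlarge $W$ to a subspace $W'$ with
\[
W\subseteq W'\subseteq U,\qquad \dim W'=k-s-1=k-(s+1).
\]
By construction, $V_i\cap U\subseteq W'$ for every $i$.

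Next we use the disperser property in its subspace form. The proof of \cref{lem:rc2design} applies verbatim to any matrix, full-rank or not. Take a full-rank $M\in\F_q^{k\times(s+1)}$ with $\colspan(M)^\perp=W'$. Then for every $i$,
\[
(s+1)-\rank(E_iM)=\dim\ker\bigl(v\mapsto vM \text{ on } V_i\bigr)-\bigl((s+1)-\dim V_i\bigr)\cdot 0 .
\]
More simply, $\rank(E_iM)=s+1$ forces $\rank(E_i)=s+1$, so $\dim V_i=s+1$. In that case $\rank(E_iM)=\dim V_i-\dim(V_i\cap W')$, which gives $V_i\cap W'=\{0\}$.

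The disperser property gives an index $i$ with $\rank(E_iM)=s+1$. For this $i$, $V_i$ has dimension $s+1$ and $V_i\cap W'=\{0\}$. Dimension counting in $\F_q^k$ now gives
\[
\dim(V_i\cap U)\ \ge\ (s+1)+(k-s)-k\ =\ 1 .
\]
So $V_i\cap U$ is nonzero. However, $V_i\cap U\subseteq W'$, hence $V_i\cap U\subseteq V_i\cap W'=\{0\}$, a contradiction. Therefore $W=U$, and $B$ is a strong $s$-blocking set. This dimension-counting step is the crux, and the only real subtlety is choosing $W'$ of dimension exactly $k-s-1$ inside $U$ so that the disperser can be applied.

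For the size bound, each $\widetilde{V_i}$ contains at most $\frac{q^{s+1}-1}{q-1}$ points, because $\dim V_i\le s+1$. Taking the union over $i$ gives $|B|\le n\frac{q^{s+1}-1}{q-1}$. Finally,
\[
\frac{q^{s+1}-1}{q-1}=\sum_{j=0}^{s}q^j\le q^s\cdot\frac{q}{q-1}\le 2q^s ,
\]
since $q\ge 2$.
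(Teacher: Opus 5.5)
Your proof is correct and follows essentially the same route as the paper. You enlarge the span of the $V_i\cap U$ to a hyperplane $W'$ of $U$ (the paper's $H\subseteq L$) and pick a full-rank $M$ with $\colspan(M)^\perp=\ker(M^T)=W'$; the disperser then gives a $V_i$ of dimension $s+1$ meeting $W'$ trivially, which contradicts $\dim(V_i\cap U)\ge 1$. The only blemish is your first displayed identity, which is false as written: it should read $(s+1)-\rank(E_iM)=\dim\ker(\cdot)+\bigl((s+1)-\dim V_i\bigr)$. Nothing depends on it, since you immediately switch to the correct formula $\rank(E_iM)=\dim V_i-\dim(V_i\cap W')$.
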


\begin{proof}
Suppose for contradiction that $B$ is not a strong $s$-blocking set. Then there exists a codimension-$s$ subspace $L\subseteq \F_q^k$ and a codimension-$(s+1)$ subspace $H\subseteq L$ such that for all $i\in [n]$,
\begin{equation}\label{eq:containment}
V_i \cap L \subseteq H.
\end{equation}

Choose $M\in \F_q^{k\times (s+1)}$ of rank $s+1$ such that $\ker(M^T)=H$. Since $\mathcal E$ is a $(k,s+1)$ lossless rank disperser, there exists $E_i\in \mathcal E$ such that $\rank(E_iM)=s+1$. In particular, $\dim V_i=s+1$.

Since $\dim(V_i)+\dim(L)=k+1$, we have $V_i\cap L\neq\{0\}$. By \eqref{eq:containment}, this implies $V_i\cap H\neq\{0\}$.

On the other hand, $\rank(E_iM)=s+1$ implies that $M^T$ is injective on $V_i$, hence
\[
V_i \cap \ker(M^T)=\{0\},
\]
i.e., $V_i \cap H=\{0\}$, a contradiction.

Thus $B$ is a strong $s$-blocking set. The size bound is immediate.
\end{proof}

Combining \cref{thm:lre-to-sbs} with \cref{thm:fullrankbound}, \cref{thm:rankprime}, and \cref{thm:rcsmall}, respectively, we obtain the following results on explicit strong $s$-blocking sets.

\begin{corollary}[Formal version of \cref{thm:blocking-nonprime-informal,thm:blocking-prime-informal,thm:blocking-constant-informal}]\label{cor:blocking}
Let $1\le s<k$ be integers, and let $q>1$ be a prime power.
Then the following hold:
\begin{enumerate}
\item\label{item:blocking1} Suppose $q$ is a non-prime prime power satisfying $q\ge (2s)^{c_0}$ for some sufficiently large absolute constant $c_0>0$. Then there exists an explicit strong $s$-blocking set $B\subseteq \PG(k-1,q)$ of size $|B|=O(s(k-s) q^{s+1})$. Moreover, for each fixed $s$ and $q$, we have $|B|=O(s(k-s) q^s)$ for infinitely many $k$.

\item\label{item:blocking2} Suppose $q$ is prime and satisfies $q\ge (2s)^{c_0}$ for some sufficiently large absolute constant $c_0>0$. Then there exists an explicit strong $s$-blocking set $B\subseteq \PG(k-1,q)$ of size at most $(2s)^{c\log(2s)} (k-s) q^{s+1}$ for some absolute constant $c>0$. Moreover, for each fixed $s$ and $q$, we have $|B|\le (2s)^{c\log(2s)} (k-s) q^s$ for infinitely many $k$.

\item\label{item:blocking3} In general, there exists an explicit strong $s$-blocking set $B\subseteq \PG(k-1,q)$ of size
\[
O\!\left(\max\left\{\frac{c\log (2s)}{\log q},2\right\}^{s+1} s(k-s) q^{s+1}\right)
\]
for some absolute constant $c>0$. Moreover, for each fixed $s$ and $q$, we have
\[
|B|\leq c_1 \max\left\{\frac{c\log (2s)}{\log q},2\right\}^{s+1} s(k-s) q^s
\]
for some absolute constant $c_1>0$ and infinitely many $k$.
\end{enumerate}
\end{corollary}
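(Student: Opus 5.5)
The plan is to feed each of the three disperser constructions into \cref{thm:lre-to-sbs} with rank parameter $r=s+1$. That lemma turns any $(k,s+1)$ lossless rank disperser of size $n$ over $\F_q$ into a strong $s$-blocking set of size at most $2nq^s$. So in each item it suffices to produce an explicit $(k,s+1)$ disperser of size $n=O(s(k-s)q)$ (resp.\ the other bounds), and of size $O(s(k-s))$ for infinitely many $k$. Note that $r=s+1$ and $r(k-r)=(s+1)(k-s-1)\le 2s(k-s)$, so the target bounds translate directly.

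If $s+1=k$, the requirement $r<k$ in the earlier theorems fails. In that case I would take $\mathcal{E}=\{I_k\}$ (so $B=\PG(k-1,q)$, all of size $O(q^s)$), which is trivially a disperser. This edge case is harmless, since $k-s=1$ and the bounds remain valid.

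\textbf{Item~\eqref{item:blocking1}.} Apply \cref{thm:fullrankbound} with $r=s+1$; the condition $q\ge(2r)^{c_0}$ follows from $q\ge(2s)^{c_0'}$ after enlarging the constant. This gives a strong, hence weak, extractor with $L/n\le q^{-1/4}<1$, so every full-rank $M$ has a good $E_i$, and the family is a disperser. Discarding elements down to size $L+1$ keeps the disperser property, exactly as in the proof of \cref{thm:rcsmall}. So $n=L+1=O(r(k-r)q)=O(s(k-s)q)$, and $n=O(s(k-s))$ for infinitely many $k$. Then \cref{thm:lre-to-sbs} yields the bounds $O(s(k-s)q^{s+1})$ and $O(s(k-s)q^s)$.

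\textbf{Item~\eqref{item:blocking2}.} Apply \cref{thm:rankprime} with $r=s+1$ and a fixed constant $\delta=1/2$, so that $q\ge(2r/\delta)^{c_0}$ follows from $q\ge(2s)^{c_0'}$. Since $L/n\le 1/2<1$, this is again a disperser, and truncation to size $L+1$ gives $n\le(4(s+1))^{c\log(2s+2)}(s+1)(k-s-1)q+1$. This quantity is absorbed into $(2s)^{c'\log(2s)}(k-s)q$ by enlarging $c$; the linear factor $s$ is swallowed by the quasi-polynomial factor. The same holds without the factor $q$ for infinitely many $k$.

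\textbf{Item~\eqref{item:blocking3}.} Use \cref{thm:rcsmall} directly with $r=s+1$. Replacing $\log(2s+2)$ by $\log(2s)$ costs at most a constant inside the max, absorbed into $c$. The exponent $r=s+1$ matches the stated $\max\{\cdot\}^{s+1}$.

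The only real care needed is in the constant bookkeeping, namely checking that hypotheses stated in terms of $2r=2s+2$ and $\log(2r)$ follow from those in terms of $2s$ after enlarging the absolute constants. There is also the small point that extractors with $L<n$ are dispersers, and that truncation preserves this. I expect no substantive obstacle beyond this.
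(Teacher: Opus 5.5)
Your proposal is correct and follows the paper's route: apply \cref{thm:lre-to-sbs} with $r=s+1$ to dispersers obtained from \cref{thm:fullrankbound}, from \cref{thm:rankprime} with a constant $\delta$, and from \cref{thm:rcsmall}, truncating the extractors to size $L+1$ and enlarging absolute constants to pass between $2s$ and $2(s+1)$. The truncation and the boundary case $k=s+1$ (where those theorems' hypothesis $r<k$ fails) are details the paper leaves implicit, and you handle both correctly.
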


\subsection{Strong $s$-Blocking Sets via $\eps$-Biased Sets}

In this subsection, we first show that every $\eps$-biased set $S\subseteq \mathbb{F}_q^k$ with sufficiently small bias forms a strong $s$-blocking set. We then obtain explicit strong $s$-blocking sets by applying known constructions of $\eps$-biased sets from \cite{Ta17,JM21}.

We begin with basic terminology of discrete Fourier analysis over $\mathbb{F}_q^n$; for more details, see, e.g., \cite{TV06,O14}. Let $G=\F_q^k$, and let $\widehat{G}$ be its character group consisting of the characters $\chi: G\to\C^\times$. The identity of $\widehat{G}$ is called the \emph{trivial} character, while the others are \emph{nontrivial} characters.

Let $L(G,\C)$ be the space of functions $f:G\to \C$.
Define the
inner product $\langle\cdot,\cdot\rangle$ on $L(G,\C)$ via $\langle f,g\rangle=\ex_{x\in G} \left[f(x)\overline{g(x)}\right]$, where $\overline{g(x)}$ denotes the complex conjugate of $g(x)$. The character group $\widehat{G}$ is an orthonormal basis of $L(G,\C)$. 

The orthogonality relation of characters states that for two characters $\chi,\psi\in\widehat{G}$, we have $\langle \chi,\psi\rangle=\ex_{x\in G} \left[\chi(x)\overline{\psi(x)}\right]={\bf 1}[\chi=\psi]$. This implies that $\ex_{x\in G} \left[\chi(x)\right]=0$ for every nontrivial character $\chi$.

For $f\in L(G,\C)$ and $\chi \in \widehat{G}$, define the Fourier coefficient $\widehat{f}(\chi)$ of $f$ to be
\[
\widehat{f}(\chi)=\langle f,\chi\rangle=\ex_{x\in G} \left[f(x)\overline{\chi(x)}\right].
\]
This defines a function $\widehat{f}: \widehat{G} \to \C$ sending $\chi$ to $\widehat{f}(\chi)$, called the \emph{Fourier transform} of $f$. Its $L_1$-norm is defined to be $\|\widehat{f}\|_1=\sum_{\chi\in \widehat{G}} |\widehat{f}(\chi)|$. 

As $\widehat{G}$ is an orthonormal basis of $L(G,\C)$, any $f\in L(G,\C)$ can be expanded as 
$f=\sum_{\chi\in G} \widehat{f}(\chi) \chi$.

\begin{definition}[$\eps$-biased set]\label{def:eps-biased-set}
Let $\eps>0$. A nonempty set $S\subseteq G$ is called an \emph{$\eps$-biased set} if 
\[
\left|\ex_{x\in S}[\chi(x)]\right|\leq \eps
\]
for all nontrivial characters $\chi\in \widehat{G}$.
\end{definition}

The following lemma is well-known.
\begin{lemma}\label{lem:epsfool}
Let $f: G\to \C$ be a function, and let $S\subseteq G$ be an $\eps$-biased set. Then
\[
\left|\ex_{x\in S}[f(x)] - \ex_{x\in G}[f(x)]\right|\leq \|\widehat{f}\|_1 \cdot \eps.
\]
\end{lemma}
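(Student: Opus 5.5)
The plan is to expand $f$ in the character basis and use linearity of expectation, isolating the trivial character. Write $f=\sum_{\chi\in\widehat{G}}\widehat{f}(\chi)\chi$, as noted above. Then
\[
\ex_{x\in S}[f(x)]=\sum_{\chi\in\widehat{G}}\widehat{f}(\chi)\,\ex_{x\in S}[\chi(x)]
\quad\text{and}\quad
\ex_{x\in G}[f(x)]=\sum_{\chi\in\widehat{G}}\widehat{f}(\chi)\,\ex_{x\in G}[\chi(x)].
\]

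Next, compare the two sums character by character. For the trivial character $\chi_0$ we have $\chi_0\equiv 1$, so $\ex_{x\in S}[\chi_0(x)]=\ex_{x\in G}[\chi_0(x)]=1$, and these terms cancel in the difference. For a nontrivial character $\chi$, orthogonality gives $\ex_{x\in G}[\chi(x)]=0$, while the $\eps$-biased assumption gives $|\ex_{x\in S}[\chi(x)]|\le\eps$. Taking the difference and applying the triangle inequality yields
\[
\left|\ex_{x\in S}[f(x)]-\ex_{x\in G}[f(x)]\right|\le\sum_{\chi\ne\chi_0}|\widehat{f}(\chi)|\,\eps\le\|\widehat{f}\|_1\,\eps.
\]

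There is no real obstacle here. The only point to verify is that the trivial-character terms cancel exactly, and this holds since $S$ is nonempty, so $\ex_{x\in S}[1]=1$. The bound is in fact slightly stronger than stated, since it omits $|\widehat{f}(\chi_0)|$.
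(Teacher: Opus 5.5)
Your proof is correct and follows the paper's argument: expand $f$ in the character basis, use linearity of expectation, and bound each character's contribution (zero for the trivial character, at most $\eps$ for each nontrivial one). Your remark that the argument actually gives the slightly stronger bound $\eps\sum_{\chi\neq\chi_0}|\widehat{f}(\chi)|$ is also correct.
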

\begin{proof}
Expand $f$ as $f=\sum_{\chi\in \widehat{G}} \widehat{f}(\chi)\,\chi$. The claim holds for each character $\chi\in \widehat{G}$ by the definition of $\eps$-biased sets. The result then follows by linearity of expectation.
\end{proof}

For a set $S\subseteq G$, let $1_S: G\to \C$ denote its indicator function, i.e., $1_S(x)=1$ if $x\in S$ and $1_S(x)=0$ otherwise. We will use the following lemma.

\begin{lemma}\label{lem:l1bound}
    Let $V\subseteq G$ be an affine subspace over $\F_q$. Then $\|\widehat{1_V}\|_1=1$.
\end{lemma}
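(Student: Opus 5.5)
I plan to compute the Fourier transform of $1_V$ explicitly. Write $V=a+W$, where $W\subseteq G$ is a linear subspace and $a\in G$. For a character $\chi\in\widehat{G}$,
\[
\widehat{1_V}(\chi)=\frac{1}{|G|}\sum_{x\in V}\overline{\chi(x)}=\frac{1}{|G|}\,\overline{\chi(a)}\sum_{w\in W}\overline{\chi(w)}.
\]
The restriction of $\chi$ to the additive group $W$ is a character of $W$. By the orthogonality relation applied to the group $W$, the inner sum equals $|W|$ if $\chi$ is trivial on $W$, and equals $0$ otherwise.

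Since $|\chi(a)|=1$, it follows that $|\widehat{1_V}(\chi)|=|W|/|G|$ when $\chi|_W\equiv 1$, and $|\widehat{1_V}(\chi)|=0$ otherwise. Hence
\[
\|\widehat{1_V}\|_1=\frac{|W|}{|G|}\cdot\bigl|\{\chi\in\widehat{G}:\chi|_W\equiv 1\}\bigr|.
\]

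It remains to count the characters that are trivial on $W$. These are exactly the characters of the quotient group $G/W$, pulled back along the projection $G\to G/W$. A finite abelian group has as many characters as elements, so there are $|G/W|=|G|/|W|$ such characters. Substituting gives $\|\widehat{1_V}\|_1=1$.

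The only step that needs care is this count. If one prefers to avoid quotients, one can use the standard identification of $\widehat{G}$ with $\{x\mapsto \omega^{\Tr(\langle y,x\rangle)}:y\in\F_q^k\}$, where $\omega=e^{2\pi i/p}$ and $\Tr$ is the absolute trace from $\F_q$ to $\F_p$. Under this identification, the characters trivial on $W$ correspond exactly to $y\in W^\perp$. This uses that $W$ is an $\F_q$-subspace and that the trace form is nondegenerate. The count is then $|W^\perp|=q^{k-\dim W}=|G|/|W|$, which gives the same conclusion.
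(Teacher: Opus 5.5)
Your proposal is correct and follows the paper's proof: write $V=a+W$, observe that $\widehat{1_V}(\chi)$ equals $\frac{|W|}{|G|}\overline{\chi(a)}$ on characters trivial on $W$ and vanishes otherwise, and then count those characters as $|G|/|W|$. The only difference is in that count. The paper obtains it as the kernel of the surjective restriction map $\widehat{G}\to\widehat{W}$, whereas you identify these characters with $\widehat{G/W}$; these are the same fact viewed dually.
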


\begin{proof}
We have $V=u+L$ for some vector $u\in G$ and some linear subspace $L\subseteq G$.
Let $H\subseteq \widehat{G}$ be the subgroup of characters vanishing on $L$, i.e., $H=\{\chi\in\widehat{G}: \chi|_L=1\}$. Note that $H$ is the kernel of the surjective map $\widehat{G}\to \widehat{L}$ that dualizes the inclusion $L\subseteq G$. So $|H|=|\widehat{G}|/|\widehat{L}|=|G|/|L|$.

For $\chi\in \widehat{G}$, we have
\begin{align*}
\widehat{1_V}(\chi)
&=\ex_{x\in G}\left[1_V(x)\overline{\chi(x)}\right] 
=\frac{1}{|G|}\sum_{x\in V}
\overline{\chi(x)}
=\frac{1}{|G|}\sum_{y\in L}
\overline{\chi(u+y)}\\
&=\frac{1}{|G|}\overline{\chi(u)}\sum_{y\in L}
\overline{\chi(y)}
=\frac{|L|}{|G|}\overline{\chi(u)}\ex_{y\in L}
\left[\overline{\chi(y)}\right].
\end{align*}
Since $\chi|_L$ is either the trivial character or nontrivial, we have
\[
\ex_{y\in L}
\left[\overline{\chi(y)}\right]=\begin{cases}
1, & \text{if $\chi|_L=1$, or equivalently, $\chi\in H$,}\\
0, & \text{otherwise}.
\end{cases}
\]
It follows that 
\[
\|\widehat{1_V}\|_1
=\sum_{\chi\in \widehat{G}} |\widehat{1_V}(\chi)|
=\sum_{\chi\in H} |\widehat{1_V}(\chi)|
=\frac{|H| |L|}{|G|}|\overline{\chi(u)}|
=\frac{|H| |L|}{|G|}=1. \qedhere
\]
\end{proof}

Recall that $B \subseteq \F_q^k$ is an affine $s$-blocking set if it intersects every affine subspace of codimension $s$.
The next theorem connects $\eps$-biased sets with affine and strong $s$-blocking sets.

\begin{theorem}\label{thm:biased-to-blocking} 
    Let $B\subseteq G$ be an $\eps$-biased set. Then the following statements hold:
    \begin{enumerate}
        \item\label{item:biased-1} If $\eps<q^{-s}$, then $B$ is an affine $s$-blocking set.
        \item\label{item:biased-2} If $\eps<\frac{q-1}{2q^{s+1}}$, then $\widetilde{B}$ is a strong $s$-blocking set.
    \end{enumerate}
\end{theorem}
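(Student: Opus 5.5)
Both parts follow from \cref{lem:epsfool} and \cref{lem:l1bound}, applied to indicators of affine subspaces or to differences of such indicators.

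For \cref{item:biased-1}, fix an affine subspace $V$ of codimension $s$. Then $\ex_{x\in G}[1_V(x)]=q^{-s}$. Since $\|\widehat{1_V}\|_1=1$ by \cref{lem:l1bound}, \cref{lem:epsfool} gives
\[
\ex_{x\in B}[1_V(x)]\ge q^{-s}-\eps>0 .
\]
Hence $B\cap V\neq\emptyset$.

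For \cref{item:biased-2}, suppose $\widetilde B$ is not a strong $s$-blocking set. Lifting to $\F_q^k$, there is a linear subspace $L$ of codimension $s$ and a linear subspace $H\subseteq L$ of codimension $s+1$ with $B\cap L\subseteq H$. The plan is to compare $B$ against the set $L\setminus H$, which has density $q^{-s}-q^{-s-1}=\frac{q-1}{q^{s+1}}$ in $G$.

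Its indicator is $f=1_L-1_H$. By \cref{lem:l1bound} and the triangle inequality, $\|\widehat f\|_1\le 2$. Applying \cref{lem:epsfool} to $f$,
\[
\ex_{x\in B}[f(x)]\ge \frac{q-1}{q^{s+1}}-2\eps>0
\]
under the hypothesis $\eps<\frac{q-1}{2q^{s+1}}$. So some $x\in B$ lies in $L\setminus H$, contradicting $B\cap L\subseteq H$.

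Note that any such $x$ is nonzero, since $0\in H$. Therefore $\langle x\rangle\in\widetilde B$ lies in the projective subspace of $L$ but not in that of $H$, which is the required contradiction.

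The only step needing care is the reduction from the spanning condition to the existence of $(L,H)$. If $\widetilde B\cap\widetilde L$ does not span $\widetilde L$, then the span of $B\cap L$ is a proper subspace of $L$, so it lies in some hyperplane $H$ of $L$, which has codimension $s+1$ in $G$. This is the same reduction as in the proof of \cref{thm:lre-to-sbs}. The rest is routine.
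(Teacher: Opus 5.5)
Your proposal is correct and matches the paper's proof: both parts apply \cref{lem:epsfool} with $\|\widehat{1_V}\|_1=1$ for the first item and $\|\widehat{1_L-1_H}\|_1\le 2$ for the second, and use $0\in H$ to ensure the point found is nonzero. The only difference is that you explicitly justify the reduction from the spanning condition to a pair $(L,H)$ with $H$ a hyperplane of $L$, which the paper covers with ``it suffices to show.''
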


\begin{proof}
For \cref{item:biased-1}, assume $\eps<q^{-s}$. Let $L$ be an affine subspace of codimension $s$. By \cref{lem:epsfool,lem:l1bound},
\[
\left|\frac{|B\cap L|}{|B|}-q^{-s}\right|
=
\left|\ex_{x\in B}[1_L(x)]-\ex_{x\in G}[1_L(x)]\right|
\le \eps\|\widehat{1_L}\|_1
=\eps.
\]
Hence
\[
\frac{|B\cap L|}{|B|}\ge q^{-s}-\eps>0,
\]
so $B\cap L\neq\emptyset$. Since this holds for every affine subspace $L$ of codimension $s$, it follows that $B$ is an affine $s$-blocking set.

For \cref{item:biased-2}, assume $\eps<\frac{q-1}{2q^{s+1}}$. It suffices to show that for every codimension-$s$ linear subspace $L\subseteq \F_q^k$ and every hyperplane $H$ of $L$, we have
\[
\widetilde{B}\cap(\widetilde{L}\setminus \widetilde{H})\neq\emptyset.
\]
Writing $1_{L\setminus H}=1_L-1_H$, \cref{lem:l1bound} and the triangle inequality give
\[
\|\widehat{1_{L\setminus H}}\|_1\le \|\widehat{1_L}\|_1+\|\widehat{1_H}\|_1=2.
\]
Applying \cref{lem:epsfool} with $f=1_{L\setminus H}$, we obtain
\[
\left|\frac{|B\cap (L\setminus H)|}{|B|} - \frac{q^{k-s}-q^{k-s-1}}{q^k}\right|
=
\left|\ex_{x\in B}[1_{L\setminus H}(x)]-\ex_{x\in G}[1_{L\setminus H}(x)]\right|
\le 2\eps.
\]
Therefore,
\[
\frac{|B\cap (L\setminus H)|}{|B|}\ge q^{-s}-q^{-s-1}-2\eps>0,
\]
and hence $B\cap (L\setminus H)\neq\emptyset$. Since $0\in H$, this implies $(B\setminus\{0\})\cap (L\setminus H)\neq\emptyset$,
and therefore
$\widetilde{B}\cap(\widetilde{L}\setminus \widetilde{H})\neq\emptyset$.
This proves \cref{item:biased-2}.
\end{proof}

\cref{thm:biased-to-blocking} shows that explicit constructions of $\eps$-biased sets with sufficiently small bias yield explicit constructions of affine and strong $s$-blocking sets. We now briefly recall the explicit constructions of $\eps$-biased sets over $\mathbb{F}_q^k$ that are relevant for our purposes. For $q=2$, a breakthrough result of Ta-Shma~\cite{Ta17} shows that there exist explicit $\eps$-biased sets $S\subseteq\mathbb{F}_2^k$ with
$|S|=O\left(\frac{k}{\eps^{2+o(1)}}\right)$.
Jalan and Moshkovitz~\cite{JM21} extended Ta-Shma's construction to every prime power $q$ as follows.

\begin{theorem}[{\cite{JM21}}]\label{thm:explicit-eps}
There exists an explicit $\eps$-biased set $S\subseteq\mathbb{F}_q^k$ with
\[
|S|\le \frac{C_1 k (\log q)^{C_2}}{\eps^{2+\alpha}},
\]
where
\[
\alpha\le C_3\left(\frac{\log\log(1/\eps)}{\log(1/\eps)}\right)^{1/3}=o_{1/\eps}(1),
\]
and $C_1,C_2,C_3>0$ are absolute constants.
\end{theorem}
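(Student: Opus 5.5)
This theorem is quoted from \cite{JM21}, and in the paper we would simply cite it. If we wanted to reprove it, the plan is to rerun Ta-Shma's construction \cite{Ta17} over $\F_q$. The guiding observation is that Ta-Shma's analysis uses only one property of the characters of $\F_2$: they are unimodular multiplicative functions. Every character of $\F_q^k$ shares this property.

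We first rephrase the goal in coding language. A multiset $S=\{s_1,\dots,s_n\}\subseteq\F_q^k$ is $\eps$-biased exactly when the linear code
\[
C_S=\{(\langle a,s_i\rangle)_{i\in[n]} : a\in\F_q^k\}
\]
satisfies the following condition. For every nonzero $a$ and every nontrivial additive character $\psi$ of $\F_q$, we have $|\ex_i\,\psi(\langle a,s_i\rangle)|\le\eps$. Every nontrivial character of $\F_q^k$ has the form $x\mapsto\psi(\langle a,x\rangle)$, which justifies this reformulation.

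The construction then has two steps.

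\textbf{Base code.} We take an explicit base set $S_0\subseteq\F_q^k$ of size $O(k\,\poly(\log q))$ with some constant bias $\eps_0$ small enough. One way is to concatenate an explicit asymptotically good code over a larger field with an inner $\eps_0$-biased set. This is the source of the $(\log q)^{C_2}$ factor in the statement.

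\textbf{Amplification.} We index walks on an explicit expander by Ta-Shma's $s$-wide replacement product. Each walk $(v_1,\dots,v_t)$ becomes the point $s_{v_1}+\dots+s_{v_t}\in\F_q^k$.

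For the analysis, fix a nontrivial character $\chi$. Since $\chi$ is multiplicative, $\chi(\sum_j s_{v_j})=\prod_j\chi(s_{v_j})$. Therefore the bias of the amplified set equals a quadratic form $\langle \mathbf 1,(\Pi W)^{t}\mathbf 1\rangle$ up to normalization. Here $W$ is the walk operator and $\Pi=\mathrm{diag}(\chi(s_v))$ is a diagonal operator.

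The key point is that Ta-Shma's norm bounds for the products $\Pi W\cdots$ use only two facts. First, $\Pi$ is a unitary diagonal matrix. Second, the normalized trace of $\Pi$ is at most $\eps_0$ in absolute value. Both facts hold for complex unimodular characters, not just $\pm1$-valued ones. So the bias decays as $\eps\approx\lambda^{(1-o(1))t}$, and the final size is $|S_0|\cdot\eps^{-(2+\alpha)}$.

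The main obstacle is re-checking Ta-Shma's parameter tuning in this setting. This includes the choice of $s$, the wide-replacement graph degrees, and the resulting error exponent $\alpha=O((\log\log(1/\eps)/\log(1/\eps))^{1/3})$. We would also need to make sure none of his steps silently uses real-valued, self-adjoint $\Pi$. Where such a step appears, we would first try to replace eigenvalue arguments by operator-norm and singular-value bounds, which remain valid for unitary $\Pi$.
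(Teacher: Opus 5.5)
Your proposal agrees with the paper, which gives no proof of this statement and simply imports it from \cite{JM21}; your optional sketch, which reruns Ta-Shma's $s$-wide replacement amplification with the diagonal operator $\Pi=\mathrm{diag}(\chi(s_v))$ now unitary instead of $\pm1$-valued, is in the spirit of how \cite{JM21} extend \cite{Ta17}. If you ever flesh it out, the ingredient you leave unspecified is the base set: the bias fed into the wide-replacement step must be $\eps^{o(1)}$-small (it is tied to the inner graph's expansion) rather than an absolute constant, and for large $q$ the $(\log q)^{C_2}$ factor requires an explicit small-bias set in $\F_q$ itself of size polylogarithmic in $q$, which is exactly what the inner set of your concatenation would have to supply.
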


Combining \cref{thm:biased-to-blocking} and \cref{thm:explicit-eps}, we obtain the main result of this section.

\begin{proposition}[Formal version of \cref{thm:blocking-eps-informal}]\label{prop:explicit-biased-to-explicit-blocking}
For every pair of positive integers $s<k$ and every prime power $q>1$, there exist an explicit affine $s$-blocking set in $\F_q^k$ and an explicit strong $s$-blocking set in $\PG(k-1,q)$, both of size at most
\[
C_1\,k\,(\log q)^{C_2}\,q^{\,2s + C_3 s \left(\frac{\log(s\log q)}{s\log q}\right)^{1/3}},
\]
where $C_1,C_2,C_3>0$ are absolute constants.
\end{proposition}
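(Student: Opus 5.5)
The plan is to instantiate \cref{thm:explicit-eps} with a bias $\eps$ just below the thresholds of \cref{thm:biased-to-blocking}, and then simplify the resulting size bound. Both objects come from the same explicit $\eps$-biased set. For the affine $s$-blocking set we need $\eps<q^{-s}$. For the strong $s$-blocking set we need $\eps<\frac{q-1}{2q^{s+1}}$, and since $q\ge 2$ this holds whenever $\eps\le q^{-s}/4$, because $\frac{q-1}{2q^{s+1}}\ge \frac{1}{4q^s}$. So we fix $\eps=q^{-s}/8$ and let $S\subseteq\F_q^k$ be the explicit $\eps$-biased set from \cref{thm:explicit-eps}.

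By \cref{thm:biased-to-blocking}\,\eqref{item:biased-1}, $S$ is an affine $s$-blocking set. By \cref{thm:biased-to-blocking}\,\eqref{item:biased-2}, $\widetilde{S}$ is a strong $s$-blocking set, and $|\widetilde S|\le |S|$. Explicitness is inherited from \cref{thm:explicit-eps}: projectivizing means discarding $0$ and identifying scalar multiples, which takes polynomial time.

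It remains to bound $|S|\le C_1 k(\log q)^{C_2}\eps^{-(2+\alpha)}$. Here $1/\eps=8q^s$, so $\eps^{-2}=64q^{2s}$, and the constant $64$ is absorbed into $C_1$. For the exponent, write $x=\log(1/\eps)=s\log q+\log 8$, which is $\Theta(s\log q)$ because $s\log q\ge\log 2$. The function $\log\log x/\log x$ is bounded for $x$ bounded below, and it is decreasing once $x$ is large. Comparing it at $x$ and at $s\log q$ up to constants therefore gives
\[
\alpha\le C_3'\left(\frac{\log(s\log q)}{s\log q}\right)^{1/3}.
\]
In the small range where $\log(s\log q)$ could be nonpositive or where $s\log q$ is small, $\alpha$ is bounded by an absolute constant anyway, and we can absorb this by enlarging $C_1$. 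Finally, $\eps^{-\alpha}=(8q^s)^{\alpha}\le 8^{O(1)}q^{s\alpha}$, which gives the claimed exponent $2s+C_3 s\left(\frac{\log(s\log q)}{s\log q}\right)^{1/3}$.

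The main obstacle is this last constant-chasing step. The bound on $\alpha$ in \cref{thm:explicit-eps} is stated in terms of $\log(1/\eps)$, not $s\log q$, and it must be converted uniformly over all $s\ge1$ and $q\ge2$. In particular, the regime where $s\log q$ is a small constant needs care, because there $\log(s\log q)$ can be tiny or negative. We handle that regime by bounding $\alpha$ by a constant and enlarging $C_1$. Everything else is a direct combination of the earlier results.
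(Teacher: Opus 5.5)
Your proposal is correct and matches the paper's proof. It picks $\eps=\Theta(q^{-s})$ below both thresholds of \cref{thm:biased-to-blocking}, takes the explicit set from \cref{thm:explicit-eps}, and uses $|\widetilde S|\le|S|$; your conversion of the $\alpha$ bound into the stated exponent is more careful than the paper's, which simply asserts the bound, though with $x=\log(1/\eps)$ the relevant function is $\log x/x$ (not $\log\log x/\log x$), and since it is decreasing for $x\ge e$, the inequality $x\ge s\log q$ gives the comparison directly.
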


\begin{proof}
Choose $\eps=\Theta(q^{-s})$ sufficiently small so that $\eps<q^{-s}$ and $\eps<\frac{q-1}{2q^{s+1}}$. Let $B\subseteq\F_q^k$ be an explicit $\eps$-biased set of size at most $C_1\,k\,(\log q)^{C_2}\,q^{\,2s + C_3 s \left(\frac{\log(s\log q)}{s\log q}\right)^{1/3}}$,
whose existence is guaranteed by \cref{thm:explicit-eps}. Note that $|\widetilde{B}|\le |B|$. Applying \cref{thm:biased-to-blocking}, we conclude that $B$ is an affine $s$-blocking set and that $\widetilde{B}$ is a strong $s$-blocking set.
\end{proof}

\section{Concluding Remarks and Open Problems}

We conclude by highlighting several open problems that arise naturally from our work.

\begin{enumerate}
\item A natural open problem is to explicitly construct, for
$1\leq r\leq t\leq k$, codimension-$t$ subspaces of $\F_q^k$ forming a
strong $(r,A)$ subspace design with $A=O\left(\frac{rk}{t-r+1}\right)$
over a field whose size is bounded independently of $k$. The present paper treats the case $t=r$, where counting vanishing multiplicities gives strong designs with the same asymptotic parameters. Our current methods, however, do not exploit the extra codimension when $t>r$, and in particular do not obtain the improvement by a factor of $t-r+1$. Obtaining such a gap-dependent bound is crucial for applications such as the recent NC algorithm for bipartite matching~\cite{CGGRT26} and subspace-design codes \cite{chen2025explicit, Goyal2025OptimalPG, GGH26}.

For $k=q^d-1$, Guruswami, Xing, and Yuan~\cite{guruswami2018subspace} obtain $A=O\left(\frac{d r k}{t-r+1}\right)$,
which matches the target bound up to a constant when $d$ is fixed, but incurs an
additional factor of $\log_q k$ over a fixed field. Goyal, Guruswami, and Hsieh~\cite{GGH26} construct additive codes over an alphabet $\F_q^m$, where $m=\poly(r,1/\eps)q^{r^2}$,
whose coordinate kernels satisfy near-optimal strong intersection bounds for test subspaces of dimension at most $r$. Thus, their result applies when the codimension $t$ may be much larger than $r$, but does not address the range where $t$ is close to $r$.

\item The gap between the best-known upper and lower bounds on the sizes of strong and affine $s$-blocking sets remains roughly a factor of $O(s)$. Closing this gap would be of significant interest.

\item For every sufficiently large non-prime prime power $q$, we constructed explicit strong $s$-blocking sets of size $O(skq^s)$, matching the best-known existential bound up to an absolute constant factor. In contrast, when $q$ is prime or a small prime power, our explicit constructions have a worse dependence on $s$ in the leading coefficient. It would be very interesting to obtain improved explicit constructions in these cases as well.

\item In particular, when $q=O(1)$, do there exist explicit strong $s$-blocking sets of size $O(C(s) k q^s)$ with $C(s)$ subexponential in $s$? Note that \cref{thm:blocking-eps-informal} achieves $C(s)=\exp(O(s))$ in this regime.

\item When $q=O(1)$, do there exist explicit $(k,r,L)$ weak lossless rank extractors of size $n$ over $\F_q$ such that $1-\frac{L}{n}\geq 2^{-o(r)}$?
Applying the field reduction in \cref{thm:ReverseExtension} gives
$1-\frac{L}{n}\geq 2^{-O(r\log\log r)}$.
In contrast, \cref{thm:existence} shows that there exist non-explicit strong lossless rank extractors with
$1-\frac{L}{n}=\Omega(1)$ (and in fact $1-\frac{L}{n}=1-o_q(1)$).
\end{enumerate}

\section*{Acknowledgments} We thank Abhranil Chatterjee, Sumanta Ghosh, Rohit Gurjar, Ran Tao, Ben Lee Volk, and Chaoping Xing for helpful discussions. An early discussion of this project took place at the 2025 Oberwolfach workshop “New Mathematical Directions in Coding Theory,” and we thank the institute for its hospitality and the organizers for the opportunity to participate.

\bibliographystyle{alphaurl}
\bibliography{ref}

\appendix
\crefalias{section}{appendix}

\section{Existence of Non-Explicit Strong Lossless Rank Extractors}\label{sec:existence}

The main result of this section is the following theorem, establishing the existence of non-explicit lossless strong rank extractors with good parameters over any finite field $\F_q$.

\begin{theorem}\label{thm:existence}
There exists a function $\delta:\{2,3,4,\dots\}\to (0,1)$ such that
$\lim_{q\to\infty}\delta(q)=0$ and the following holds.
For every prime power $q$ and all integers $k>r>0$,
there exist $(k,r,L)$ strong lossless rank extractors over $\F_q$ of size $n$ such that
$L\leq (1+o_q(1))r(k-r)$ and $L/n\le \delta(q)$.
\end{theorem}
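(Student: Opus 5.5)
The plan is the probabilistic method: take $E_1,\dots,E_n$ independent and uniform in $\F_q^{r\times k}$, bound the upper tail of the total rank deficiency for each fixed test matrix with a moment generating function, and union bound over test subspaces. Throughout, $L$ and $n$ are chosen as functions of $q,r,k$, with $n\approx L/\delta(q)$.

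\textbf{Step 1 (reduction to subspaces).} Fix a full-rank $M\in\F_q^{k\times r}$. The quantity $\rank(E_iM)$ depends only on $W=\colspan(M)$. The number of such $r$-dimensional subspaces is $\qb{k}{r}\le 4q^{r(k-r)}$. So it suffices to show that, for each fixed $M$, the sum $X_M:=\sum_{i=1}^n (r-\rank(E_iM))$ exceeds $L$ with probability less than $\tfrac14 q^{-r(k-r)}$.

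\textbf{Step 2 (single-matrix distribution).} The map $E\mapsto EM$ is a surjective linear map $\F_q^{r\times k}\to\F_q^{r\times r}$. Hence $E_iM$ is a uniform random $r\times r$ matrix, and $Z_i:=r-\rank(E_iM)$ is the corank of a uniform square matrix. Counting matrices of a given rank gives the standard estimate $\Pr[Z_i= j]\le C_0\,q^{-j^2}$ for an absolute constant $C_0$. Consequently, for $\lambda=(1-\eta)\ln q$ with $\eta\in(0,1)$,
\[
\ex\bigl[e^{\lambda Z_i}\bigr]\le 1+C_0\sum_{j\ge1}q^{(1-\eta)j-j^2}\le 1+C\,q^{-\eta}\le \exp\bigl(Cq^{-\eta}\bigr).
\]

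\textbf{Step 3 (Chernoff and union bound).} By independence and Markov's inequality,
\[
\Pr[X_M> L]\le \exp\bigl(nCq^{-\eta}\bigr)\,q^{-(1-\eta)L}.
\]
Set $n=\lceil L/\delta\rceil$ with $\delta=q^{-\eta/2}$. Then $nCq^{-\eta}\le 2CLq^{-\eta/2}$. Choose $\eta=\eta(q)\to0$ slowly, e.g.\ $\eta=(\ln q)^{-1/2}$, so that $q^{-\eta/2}\to0$ while $\eta\ln q\to\infty$. Then for large $q$ the exponent is at most
\[
-L\bigl((1-\eta)\ln q-2Cq^{-\eta/2}\bigr)\le -(1-2\eta)L\ln q .
\]
Take $L=\lceil(1+4\eta)\,r(k-r)\rceil+c$ for a suitable constant $c$. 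This makes $(1-2\eta)L\ln q$ exceed $r(k-r)\ln q+\ln 4$, so the union bound over at most $4q^{r(k-r)}$ subspaces succeeds. It yields $L\le(1+o_q(1))r(k-r)$ and $L/n\le\delta(q)\to0$.

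\textbf{Step 4 (small $q$, the main obstacle).} For bounded $q$ the estimate $Cq^{-\eta}$ is not small, so the choice above breaks down. Here I would fix $\eta=1/2$ and take $\delta$ a constant $<1$. The bound $\ex[e^{\lambda Z_i}]\le 1+C'$ then costs $nC'\approx C'L/\delta$ in the exponent. This is absorbed by letting $L=K_q\,r(k-r)$ for a sufficiently large constant $K_q$ and raising $\lambda$ toward $\ln q$ (or taking $\lambda$ larger using the Gaussian tail $q^{-j^2}$). Because the $o_q(1)$ term only needs to vanish as $q\to\infty$, a constant factor $K_q$ is permitted for finitely many small $q$. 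Thus $\delta(q)$ is defined piecewise: a fixed constant in $(0,1)$ for $q$ below a threshold, and $q^{-\eta(q)/2}$ above it. The care needed is in verifying this bookkeeping, including integrality of $L$ and the additive constant when $r(k-r)$ is small.
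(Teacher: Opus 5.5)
\textbf{Verdict: there is a gap in Step 4.} Your Steps 1--3 are sound once $q$ exceeds some absolute threshold. However, the theorem needs $\delta(q)<1$ for \emph{every} $q\ge2$, and Step 4, which must cover all smaller $q$, does not work as described.

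\emph{Enlarging $L$ does not help.} Since $n\ge L/\delta$ and $\phi(\lambda):=\ln\ex[e^{\lambda Z}]\ge0$, the logarithm of your Markov bound is at least $L\bigl(\phi(\lambda)/\delta-\lambda\bigr)$. Taking $L=K_q r(k-r)$ only absorbs the union-bound term $r(k-r)\ln q$. It never absorbs the cost $n\phi(\lambda)$, which grows linearly in $L$ just like the gain $\lambda L$. What you actually need is the $L$-independent inequality $\ex[e^{\lambda Z}]\le e^{\delta\lambda}$ for some $\delta<1$, uniformly in $r$. Neither of your two suggestions delivers it.

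\emph{The crude tail bound is too lossy.} Take $q=2$ and $\lambda=\tfrac12\ln 2$. For large $r$ the true value of $\ex[\sqrt2^{\,Z}]$ is about $1.377$, against the threshold $\sqrt2\approx1.414$. But $\Pr[Z=1]\to 2\prod_{i\ge1}(1-2^{-i})\approx0.578$, so any valid $C_0$ in $\Pr[Z=j]\le C_0q^{-j^2}$ exceeds $1$. The $j=1$ term alone then gives $1+C_0 2^{-1/2}>\sqrt2$.

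\emph{Raising $\lambda$ is counterproductive.} We have $\ex[2^Z]=2-2^{-r}$. Since $\phi$ is convex with $\phi(0)=0$, the ratio $\phi(\lambda)/\lambda$ is nondecreasing. Hence every $\lambda\ge\ln 2$ forces $\delta>\log_2(2-2^{-r})$, which tends to $1$ as $r\to\infty$. So the small-$q$ regime is the heart of the matter, not bookkeeping.

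\emph{The missing idea} is the exact first-moment identity $\ex[q^{r-\rank A}]=\ex[|\ker A|]=\sum_{x\in\F_q^r}\Pr[Ax=0]=2-q^{-r}$ for uniform $A\in\F_q^{r\times r}$. For every $q\ge3$ it gives $\ex[q^{Z}]\le q^{\log_q 2}$ with $\log_q2<1$. Markov with base $q$ then works with $\delta(q)=(\log_q 2)^{1/2}$, $n=\lceil L/\delta(q)\rceil$ and $L\approx r(k-r)/(1-\delta(q))$. This covers all $q\ge3$ at once, so your Gaussian-tail estimate and the tuning of $\eta(q)$ become unnecessary.

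The only delicate case is $q=2$. There the paper uses base $\sqrt2$ and bounds $\sqrt{x}$ by its tangent line at $x=2$. Combining this with $\ex[2^Z]\le 2$ and $\Pr[Z=0]=\prod_{i=1}^r(1-2^{-i})\ge e^{-2}$ yields $\ex[\sqrt2^{\,Z}]\le\sqrt2-e^{-2}(\sqrt2-1-2^{-3/2})<\sqrt2$, uniformly in $r$.

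A smaller point: the additive constant $c$ in your Step 3 choice of $L$ breaks $L\le(1+o_q(1))r(k-r)$ when $r(k-r)$ is bounded. Applying Markov at threshold $L+1$, using that $X_M$ is integer-valued, lets you take $L=\lfloor(1+4\eta)r(k-r)\rfloor$ instead.
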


The proof is based on the probabilistic method. A similar result was obtained in \cite[Section~5.3]{For14}, but for a weaker notion of lossless rank extractors, which coincides with our notion of lossless rank dispersers (\cref{defi:disperser}).

We need the following lemmas.

\begin{lemma}\label{lem:ineq}
Let $q\geq 2$. Let $r>0$ be an integer. Then $\prod_{i=1}^r (1-q^{-i})\geq e^{-2/(q-1)}$.
\end{lemma}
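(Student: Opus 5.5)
Since every factor $1-q^{-i}$ lies in $(0,1)$, I will take logarithms and prove the equivalent inequality
\[
\sum_{i=1}^r -\ln(1-q^{-i}) \le \frac{2}{q-1}.
\]
The plan is to bound each term by an elementary estimate on $-\ln(1-x)$ for $x\in(0,1/2]$, and then sum the resulting geometric series.

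For the termwise estimate, set $x=q^{-i}$. Because $q\ge 2$ and $i\ge 1$, we always have $x\le 1/2$. On this range I will use the inequality $-\ln(1-x)\le 2x$. This can be checked in either of two ways. The first is concavity: the function $\phi(x)=2x+\ln(1-x)$ is concave, with $\phi(0)=0$ and $\phi(1/2)=1-\ln 2>0$, so $\phi\ge 0$ on $[0,1/2]$. The second is the series expansion $-\ln(1-x)=\sum_{m\ge1}x^m/m\le x/(1-x)\le 2x$.

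Summing the termwise bound over $i$ gives
\[
\sum_{i=1}^r -\ln(1-q^{-i})\le 2\sum_{i=1}^\infty q^{-i}=\frac{2}{q-1},
\]
and exponentiating yields the claim. No step here is a real obstacle. The only point needing care is to use $x\le 1/2$ to get the constant $2$. The cruder bound $x/(1-x)$ applied directly would give $\sum_i q^{-i}/(1-q^{-i})$, which is still at most $\frac{1}{1-1/q}\cdot\frac{1}{q-1}\le \frac{2}{q-1}$, so that route also works.
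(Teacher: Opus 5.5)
Your proposal is correct and follows essentially the same route as the paper: take logarithms, bound $\ln(1-q^{-i})\ge -2q^{-i}$ termwise using $q^{-i}\le 1/2$, and sum the geometric series to get $-2/(q-1)$. The only difference is that the paper proves $\ln(1-x)\ge -2x$ on $[0,1/2]$ by showing $f(x)=\ln(1-x)+2x$ has $f(0)=0$ and $f'(x)=\frac{1-2x}{1-x}\ge 0$, whereas you use concavity or the series bound; either is fine.
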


\begin{proof}
For \(x\in[0,1/2]\), we claim that $\log(1-x)\ge -2x$.
Indeed, let $f(x):=\log(1-x)+2x$. Then
$f'(x)=2-\frac{1}{1-x}=\frac{1-2x}{1-x}\ge 0$
for all \(x\in[0,1/2]\). Since \(f(0)=0\), it follows that \(f(x)\ge 0\) for \(x\in[0,1/2]\), proving the claim.

Applying the claim to $x=q^{-i}$ gives $\log(1-q^{-i})\ge -2q^{-i}$ for $i\geq 1$.
Summing over \(i\in [r]\), we obtain
\[
\sum_{i=1}^r \log(1-q^{-i})
\ge -2\sum_{i=1}^r q^{-i}
\ge -2\sum_{i=1}^\infty q^{-i}
= -2/(q-1).
\]
Exponentiating both sides proves the lemma.
\end{proof}

\begin{lemma}\label{lem:expbound}
For $q\geq 2$, define
\[
z_q=\begin{cases}
q & q>2\\
q^{1/2} & q=2
\end{cases}
\]
and
\[
\beta_q=\begin{cases}
\log_q 2 & q>2\\
2\log_{2}(2^{1/2}-e^{-2}(2^{1/2}-1-2^{-3/2})) & q=2
\end{cases}
\]
so that $z_q>1$ and $0<\beta_q<1$.
Then for a uniformly distributed random matrix $A\in\F_q^{r\times r}$ over a finite field $\F_q$, we have 
\[
\ex\left[z_q^{r-\rank(A)}\right]\leq z_q^{\beta_q}.
\]
\end{lemma}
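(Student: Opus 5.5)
The plan rests on one identity: $q^{\,r-\rank(A)}=|\ker A|$, the number of $x\in\F_q^r$ with $Ax=0$. For a uniformly random $A$ and a fixed nonzero $x$, the vector $Ax$ is uniform on $\F_q^r$. Summing over all $x\in\F_q^r$ then gives
\[
\ex\bigl[q^{\,r-\rank(A)}\bigr]=\ex|\ker A| = 1+(q^r-1)q^{-r}<2 .
\]

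\textbf{The case $q>2$.} Here $z_q=q$ and $z_q^{\beta_q}=q^{\log_q 2}=2$. The display above already gives $\ex[q^{r-\rank(A)}]<2$, so this case is done.

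\textbf{The case $q=2$: linearize.} Now $z_2=2^{1/2}$. Put $c=r-\rank(A)$ and $t=2^{c}=|\ker A|\in\{1,2,4,\dots\}$, so we must bound $\ex[\sqrt{t}]$. Cauchy--Schwarz would give $\ex\sqrt t\le\sqrt{\ex t}<\sqrt2$, but that is only $\beta=1$, so we need a strict saving. We take the chord $\ell(t)$ of $\sqrt t$ through $t=2$ and $t=4$:
\[
\ell(t)=(2\sqrt2-2)+\tfrac{2-\sqrt2}{2}\,t .
\]
Since $\sqrt t$ is concave, it lies below this chord outside $[2,4]$. It agrees with the chord at $t=2,4$, so $\sqrt t\le\ell(t)$ for every $t\in\{2,4,8,\dots\}$. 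At $t=1$ the gap is $\ell(1)-1=\tfrac32\sqrt2-2$, which equals $2\bigl(2^{1/2}-1-2^{-3/2}\bigr)>0$.

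\textbf{The case $q=2$: take expectations.} This gives
\[
\ex\sqrt t\le \ex\,\ell(t)-\Pr[c=0]\bigl(\ell(1)-1\bigr)=\ell(\ex t)-\Pr[c=0]\bigl(\ell(1)-1\bigr).
\]
Because $\ell$ is increasing and $\ex t<2$, we get $\ell(\ex t)<\ell(2)=\sqrt2$. For the second term, $\Pr[c=0]=\prod_{i=1}^r(1-2^{-i})\ge e^{-2}$ by \cref{lem:ineq} with $q=2$. Hence
\[
\ex\bigl[z_2^{\,c}\bigr]\le \sqrt2-e^{-2}\cdot 2\bigl(2^{1/2}-1-2^{-3/2}\bigr),
\]
which is at most the quantity $2^{1/2}-e^{-2}(2^{1/2}-1-2^{-3/2})=z_2^{\beta_2}$ by the definition of $\beta_2$. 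Note that $\beta_2=2\log_2\bigl(2^{1/2}-e^{-2}(\cdots)\bigr)$, so $z_2^{\beta_2}=2^{1/2}-e^{-2}(\cdots)$ exactly.

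\textbf{Remaining checks.} We must verify that $z_q>1$ and $0<\beta_q<1$. For $\beta_2$ this amounts to checking numerically that $1<2^{1/2}-e^{-2}(0.0607\ldots)<2^{1/2}$. The only delicate step is the $q=2$ case, where a strict improvement over Cauchy--Schwarz is needed. The chord trick supplies it, with the saving coming from the probability $\ge e^{-2}$ that $A$ is invertible. If the stated constant in $\beta_2$ were tighter than our chord gives, we would instead use the exact value $\Pr[c=1]$ together with the chord through $t=4,8$; but the chord above already yields a factor of $2$ of slack.
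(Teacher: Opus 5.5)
Your proof is correct and follows the same route as the paper: you use the identity $\ex[q^{r-\rank(A)}]=\ex|\ker A|=2-q^{-r}$, an affine majorant of $\sqrt{t}$ on $t\in\{2,4,8,\dots\}$, and a saving at $t=1$ weighted by $\Pr[\rank(A)=r]\ge e^{-2}$ from \cref{lem:ineq}. The only difference is that you take the chord through $t=2$ and $t=4$ where the paper takes the tangent line at $t=2$; this doubles the saving term, so you prove slightly more than the stated bound.
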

\begin{proof}
We have
\begin{equation}\label{eq:kernelsize}
\ex\left[q^{r-\rank(A)}\right]
=\ex\left[|\ker(A)|\right]
=\sum_{x\in \F_q^r} \Pr[Ax=0]
=1+\sum_{0\neq x\in \F_q^r} \Pr[Ax=0]
=1+(q^r-1)q^{-r}=2-q^{-r}.
\end{equation}
When $q>2$, this gives
\[
\ex\left[z_q^{r-\rank(A)}\right]
=\ex\left[q^{r-\rank(A)}\right]=2-q^{-r}\leq 2=z_q^{\beta_q},
\]
as claimed.

Now consider the case where $q=2$. 
Let $p=\Pr[\rank(A)=r]$.
By \cref{lem:ineq}, we have
\begin{equation}\label{eq:rankdeficiency}
p=\frac{\prod_{j=0}^{r-1}(q^r-q^j)}{q^{r^2}}
=\prod_{i=1}^r(1-q^{-i})
\geq e^{-2/(q-1)}=e^{-2}.
\end{equation}

As the function $f(x)=x^{1/2}$ is concave, its curve is below the tangent line at any point $x>0$.
In particular, for $x>0$,
\[
x^{1/2}=f(x)\leq f(2)+f'(2)(x-2)=2^{1/2}+\frac{x-2}{2^{3/2}}.
\]
Substituting $x=2^{r-\rank(A)}$ yields
\[
z_q^{r-\rank(A)}\leq 2^{1/2}+\frac{2^{r-\rank(A)}-2}{2^{3/2}}.
\]
Therefore,
\begin{equation}\label{eq:exp1}
\begin{aligned}
\ex\left[z_q^{r-\rank(A)}\right]
&=p+(1-p)\ex\left[z_q^{r-\rank(A)}\mid \rank(A)<r\right]\\
&\leq p+(1-p)2^{1/2}+2^{-3/2}(1-p)\ex\left[2^{r-\rank(A)}-2\mid \rank(A)<r\right].
\end{aligned}
\end{equation}
By \eqref{eq:kernelsize}, we have
\[
2\geq \ex\left[q^{r-\rank(A)}\right]=p+(1-p)\ex\left[2^{r-\rank(A)}\mid \rank(A)<r\right],
\]
or equivalently,
\begin{equation}\label{eq:exp2}
(1-p)\ex\left[2^{r-\rank(A)}-2\mid \rank(A)<r\right]\leq p.
\end{equation}
Combining \eqref{eq:exp1} and \eqref{eq:exp2} shows that
\begin{align*}
\ex\left[z_q^{r-\rank(A)}\right]
\leq p+(1-p)2^{1/2}+2^{-3/2}p
&=2^{1/2}-p(2^{1/2}-1-2^{-3/2})\\
&\stackrel{\eqref{eq:rankdeficiency}}{\leq} 2^{1/2}-e^{-2}(2^{1/2}-1-2^{-3/2})\\
&= z_q^{\beta_q}.\qedhere
\end{align*}

\end{proof}

We now prove \cref{thm:existence}.

\begin{proof}[Proof of \cref{thm:existence}]
Let $z_q$ and $\beta_q$ be as in \cref{lem:expbound}.
Let $M\in \F_q^{k\times r}$ be a full-rank matrix.
Let $E_1,\dots,E_n$ be independent and uniformly distributed over $\F_q^{r\times k}$. Then $E_1M,\dots,E_n M$ are independent and
uniformly distributed over $\F_q^{r\times r}$. Therefore,
\begin{equation}\label{eq:random-E}
\begin{aligned}
\Pr\left[\sum_{i=1}^n(r-\rank(E_iM))>L\right]
&=\Pr\left[z_q^{\sum_{i=1}^n(r-\rank(E_iM))}\geq z_q^{L+1}\right]\\
&\leq z_q^{-(L+1)}\ex\left[z_q^{\sum_{i=1}^n(r-\rank(E_iM))}\right]\\
&=z_q^{-(L+1)}\prod_{i=1}^n \ex\left[z_q^{r-\rank(E_iM))}\right]\\
&\leq z_q^{\beta_q n-(L+1)},
\end{aligned}
\end{equation}
where the last inequality follows from \cref{lem:expbound}.

Let $S$ be the set of full-rank matrices in $\F_q^{k\times r}$.
The group $\GL(r,q)$ acts on $S$ by right multiplication, and each orbit has size
$\prod_{j=0}^{r-1}(q^r-q^j)$. Let $M_1,\dots,M_t\in S$ be a complete set of orbit representatives.
Then
\begin{equation}\label{eq:eqk}
t
=\frac{|S|}{\prod_{j=0}^{r-1}(q^r-q^j)}
=\prod_{j=0}^{r-1}\frac{q^k-q^j}{q^r-q^j}
\le q^{r(k-r)}\prod_{i=1}^r\frac{1}{1-q^{-i}}
\le q^{r(k-r)}e^{2/(q-1)}
\end{equation}
by \cref{lem:ineq}.

Assume that
\begin{equation}\label{eq:m-constraint}
q^{r(k-r)}e^{2/(q-1)}\cdot
z_q^{\beta_q n-(L+1)}<1.
\end{equation}
Then \eqref{eq:random-E}, \eqref{eq:eqk}, and a union bound over
$M_1,\dots,M_t$ imply that there exist matrices $E_1,\dots,E_n\in\F_q^{r\times k}$
such that $\sum_{i=1}^n(r-\rank(E_iM))\le L$ for every $M\in\{M_1,\dots,M_t\}$.
So $\{E_1,\dots,E_n\}$ is a $(k,r,L)$ strong lossless rank extractor.

It remains to choose $\delta(\cdot)$, $L$, and $n$ so that \eqref{eq:m-constraint} holds and the stated bounds are satisfied.
Let $\delta(q)=\beta_q^{1/2}$. Then $\lim_{q\to\infty} \delta(q)=\lim_{q\to\infty}(\log_q 2)^{1/2}=0$.

Taking logarithms, \eqref{eq:m-constraint} is equivalent to
\begin{equation}\label{eq:m-constraint-2}
r(k-r)\ln q+\frac{2}{q-1}
+(\beta_q n-(L+1))\ln z_q<0.
\end{equation}

Let $n=\lceil L/\delta(q)\rceil$, so that $L/n\leq \delta(q)$. Note that $n\leq L/\delta(q)+1=L\beta_q^{-1/2}+1$. So \eqref{eq:m-constraint-2} holds as long as
\begin{equation}\label{eq:m-constraint-3}
r(k-r)\ln q+\frac{2}{q-1}< (1-\beta_q^{1/2}) \ln z_q \cdot L
\end{equation}
holds.

For $q=2$, \eqref{eq:m-constraint-3} holds for sufficiently large $L=O(r(k-r))$.
For $q>2$, using the fact that $\beta_q=\log_q 2$ and $z_q=q$, we see that \eqref{eq:m-constraint-3} again holds for sufficiently large $L=(1+o_q(1))r(k-r)$.
\end{proof}

\section{$\delta$-Hitting Sets for Symbolic Determinants with Rank-One Summands}
\label{sec:hittingsetVBP1}

In this section, we prove \cref{lem:alphaHittingset}, which constructs explicit $\delta$-hitting sets for symbolic determinants with rank-one summands in $\F_q^N$ over a sufficiently large field $\F_q$. For convenience, we first restate the lemma.

\getkeytheorem{hittingset}

Let $N\geq 1$ be an integer, let $x=\{x_1,\dots,x_N\}$ be variables, and let $\F$ be a field. 
Consider symbolic matrices of the form
\[
A(x)=\sum_{i\in [N]} x_i A_i\in\F[x_1,\dots,x_N]^{r\times r},
\]
where $A_i\in \F^{r\times r}$ and $\rank(A_i)\leq 1$ for all $i\in [N]$. Since each $A_i$ has rank at most one, it can be written as $u_iv_i^T$ for some vectors $u_i,v_i\in \F^r$. Let $U,V\in \F^{r\times N}$ such that for $i\in [N]$, the $i$-th column of $U$ and that of $V$ are $u_i$ and $v_i$, respectively. Let $X$ be the $N\times N$ diagonal matrix whose $i$-th diagonal entry is the variable $x_i$ for $i\in [N]$. It can be easily shown that $A=UXV^T$. By the Cauchy--Binet formula,
\[
\det(A)=\det(UXV^T)=\sum_{S\subseteq [N],|S|=r} \det((UX)_S)\det(V_S)=\sum_{S\subseteq [N],|S|=r} \det(U_S)\det(V_S)\prod_{i\in S}x_i.
\]
For a subset $S\subseteq [N]$, let $x_S$ denote $\prod_{i\in S}x_i$. Note that the coefficient of $x_S$ in $\det(A)$ is nonzero if and only if $\det(U_S)$ and $\det(V_S)$ are nonzero. In other words, the coefficient of $x_S$ in $\det(A)$ is nonzero 
if and only if $S$ is a common base of the linear matroids represented by the matrices $U$ and $V$. We denote these two matroids by $\mathcal{M}^A_1$ and $\mathcal{M}^A_2$.
For $A$ as above, define the support
\[
\supp(A):=\{ S\subseteq [N] \;\mid \; \text{the coefficient of $x_S$ in $\det(A)$ is nonzero} \}.
\]
From the above discussion, $\supp(A)$ is the set of common bases of the linear matroids represented by $U$ and $V$.

For $u,v\in \Z^t$, write $u>v$ if $u$ is lexicographically larger than $v$.

We also need the notion of \emph{isolating weight assignments}.

\begin{definition}[Isolating weight assignments]
Let $t\le N$, and let $A=\sum_{i\in [N]} x_i A_i$ with $A_i\in\F^{r\times r}$ and $\rank(A_i) \leq 1$ for $i\in [N]$. 
Let $w=(w_1,\dots,w_N)\in (\Z_{\ge 0}^t)^N$. 
For $S\subseteq [N]$, define $w(S):=\sum_{i\in S} w_i \in \Z_{\geq 0}^t$.
We say that $w$ is \emph{isolating} for $A$ if there exists a unique set $S\in \supp(A)$ maximizing $w(S)$ lexicographically.
A set $\mathcal W\subseteq (\Z_{\ge 0}^t)^N$ is isolating for $A$ if it contains such a vector $w$.
\end{definition}

The importance of isolating weight assignments is captured by the following claim.

\begin{claim}\label{cl:isolationImpliesnon-zero}
Let $A(x)=\sum_{i\in [N]} x_i A_i$ with $\rank(A_i)\leq 1$ and $\det(A)\neq 0$. 
Let $w\in (\Z_{\ge 0}^t)^N$ be isolating for $A$. 
Let $y=\{y_1,\dots,y_t\}$ be variables, and define $A_w(y)\in\F[y_1,\dots,y_t]^{r\times r}$ by substituting
\[
x_i \mapsto \prod_{j\in [t]} y_j^{w_i[j]},
\]
where $w_i[j]$ denotes the $j$-th coordinate of $w_i$.
Then $\det(A_w(y))\neq 0$.
\end{claim}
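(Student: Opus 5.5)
We compute $\det(A_w(y))$ directly from the monomial expansion of $\det(A)$ derived above. Substituting $x_i\mapsto y^{w_i}:=\prod_{j\in[t]}y_j^{w_i[j]}$ sends each multilinear monomial $x_S$ to $\prod_{i\in S}y^{w_i}=y^{w(S)}$, where $y^{a}$ denotes $\prod_j y_j^{a[j]}$ for $a\in\Z_{\ge0}^t$. Since the substitution is a ring homomorphism $\F[x]\to\F[y]$ and the determinant is a polynomial in the entries, we have $\det(A_w(y))=(\det A)(y^{w_1},\dots,y^{w_N})$. This gives
\[
\det(A_w(y))=\sum_{S\in\supp(A)} c_S\, y^{w(S)},\qquad c_S=\det(U_S)\det(V_S)\neq 0 .
\]

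The issue is that distinct $S$ can map to the same monomial $y^{w(S)}$, so the corresponding terms may cancel. Isolation rules this out at the top. Let $S^*\in\supp(A)$ be the unique set maximizing $w(S)$ lexicographically. This set exists because $\det(A)\neq 0$, which makes $\supp(A)$ nonempty, and because $w$ is isolating for $A$.

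For any other $S\in\supp(A)$ we have $w(S)<w(S^*)$ lexicographically, and in particular $w(S)\neq w(S^*)$ as exponent vectors. Hence the monomial $y^{w(S^*)}$ receives contributions only from $S^*$. Its coefficient in $\det(A_w(y))$ is therefore exactly $c_{S^*}=\det(U_{S^*})\det(V_{S^*})$, which is nonzero because $S^*\in\supp(A)$. It follows that $\det(A_w(y))\neq 0$.

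The only point needing care is that uniqueness of the lex-maximizer implies that no other $S$ shares the exponent vector $w(S^*)$. This is immediate, since two sets with equal weight vectors would both be lex-maximal. The remaining steps are routine, so I do not expect a real obstacle.
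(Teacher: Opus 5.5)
Your proposal is correct and follows the same route as the paper: expand $\det(A)=\sum_{S\in\supp(A)}c_S x_S$, note that the substitution sends $x_S$ to $y^{w(S)}$, and use uniqueness of the lexicographic maximizer $S^*$ to conclude that the monomial $y^{w(S^*)}$ has the nonzero coefficient $c_{S^*}$. Your version just spells out the details the paper leaves implicit, namely that $c_S=\det(U_S)\det(V_S)$, that the substitution is a ring homomorphism, and that $\supp(A)\neq\emptyset$.
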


\begin{proof}
Write $\det(A)=\sum_{S\in \supp(A)} c_S x_S$ with $c_S\neq 0$. 
Under the substitution, the monomial $x_S$ maps to $\prod_{j\in [t]} y_j^{w(S)[j]}$. 
Since $w$ is isolating for $A$, there is a unique subset $S\subseteq [N]$ maximizing $w(S)$ lexicographically, so the corresponding monomial $\prod_{j\in [t]} y_j^{w(S)[j]}$ is not canceled by the others.
\end{proof}

We follow the construction of Gurjar and Thierauf~\cite{DBLP:journals/cc/GurjarT20}, which yields isolating weight assignments $w$ with $t=1$, i.e.,  $\det(A_w(y))$ is univariate. However, the resulting weights can be quasi-polynomial in $N$, which would force us to work over a finite field $\F_q$ of size quasi-polynomial in $N$. 
To avoid this, we adapt their method and set $t=\Theta(\log N)$, which allows us to use polynomially bounded weights.

As mentioned earlier,  $\supp(A)$ is a set of common bases of two linear matroids. We take a slight detour and discuss the result of \cite{DBLP:journals/cc/GurjarT20} on the construction of weight isolation assignments for matroid intersection. 

\paragraph{Matroid intersection polytopes.}
Let $\mathcal{M}_1$ and $\mathcal{M}_2$ be two matroids of rank $r$ on common ground set $E$ with rank function $r_1$ and $r_2$. Then, the convex hull of characteristic vectors of common bases in $\mathbb{R}^{|E|}$, denoted by $P(\mathcal{M}_1,\mathcal{M}_2)$, is described by the following linear program:
\[
x_e\geq 0 \text{ for each } e\in E, \quad
x(S)=\sum_{e\in S}x_e\leq r_i(S) \text{ for each } S\subseteq E,\ i\in \{1,2\}, \quad
x(E)=r.
\]

The following lemma \cite[Lemma~3.5]{DBLP:journals/cc/GurjarT20} gives a characterization for the tight inequalities in the LP for faces of the matroid intersection polytope.
\begin{lemma}\label{lem:FaceMI}
Let $\mathcal{M}_1$ and $\mathcal{M}_2$ be two matroids on a common ground set $E$ with rank functions $r_1$ and $r_2$, respectively. Let $F$ be a face of $P(\mathcal{M}_1,\mathcal{M}_2)$. Then, there exist two partitions $\mathcal{P}_1,\mathcal{P}_2$ of $E$ such that for any $S_1\in \mathcal{P}_1$ and $S_2\in \mathcal{P}_2$, there exists $\nu_{S_1},\mu_{S_2}\in \mathbb{N}$ such that for each $x\in F$, $x(S_1)=\nu_{S_1}$ and $x(S_2)=\mu_{S_2}$.

Moreover,\begin{enumerate}
    \item If for some $T\subseteq E$, $x(T)=r_1(T)$ for all $x\in F$, or $x(T)=r_2(T)$ for all $x\in F$, then $T$ is the union of sets from $\mathcal{P}_1$, respectively $\mathcal{P}_2$.
    \item If for some $e\in E$, $x_e=0 $ for all $x\in F$, then  there is a $S_1\in \mathcal{P}_1$ and $S_2\in \mathcal{P}_2$ such that $S_1=S_2=\{e\}$ and $\nu_{S_1}=\mu_{S_2}=0$.
\end{enumerate}
\end{lemma}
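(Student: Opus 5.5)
The plan is an uncrossing argument: for each $i\in\{1,2\}$, show that the sets tight for $r_i$ on all of $F$ form a lattice, and take $\mathcal{P}_i$ to be the partition of $E$ into the atoms of that lattice. I then refine both partitions by splitting off the coordinates that vanish identically on $F$.

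\textbf{Step 1 (tight sets are closed under $\cup$ and $\cap$).} Fix $i\in\{1,2\}$ and let
\[
\mathcal{T}_i=\{T\subseteq E:\ x(T)=r_i(T)\text{ for all }x\in F\}.
\]
We have $\emptyset\in\mathcal{T}_i$ because $r_i(\emptyset)=0$. We have $E\in\mathcal{T}_i$ because every $x\in F$ satisfies $x(E)=r$, and $r_i(E)=r$ since both matroids have rank $r$. Now let $A,B\in\mathcal{T}_i$ and $x\in F$. Modularity of $x(\cdot)$, submodularity of $r_i$, and feasibility of $x$ give
\[
x(A\cup B)+x(A\cap B)=x(A)+x(B)=r_i(A)+r_i(B)\ge r_i(A\cup B)+r_i(A\cap B)\ge x(A\cup B)+x(A\cap B).
\]
So equality holds throughout. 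Combined with feasibility, this forces $x(A\cup B)=r_i(A\cup B)$ and $x(A\cap B)=r_i(A\cap B)$, so $A\cup B,\,A\cap B\in\mathcal{T}_i$.

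\textbf{Step 2 (building $\mathcal{P}_i$).} Since $\mathcal{T}_i$ is closed under intersection and contains $E$, every $e\in E$ has a smallest tight set $T_e\in\mathcal{T}_i$ containing it. Declare $e\sim e'$ when $T_e=T_{e'}$, and let $\mathcal{P}_i$ be the set of equivalence classes.

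Each $T\in\mathcal{T}_i$ is a union of classes. Indeed, if $e\in T$ then $T_e\subseteq T$, and every $e'$ with $T_{e'}=T_e$ lies in $T_e\subseteq T$. This proves the first item of the "moreover" part.

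For the constancy claim, fix the class $C$ of $e$ and let
\[
U=\bigcup\{T_{e'}:\ T_{e'}\subsetneq T_e\},
\]
which is tight by Step 1. I claim $C=T_e\setminus U$.
\begin{itemize}
\item If $e'\in T_e\setminus U$, then $T_{e'}\subseteq T_e$, and $T_{e'}$ cannot be a proper subset, since otherwise $e'\in U$. So $e'\sim e$.
\item Conversely, if $e'\sim e$ but $e'\in U$, then $e'\in T_{e''}\subsetneq T_e$ for some $e''$. This gives $T_{e'}\subseteq T_{e''}\subsetneq T_e$, a contradiction.
\end{itemize}
Hence for every $x\in F$ we get $x(C)=r_i(T_e)-r_i(U)$, a fixed nonnegative integer. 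So the values $\nu_{S_1}$ and $\mu_{S_2}$ exist.

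\textbf{Step 3 (zero coordinates).} Let $Z=\{e:\ x_e=0\text{ for all }x\in F\}$. Refine both $\mathcal{P}_1$ and $\mathcal{P}_2$ by splitting each $e\in Z$ off its block into the singleton $\{e\}$. The singleton $\{e\}$ has constant value $0$. The remainder of its old block has constant value equal to the old constant, since the removed coordinates contribute $0$. Refinement keeps every tight set a union of blocks, so item 1 survives, and item 2 now holds by construction.

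I expect Step 2 to be the only delicate point: the partition must come from atoms of the lattice rather than from a single maximal chain. A chain does not guarantee that every tight set is a union of blocks, while the atom partition does, and the argument above shows its blocks still have constant mass on $F$.
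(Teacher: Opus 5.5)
Your proof is correct, and it is essentially the standard uncrossing argument behind this lemma. The paper does not prove the lemma itself; it imports it from \cite[Lemma~3.5]{DBLP:journals/cc/GurjarT20}, and your route is the same: tight sets for each $r_i$ are closed under $\cup$ and $\cap$, the partition is read off this lattice, and identically-zero coordinates are split off as singletons.

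One correction concerns your closing remark, which is not used in the proof. A \emph{maximal} chain $\emptyset=A_0\subsetneq A_1\subsetneq\cdots\subsetneq A_m=E$ in $\mathcal{T}_i$ gives exactly your partition. For a covering pair $A\subsetneq B$ in $\mathcal{T}_i$ and any $e\in B\setminus A$, the set $A\cup T_e$ is tight and lies strictly between $A$ and $B$ or equals $B$, so $A\cup T_e=B$. Hence $T_{e'}=T_e$ for every $e'\in B\setminus A$, and $B\setminus A$ is exactly one of your classes. So only non-maximal chains can fail to make every tight set a union of blocks.
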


\begin{definition}\label{def:cycle}
Let $F$ be a face of the polytope $P(\mathcal{M}_1,\mathcal{M}_2)$ with the partitions $\mathcal{P}_1,\mathcal{P}_2$  from \cref{lem:FaceMI}. A sequence
$C = (e_1,e_2,\dots,e_{2\ell})$ of distinct elements of $E$ is called a \emph{cycle}
with respect to face $F$, if consecutive pairs are alternately in a set
from $\mathcal{P}_1$ and a set from $\mathcal{P}_2$. That is, for $i \in [\ell]$,
\begin{align*}
      e_{2i-1},e_{2i} &\in S_{2i-1} \text{ for some } S_{2i-1} \in \mathcal{P}_1, \\
      e_{2i},e_{2i+1} &\in S_{2i} \text{ for some } S_{2i}\in \mathcal{P}_2.
\end{align*}
\end{definition}

We denote the set of all cycles with respect to a face $F$ as $\mathcal{C}_F$. For a vector $u\in \Z^E$, let $F$ be the face of $P(\mathcal{M}_1,\mathcal{M}_2)$ that maximizes $u^T\cdot y$ over all the points $y$ in the polytope. Then, we denote $\mathcal{C}_F$ by $\mathcal{C}_u$. For a bipartite graph $G$ with edge set $E$ and a vector $u\in \Z^E$, the \emph{circulation} of a cycle $C=(e_1,e_2,\dots,e_{2\ell})$ with respect to $u$, denoted by $u(C)$, is defined by
\[
u(C):=\left|\sum_{i=1}^{2\ell} (-1)^i u[e_i]\right|.
\]

The following lemma summarizes key properties of $\mathcal{C}_u$. Additionally, it establishes a sufficient condition for a weight assignment on the ground set to guarantee the existence of a unique maximum-weight common base. Such a weight assignment is said to be \emph{isolating} for the two matroids.

\begin{lemma}[\cite[Corollary~3.10 and Lemma~3.14]{DBLP:journals/cc/GurjarT20}] \label{lem:nocycleIsolating}
Let $\mathcal{M}_1$ and $\mathcal{M}_2$ be two matroids defined on a common ground set $E$ of size $N$, and let $u\in \mathbb{Z}^E$.
\begin{itemize}
    \item If $\mathcal{C}_{u}$ contains no cycles of length at most $c$, then the number of cycles in $\mathcal{C}_{u}$ of length at most $2c$ is at most $N^4$.
    \item If $\mathcal{C}_{u}$ is empty, then $u$ is isolating for $\mathcal{M}_1$ and $\mathcal{M}_2$.
\end{itemize}
\end{lemma}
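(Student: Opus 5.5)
The plan is to prove the two items in reverse order. The second item is a short polyhedral argument. The first is a counting argument in the style of the Teo--Subramani / Fenner--Gurjar--Thierauf bound on the number of near-shortest cycles. Throughout, let $F$ be the face of $P(\mathcal{M}_1,\mathcal{M}_2)$ maximizing $u^T y$, and let $\mathcal{P}_1,\mathcal{P}_2$ be the partitions given by \cref{lem:FaceMI}.

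\textbf{Second item.} Suppose $u$ is not isolating. Then there are two distinct common bases $B_1,B_2$ both maximizing $u$, so their characteristic vectors $\chi_{B_1},\chi_{B_2}$ lie in $F$. Set $y=\chi_{B_1}-\chi_{B_2}\in\{-1,0,1\}^E$; this vector is nonzero. Since $x(S)$ is constant on $F$ for every $S\in\mathcal{P}_1\cup\mathcal{P}_2$, we get $y(S)=0$ for every such class.

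Now build an alternating walk. Start at $e_1$ with $y_{e_1}=1$, and let $S\in\mathcal{P}_1$ be its class. Since $y(S)=0$, the class $S$ contains some $e_2$ with $y_{e_2}=-1$. The $\mathcal{P}_2$-class of $e_2$ in turn contains some $e_3$ with $y_{e_3}=+1$, and so on, alternating between $\mathcal{P}_1$- and $\mathcal{P}_2$-classes. The walk lives in the finite support of $y$, so it eventually revisits an element. Take the first repeated element and cut out the closed segment between its two occurrences. The signs of $y$ alternate along the walk, so this segment has even length and gives a sequence of distinct elements that is a cycle in the sense of \cref{def:cycle}. Hence $\mathcal{C}_u\neq\emptyset$, a contradiction. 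The only care needed is at the closing pair: the last step must be a $\mathcal{P}_2$-step back to the start. The sign alternation guarantees the parity is correct, since every $+1$ element is entered via a $\mathcal{P}_2$-class and left via a $\mathcal{P}_1$-class.

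\textbf{First item.} Form the auxiliary structure on $E$ with two kinds of adjacency: $e\sim_1 e'$ if $e,e'$ lie in a common class of $\mathcal{P}_1$, and $e\sim_2 e'$ if they lie in a common class of $\mathcal{P}_2$. Alternating paths and cycles are then defined with respect to these two relations. Assume there is no cycle of length at most $c$, and let $C$ be a cycle of length $2\ell\le 2c$. Choose four elements $a_1,a_2,a_3,a_4$ on $C$ that split it into four arcs, each of length at most $\lceil c/2\rceil$. Claim: each arc is the unique shortest alternating path between its endpoints with the prescribed starting and ending types. If some other alternating path $Q$ between the same endpoints, with the same types and length at most that of the arc, existed, then the union of $Q$ and the arc would have length at most $c$. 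One then extracts from this union a genuine cycle (distinct elements, correct alternation) of length at most $c$, contradicting the hypothesis. Given the claim, $C$ is determined by the ordered $4$-tuple $(a_1,\dots,a_4)$ together with the types at the endpoints; the types are in fact forced by the arc lengths. This gives at most $N^4$ cycles of length at most $2c$, up to the harmless constant coming from endpoint type choices and relabelling, which can be absorbed by pinning $a_1$ at a canonical position.

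\textbf{Main obstacle.} The hard step is the cycle extraction in the first item: taking the symmetric difference or union of two distinct alternating paths and producing a cycle with distinct elements, correct alternation, and length at most $c$. The difficulty is that adjacency here comes from equivalence classes rather than edges, so two different paths can ``shortcut'' through a shared class. I would first handle this by contracting each class of $\mathcal{P}_1$ and of $\mathcal{P}_2$ to a vertex of a bipartite multigraph whose edges are the elements of $E$. Under this contraction, cycles in the sense of \cref{def:cycle} become simple even cycles in that bipartite graph, and alternating paths become ordinary paths. The standard argument then applies: two distinct walks between the same endpoints contain a simple cycle of length at most their total length.
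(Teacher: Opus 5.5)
Your proposal is correct and is essentially the Gurjar--Thierauf argument that the paper cites without proof. The second item is their observation that a nonzero vector vanishing on every class of $\mathcal{P}_1\cup\mathcal{P}_2$ (here $\chi_{B_1}-\chi_{B_2}$) supports an alternating cycle. The first item is the Fenner--Gurjar--Thierauf four-anchor count, run on the bipartite multigraph whose vertices are the classes and whose edges are the elements of $E$.

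Two bookkeeping points need tightening. First, a cycle in the sense of \cref{def:cycle} may revisit a class, so after contraction it is a closed trail rather than automatically a simple cycle. Under the hypothesis it is nevertheless simple: a repeated class would split a cycle of length at most $2c$ into two cycles, one of length at most $c$. Second, the arcs must have length at most $c/2$, not $\lceil c/2\rceil$, for an arc and a competing path to close up a cycle of length at most $c$. For even $c$ you can arrange this by placing all four anchors at positions of the same parity. That choice also fixes the endpoint types, which are not determined by the $4$-tuple alone, contrary to your remark that they are forced by the arc lengths.
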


The following lemma combines Claim~3.18 and Lemma~3.19 of~\cite{DBLP:journals/cc/GurjarT20}.

\begin{lemma}\label{lem:GT1}
Let $\mathcal{M}_1,\mathcal{M}_2$ be two matroids on a common ground set $E$ of size $N$. 
Let $t=\lceil \log_2 N\rceil$, and let $H$ be a positive integer. 
Let $W_0\in \Z^N$ be the all-ones vector.

For each $i\in [t]$, let $u_i\in \Z_{\geq 0}^N$ satisfy $H > N \max_{j\in [N]} u_i[j]$ and
\[
u_i(C)\neq 0 \quad \text{for every cycle $C$ in $\mathcal{C}_{W_{i-1}}$ of length at most $2^i$.}
\]
Define $W_i = \sum_{j=1}^i H^{\,i-j} u_j$.
Then $\mathcal{C}_{W_i}$ contains no cycles of length at most $2^i$ for every $i\in [t]$. Moreover, $W_t$ is an isolating weight assignment for the set of common bases of $\mathcal{M}_1$ and $\mathcal{M}_2$.
\end{lemma}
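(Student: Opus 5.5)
The plan is to prove both claims by induction on $i$, using the recursion $W_i = H\,W_{i-1}+u_i$ (with $W_0$ the all-ones vector). For $i\in\{0,\dots,t\}$, let $F_i$ be the face of $P(\mathcal M_1,\mathcal M_2)$ maximizing $W_i^T y$. The argument rests on three facts, which I would establish in this order:
\begin{enumerate}
\item[(a)] \emph{Nesting:} $F_i\subseteq F_{i-1}$.
\item[(b)] \emph{Monotonicity of cycles:} if $F'\subseteq F$ are faces, then $\mathcal C_{F'}\subseteq \mathcal C_F$.
\item[(c)] \emph{Zero circulation:} if $F$ is the face maximizing $w$, then $w(C)=0$ for every $C\in\mathcal C_F$.
\end{enumerate}

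Given these, the induction step is short. Take a cycle $C\in\mathcal C_{W_i}$ of length at most $2^i$. By (a) and (b), $C\in\mathcal C_{W_{i-1}}$. Applying (c) to $W_{i-1}$ and to $W_i$ gives
\[
\sum_j(-1)^jW_{i-1}[e_j]=0 \quad\text{and}\quad \sum_j(-1)^jW_i[e_j]=0 .
\]
The alternating sum is linear, so subtracting $H$ times the first identity from the second yields $\sum_j(-1)^ju_i[e_j]=0$, i.e.\ $u_i(C)=0$. This contradicts the hypothesis on $u_i$. Hence $\mathcal C_{W_i}$ has no cycles of length at most $2^i$.

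For $i=t$, every cycle consists of distinct elements of $E$, so it has length at most $N\le 2^t$. Therefore $\mathcal C_{W_t}=\emptyset$, and the second item of \cref{lem:nocycleIsolating} shows that $W_t$ is isolating. The base case needs no separate treatment: since $x(E)=r$ on the whole polytope, $F_0$ is the entire polytope.

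For (a), I would compare two vertices (common bases) $B,B'$. The difference $W_{i-1}(B)-W_{i-1}(B')$ is an integer, and $|u_i(B)-u_i(B')|\le r\max_j u_i[j]<H$ because $r\le N$. So if $W_{i-1}(B)>W_{i-1}(B')$, then $W_i(B)>W_i(B')$. It follows that every $W_i$-maximizing vertex is $W_{i-1}$-maximizing, and so $F_i\subseteq F_{i-1}$.

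For (b), a smaller face satisfies more tight constraints. By \cref{lem:FaceMI}, the sets on which $x(\cdot)$ is constant then give finer partitions $\mathcal P_1,\mathcal P_2$; I would either take the finest partitions or use the construction in \cite{DBLP:journals/cc/GurjarT20} to ensure this. Under finer partitions, consecutive cycle elements that share a class of the finer partition also share a class of the coarser one, so $C$ remains a cycle for $F$.

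The main obstacle is (c). My plan is to use LP duality/complementary slackness. Since $F$ is the optimal face for $w$, $w$ lies in the cone generated by the constraint vectors that are tight on all of $F$. Concretely,
\[
w=\mu\,1_E+\sum_T\lambda_T1_T-\sum_{e}\rho_e 1_{\{e\}},
\]
where the sets $T$ satisfy $x(T)=r_1(T)$ or $x(T)=r_2(T)$ on $F$, and each $e$ in the last sum satisfies $x_e\equiv0$ on $F$. It then suffices to show that each of these vectors has zero alternating sum along $C$.
\begin{itemize}
\item Each tight $T$ is a union of classes of $\mathcal P_1$ (or of $\mathcal P_2$) by \cref{lem:FaceMI}. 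In a cycle, every class of $\mathcal P_1$ contains the elements of $C$ in consecutive pairs $(e_{2i-1},e_{2i})$, and every class of $\mathcal P_2$ contains them in pairs $(e_{2i},e_{2i+1})$. These pairs contribute $+1-1$, so they cancel.
\item $1_E$ gives an alternating sum of zero since $|C|$ is even.
\item For a zero coordinate $e$, the class $\{e\}$ is a singleton in both partitions, so $e$ cannot lie on any cycle.
\end{itemize}
This handles all terms. I expect this duality bookkeeping, together with pinning down the partitions so that (b) holds, to be the delicate part.
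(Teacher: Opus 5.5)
Your proof is correct and is essentially the argument behind Claim~3.18 and Lemma~3.19 of \cite{DBLP:journals/cc/GurjarT20}, which the paper cites instead of reproving: optimal faces are nested because $H$ dominates, cycles of a subface remain cycles of the larger face, and a weight has zero circulation on every cycle of its own optimal face. Your complementary-slackness derivation of the last fact is fine, and the recursion $W_i=HW_{i-1}+u_i$ failing literally at $i=1$ is harmless, since the discrepancy is a multiple of $\mathbf{1}$ and $x(E)=r$ on the polytope.

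The only thing to tighten is (b). The partitions must be the canonical \emph{coarsest} ones compatible with \cref{lem:FaceMI}, not the ``finest'' ones, which need not be well defined. Concretely, two elements share a class of $\mathcal{P}_i$ iff no set that is $r_i$-tight on all of $F$ separates them, and elements forced to $0$ on $F$ are split off as singletons. Tight sets and forced zeros of $F$ remain so on every subface $F'\subseteq F$, so these partitions refine as the face shrinks, which is exactly what (b) requires.
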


For a symbolic matrix $A$ with rank-one summands, if we apply the above lemma for matroids $\mathcal{M}^A_1$ and $\mathcal{M}^A_2,$ we get an $N$-tuple of vectors in one dimension that is isolating for $A$. The following corollary directly follows from the above lemma and gives an $N$-tuple of vectors in $t=\lceil \log_2 N\rceil$ dimension that is isolating for $A$.

\begin{corollary}\label{cor:GTA}
Let $A(x)=\sum_{i\in [N]} x_i A_i$ with $\rank(A_i)\le 1$ for all $i\in[N]$, and suppose $\det(A)\neq 0$. 
Let $t=\lceil \log_2 N\rceil$, and let $H$ be a positive integer. 
Let $W_0\in \Z^N$ be the all-ones vector. 
Let $\mathcal{M}^A_1$ and $\mathcal{M}^A_2$ be matroids such that $\supp(A)$ is the set of their common bases.

For each $i\in [t]$, let $u_i\in \Z_{\geq 0}^N$ satisfy $H > N \max_{j\in [N]} u_i[j]$ and
\[
u_i(C)\neq 0 \quad \text{for every cycle $C$ in $\mathcal{C}_{W_{i-1}}$ of length at most $2^i$.}
\]
Define $W_i = \sum_{j=1}^i H^{\,i-j} u_j$.
Then $\mathcal{C}_{W_i}$ contains no cycles of length at most $2^i$ for all $i\in[t]$.

Moreover, for each $i\in [N]$, define $w_i\in \Z_{\geq 0}^t$ by $w_i[j]=u_j[i]$ for all $j\in[t]$. Then $w=(w_1,\dots,w_N)\in (\Z_{\geq 0}^t)^N$ is isolating for $A$.
\end{corollary}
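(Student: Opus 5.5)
The first part of \cref{cor:GTA} is a direct application of \cref{lem:GT1} to the pair of linear matroids $\mathcal{M}^A_1,\mathcal{M}^A_2$ represented by $U$ and $V$. These are matroids on the common ground set $[N]$. Because $\det(A)\neq 0$, the set $\supp(A)$ is nonempty, so both matroids have rank $r$. The hypotheses on $u_1,\dots,u_t$, $H$, and $W_i$ are exactly those of \cref{lem:GT1}. That lemma therefore gives that $\mathcal{C}_{W_i}$ contains no cycles of length at most $2^i$ for every $i\in[t]$, and that $W_t$ is isolating for the common bases of $\mathcal{M}^A_1,\mathcal{M}^A_2$. Concretely, there is a unique $S\in\supp(A)$ maximizing the integer $W_t(S)=\sum_{e\in S}W_t[e]$.

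The remaining step converts isolation by the scalar weight $W_t$ into lexicographic isolation by the vector weight $w$. For $S\subseteq[N]$ we have $w(S)[j]=\sum_{e\in S}u_j[e]=u_j(S)$. Moreover
\[
W_t(S)=\sum_{j=1}^t H^{t-j}u_j(S).
\]
Since every set in $\supp(A)$ has size $r\le N$, each coordinate satisfies $0\le u_j(S)\le N\max_e u_j[e]<H$. So $W_t(S)$ is exactly the integer whose base-$H$ digits, from most to least significant, are $u_1(S),\dots,u_t(S)$, i.e.\ the coordinates of $w(S)$ in order. For nonnegative digit vectors with all digits below $H$, comparing the base-$H$ integers is the same as comparing the digit vectors lexicographically. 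Hence the map $S\mapsto W_t(S)$ on $\supp(A)$ is order-isomorphic to $S\mapsto w(S)$ under lexicographic order. In particular, the base-$H$ representation is injective, so $w(S)=w(S')$ if and only if $W_t(S)=W_t(S')$.

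It follows that the unique $W_t$-maximizer in $\supp(A)$ is also the unique lexicographic maximizer of $w(S)$. So $w$ is isolating for $A$. The only point needing care is the digit bound $u_j(S)<H$, which is exactly what the hypothesis $H>N\max_j u_i[j]$ provides, because $|S|\le N$.
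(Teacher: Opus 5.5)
Your proof is correct and follows the paper's argument: apply \cref{lem:GT1} to $\mathcal{M}^A_1,\mathcal{M}^A_2$ to get the cycle-freeness of $\mathcal{C}_{W_i}$ and the isolation of the common bases by $W_t$, then use the no-carry observation to transfer this to lexicographic isolation by $w$. Your check that each digit satisfies $u_j(S)\le N\max_e u_j[e]<H$ is the paper's ``no carry'' step written out in more detail.
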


\begin{proof}
By \cref{lem:GT1}, the weight vector $W_t = \sum_{j=1}^t H^{\,t-j} u_j$ admits a unique maximum-weight common base for $\mathcal{M}^A_1$ and $\mathcal{M}^A_2$.

For a subset $S\subseteq [N]$, write $w(S) = \sum_{i\in S} w_i \in \Z_{\geq 0}^t$, and compare such vectors in lexicographic order. By definition,
\[
W_t(S) = \sum_{j=1}^t H^{\,t-j} \sum_{i\in S} u_j[i].
\]
Since $H > N \max_{i,j} u_j[i]$, there is no carry between the coefficients, and therefore for any two subsets $S_1,S_2\subseteq [N]$,
\[
W_t(S_1) > W_t(S_2)
\quad \Longleftrightarrow \quad
w(S_1) > w(S_2)
\]
in lexicographic order.

Thus, the unique maximum-weight common base under $W_t$ is also the unique maximum under $w$, and hence $w$ is isolating for $A$.
\end{proof}

\begin{lemma} \label{lem:isolatingw}
Let $N$ be a positive integer, $t=\lceil \log_2 N\rceil$, and $0<\epsilon<1$. 
Let $\mathcal{U}=\{u_1,u_2,\dots,u_D\}\subseteq \Z_{\ge 0}^N$ be a collection such that for any graph $G$ with at most $N$ edges and any set $\mathcal{C}$ of at most $N^4$ cycles of $G$,
\[
\Pr_{u\sim \mathcal{U}}[u(C)\neq 0 \ \forall C\in \mathcal{C}] \ge 1-\epsilon.
\]

Define $\mathcal{W}\subseteq (\Z_{\ge 0}^t)^N$ as follows: for each $T=(a_1,\dots,a_t)\in [D]^t$, let $w^T=(w^T_1,\dots,w^T_N)$ with
\[
w^T_i[j]=u_{a_j}[i] \quad \text{for all } i\in[N],\ j\in[t].
\]
Then $\mathcal{W}=\{w^T : T\in [D]^t\}$.

Then, for any $A(x)=\sum_{i\in[N]} x_i A_i$ with $\rank(A_i)\le 1$ and $\det(A)\neq 0$, at least a $(1-\epsilon)^t$ fraction of $w\in \mathcal{W}$ are isolating for $A$.
\end{lemma}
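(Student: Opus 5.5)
The plan is to view a uniformly random $w^T\in\mathcal{W}$ as the result of choosing $a_1,\dots,a_t\in[D]$ independently and uniformly, then setting $u_j:=u_{a_j}$ for $j\in[t]$. We then run the iterative scheme of \cref{cor:GTA} one coordinate at a time. Fix $A$ with $\det(A)\neq 0$, and let $\mathcal{M}^A_1,\mathcal{M}^A_2$ be the two matroids on ground set $[N]$ whose common bases form $\supp(A)$. Pick $H$ larger than $N\max_{u\in\mathcal{U}}\max_j u[j]$; this choice does not depend on $T$, and $H$ enters only the analysis, not the construction. Set $W_0=\mathbf{1}$ and $W_i=\sum_{j\le i}H^{i-j}u_{a_j}$. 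For $i\in[t]$, call step $i$ \emph{good} if $u_{a_i}(C)\neq 0$ for every cycle $C\in\mathcal{C}_{W_{i-1}}$ of length at most $2^i$.

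The key steps are as follows. First, show by induction that if steps $1,\dots,i-1$ are good, then $\mathcal{C}_{W_{i-1}}$ contains no cycles of length at most $2^{i-1}$. This is exactly the first conclusion of \cref{lem:GT1}/\cref{cor:GTA} applied to the prefix $u_{a_1},\dots,u_{a_{i-1}}$. For $i=1$, note that cycles have length at least $2$. Second, apply the first item of \cref{lem:nocycleIsolating} with $c=2^{i-1}$: the set $\mathcal{C}_i$ of cycles in $\mathcal{C}_{W_{i-1}}$ of length at most $2^i$ has size at most $N^4$.

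Third, use the hypothesis on $\mathcal{U}$. The cycles here are sequences of distinct ground-set elements, and $u(C)$ is the alternating sum of the weights $u[e]$. This matches circulation in a graph whose edges are indexed by $[N]$ (which is what the hypothesis "any graph with at most $N$ edges" is meant to cover). I would state this identification explicitly, encoding $C$ as an even closed walk on at most $N$ edges. Since $\mathcal{C}_i$ depends only on $a_1,\dots,a_{i-1}$ and $a_i$ is independent of them, conditioned on steps $1,\dots,i-1$ being good, step $i$ is good with probability at least $1-\epsilon$. Multiplying these conditional probabilities gives $\Pr[\text{all } t \text{ steps good}]\ge(1-\epsilon)^t$.

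Finally, when all steps are good, the hypotheses of \cref{cor:GTA} hold for $u_1,\dots,u_t:=u_{a_1},\dots,u_{a_t}$. Its final conclusion says that the vector $w$ with $w_i[j]=u_{a_j}[i]$, which is exactly $w^T$, is isolating for $A$.

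The main obstacle is the third step. The cycles $\mathcal{C}_F$ are defined through partitions of the matroid ground set, not through a literal graph, so we must check that the hypothesis on $\mathcal{U}$ applies to them as stated. It also needs care that the bad event at step $i$ depends on the earlier random choices, which is why the argument must condition on them rather than use a plain union bound. Both points are handled by the sequential conditioning above and by encoding each cycle's alternating sign pattern as a graph cycle on the edge set $[N]$.
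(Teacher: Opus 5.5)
Your proposal is correct and follows essentially the same argument as the paper: sample $a_1,\dots,a_t$ sequentially, condition on the earlier choices, bound the relevant set of cycles by $N^4$ via \cref{lem:nocycleIsolating}, use the hypothesis on $\mathcal{U}$ to gain a factor $1-\epsilon$ per step, and conclude with \cref{lem:GT1} and \cref{cor:GTA}. Choosing a single $H$ independent of $T$ (the paper lets $H$ depend on all of $a_1,\dots,a_t$) is a mild improvement, since it makes it immediate that $\mathcal{C}_{W_{i-1}}$ depends only on $a_1,\dots,a_{i-1}$; and for the identification you flag, the natural single graph is the bipartite multigraph whose vertices are the parts of $\mathcal{P}_1$ and $\mathcal{P}_2$, with each $e\in[N]$ joining the two parts that contain it, rather than a separate encoding of each cycle.
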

 
\begin{proof}
Let $T=(a_1,a_2,\dots,a_t)\sim [D]^t$ be chosen uniformly and independently. 
Let $H$ be an integer greater than $N\max_{i\in [t],\,j\in [N]} u_{a_i}[j]$. 
Let $W_0\in \Z^N$ be the all-ones vector, and for each $i\in [t]$, define $W_i=\sum_{j=1}^i H^{\,i-j} u_{a_j}$.

Let $\mathcal{M}^A_1$ and $\mathcal{M}^A_2$ be the matroids such that $\supp(A)$ is the set of their common bases. 
By assumption on $\mathcal{U}$, for any set $\mathcal{C}$ of cycles of size at most $N^4$,
\[
\Pr_{u\sim \mathcal{U}}[u(C)\neq 0 \ \forall C\in \mathcal{C}] \ge 1-\epsilon.
\]

From \cref{lem:nocycleIsolating}, if $\mathcal{C}_{W_{i-1}}$ has no cycles of length at most $2^{i-1}$, then it contains at most $N^4$ cycles of length at most $2^i$. Hence, for each $i\in [t]$,
\[
\Pr_{a_i\sim[D]}\!\left[u_{a_i}(C)\neq 0 \ \forall C\in \mathcal{C}_{W_{i-1}},\ |C|\le 2^i
\ \middle|\
\mathcal{C}_{W_{i-1}}\text{ has no cycles of length }\le 2^{i-1}\right]\ge 1-\epsilon.
\]

We now prove by induction that for every $i\in\{0,1,\dots,t\}$,
\[
\Pr_{(a_1,\dots,a_i)\sim[D]^i}\big[\mathcal{C}_{W_i}\text{ has no cycles of length }\le 2^i\big]\ge (1-\epsilon)^i.
\]

The base case $i=0$ is trivial. For the inductive step,
\begin{align*}
&\Pr_{(a_1,\dots,a_i)}\big[\mathcal{C}_{W_i}\text{ has no cycles of length }\le 2^i\big] \\
&\quad \ge 
\Pr\Big[
u_{a_i}(C)\neq 0 \ \forall C\in \mathcal{C}_{W_{i-1}},\ |C|\le 2^i 
\ \text{and}\ 
\mathcal{C}_{W_{i-1}}\text{ has no cycles of length }\le 2^{i-1}
\Big] \\
&\quad =
\Pr\Big[
u_{a_i}(C)\neq 0 \ \forall C\in \mathcal{C}_{W_{i-1}},\ |C|\le 2^i
\ |\
\mathcal{C}_{W_{i-1}}\text{ has no cycles of length }\le 2^{i-1}
\Big] \\
&\qquad \times
\Pr\big[\mathcal{C}_{W_{i-1}}\text{ has no cycles of length }\le 2^{i-1}\big] \\
&\quad \ge (1-\epsilon)\cdot (1-\epsilon)^{i-1} = (1-\epsilon)^i,
\end{align*}
where the first inequality holds by \cref{lem:GT1}, and the last inequality follows from the conditional bound above and the induction hypothesis.

Thus, with probability at least $(1-\epsilon)^t$, the graph $\mathcal{C}_{W_t}$ has no cycles of length at most $2^t\ge N$, and hence is empty. By \cref{lem:nocycleIsolating}, $W_t$ is isolating for $\mathcal{M}^A_1$ and $\mathcal{M}^A_2$. By \cref{cor:GTA}, this implies that $w^T$ is isolating for $A$.
\end{proof}

Now we construct the collection $\mathcal{U}$ that will be used together with \cref{lem:isolatingw} to obtain an isolating set.

\begin{claim}\label{cl:circulationWeights}
Let $N,s$ be positive integers, and let $0<\epsilon<1$. Then there exists a collection $\mathcal{U}\subseteq \Z_{\ge 0}^{N}$ such that for any bipartite graph $G$ with edge set $[N]$ and any set $\mathcal{C}$ of at most $s$ cycles in $G$, at least a $(1-\epsilon)$ fraction of the vectors $u\in \mathcal{U}$ satisfy $u(C)\neq 0$ for every $C\in \mathcal{C}$. Moreover,
\[
|\mathcal{U}|,\ \max_{u\in \mathcal{U},\, j\in [N]} u[j] \le \frac{2sN}{\epsilon}.
\]
\end{claim}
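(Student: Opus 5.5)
The plan is to use the standard Fredman--Komlós--Szemerédi style modular-weight construction from isolation lemmas, as in \cite{DBLP:journals/cc/GurjarT20}. Fix the base weight $u^*\in\Z_{\ge0}^N$ with $u^*[j]=2^{j-1}$. For every cycle $C=(e_1,\dots,e_{2\ell})$ of distinct edges, the alternating sum $\sum_i (-1)^i u^*[e_i]$ is a signed sum of distinct powers of two with nonzero coefficients. Hence it is nonzero, and its absolute value is less than $2^N$. Thus $u^*(C)$ is a nonzero integer in $(0,2^N)$.

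For each integer $p$ in a range $[1,m]$, define $u_p[j]:=2^{j-1}\bmod p\in\{0,\dots,p-1\}$. Then $u_p(C)\equiv \pm u^*(C)\pmod p$, since the alternating sum is computed termwise mod $p$. So $u_p(C)\neq0$ whenever $p\nmid u^*(C)$.

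Let $\mathcal U=\{u_p: p\text{ prime},\ p\le m\}$. I would take $m$ of order $(sN/\epsilon)$, up to the constant needed for the prime number theorem bounds. A fixed nonzero integer below $2^N$ has at most $N$ distinct prime divisors, since each is at least $2$. So for a set $\mathcal C$ of at most $s$ cycles, at most $sN$ primes $p$ are bad. We then need the number of primes up to $m$ to be at least $sN/\epsilon$, giving a bad fraction of at most $\epsilon$.

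The bound to verify is that both $|\mathcal U|$ and the maximal entry (at most $m-1$) are at most $2sN/\epsilon$. This is the main obstacle. Primes up to $m$ number about $m/\ln m$, so demanding $sN/\epsilon$ primes forces $m\approx (sN/\epsilon)\ln(sN/\epsilon)$, which overshoots $2sN/\epsilon$ by a log factor.

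To remove the log, I would replace primes by all integers $p\in[m/2,m]$, or better, by a finer count. A nonzero integer $x<2^N$ has at most $\log_{m/2} 2^N = N/\log_2(m/2)$ prime divisors exceeding $m/2$ with multiplicity, but composite moduli complicate the count. The cleaner route is to bound the bad moduli directly: $p$ is bad iff $p\mid u^*(C)$ for some $C$. The number of divisors in $[m/2,m]$ of a fixed $x<2^N$ can be large, so this may not work.

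Alternatively, keep primes and accept that the claimed constant $2$ needs $\epsilon$ absorbing a log. Since the downstream use only needs polynomial bounds, I would first try the primes in $(m/2,m]$. Each $x<2^N$ has at most $N/\log_2(m/2)$ such prime divisors, and the number of primes in $(m/2,m]$ is roughly $m/(2\ln m)$. The logs then cancel: the bad fraction is at most $sN\cdot 2\ln m/(m\log_2(m/2))\approx 2\ln2\cdot sN/m$. Taking $m=\lceil 2sN/\epsilon\rceil$ gives bad fraction at most $\epsilon$, with entries below $m$ and $|\mathcal U|\le m$. What remains is a Chebyshev/Bertrand-type lower bound on the number of primes in $(m/2,m]$, with care for small $m$.
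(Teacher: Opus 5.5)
\textbf{There is a genuine gap.} Your reduction is sound as far as it goes. With $u^*[j]=2^{j-1}$ and $u_p[j]=2^{j-1}\bmod p$, you do get $u_p(C)\neq 0$ whenever $p\nmid u^*(C)$. Also, a nonzero integer of absolute value below $2^N$ has fewer than $N/\log_2(m/2)$ prime factors in $(m/2,m]$.

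The problem is the step you leave open, which is not a routine Chebyshev estimate. With $m=\lceil 2sN/\epsilon\rceil$ you need $\pi(m)-\pi(m/2)\ge sN/(\epsilon\log_2(m/2))$, which is essentially $\pi(m)-\pi(m/2)\gtrsim m\ln 2/(2\ln(m/2))$. Your own heuristic shows the slack is only a constant factor, so this can hold at best for large $m$. The claim, however, must hold for all $N,s,\epsilon$, and the argument really fails for small $m$.

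\emph{Example with a length-$2$ cycle.} Take $s=1$, $N=4$, $\epsilon=0.85$. Then $m=10$, the only prime in $(5,10]$ is $7$, and $u_7=(1,2,4,1)$. If edges $1$ and $4$ are parallel, the cycle $C=(1,4)$ has $u_7(C)=0$. Such cycles are allowed by \cref{def:cycle}, which is how the claim is applied. So no vector in your $\mathcal U$ is good, although a $0.15$ fraction must be; here the construction itself fails.

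\emph{Example with a simple graph.} Take $s=1$, $N=13$, $\epsilon\in[26/27,1)$. Then $m=27$, and the only moduli are $17,19,23$. Now $17\cdot 19\cdot 23=7429=(2^{12}+2^{11}+2^{10}+2^{9}+2^{5})-(2^{8}+2^{4}+2^{3}+2^{1}+2^{0})$, which is $u^*(C)$ for a $10$-cycle on edges from $[13]$. Your divisibility criterion therefore certifies no good vector, while the claim requires one. Repairing this would need explicit Dusart-type prime bounds for large $m$ plus finer, separate arguments for small $m$, and the proposal supplies neither.

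\textbf{The missing idea.} Fix a single modulus and vary the evaluation point, as the paper does. Pick a prime $p$ with $sN/\epsilon\le p<2sN/\epsilon$ (Bertrand), and set $u_a[i]=a^i\bmod p$ for $a\in\{0,\dots,p-1\}$. If $u_a(C)=0$, then $a$ is a root in $\F_p$ of $f_C(y)=\sum_i y^{e_{2i-1}}-\sum_i y^{e_{2i}}$. This polynomial is nonzero mod $p$ (distinct exponents, coefficients $\pm1$) and has degree at most $N$. Hence $\prod_{C\in\mathcal C}f_C$ has at most $sN$ roots, so the bad fraction is at most $sN/p\le\epsilon$. Moreover $|\mathcal U|=p$ and all entries are below $p<2sN/\epsilon$. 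This uses only Bertrand's postulate and no prime-density estimates, so it gives the constant $2$ for every choice of parameters.
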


\begin{proof}
Let $p$ be a prime satisfying $\frac{sN}{\epsilon}\le p< \frac{2sN}{\epsilon}$, which exists by Bertrand's postulate. For each $a\in \{0,1,\dots,p-1\}$, define a vector $u_a\in \{0,1,\dots,p-1\}^N$ by
\[
u_a[i]=a^i \bmod p \qquad \text{for all } i\in [N].
\]
Let $\mathcal{U}=\{u_a:0\le a<p\}\subseteq \Z_{\ge 0}^N$. Then $|\mathcal{U}|=p<2sN/\epsilon$, and for every $u\in\mathcal{U}$ and $j\in[N]$, we have $u[j]<p<\frac{2sN}{\epsilon}$, as desired.

Let $C=(e_1,e_2,\dots,e_{2\ell})$ be a cycle in $G$, where each $e_i\in [N]$. Define
\[
f_C(y)=\sum_{i=1}^{\ell} y^{e_{2i-1}}-\sum_{i=1}^{\ell} y^{e_{2i}} \in \Z[y],
\]
which is nonzero since the edges of $C$ are distinct.

For any $a\in \{0,1,\dots,p-1\}$, we have
\[
f_C(a)\equiv \sum_{i=1}^{\ell} u_a[e_{2i-1}] - \sum_{i=1}^{\ell} u_a[e_{2i}] \pmod p.
\]
In particular, if $f_C(a)\not\equiv 0\pmod p$, then $u_a(C)\neq 0$.

Now let $f_{\mathcal{C}}=\prod_{C\in \mathcal{C}} f_C$. Since each $f_C$ is nonzero, we have $f_{\mathcal{C}}\neq 0$. Its degree is at most $sN$. Hence $f_{\mathcal{C}}$ has at most $sN$ roots in $\F_p$. Therefore, there are at least $p-sN$ values $a\in \{0,1,\dots,p-1\}$ such that $f_{\mathcal{C}}(a)\not\equiv 0\pmod p$. For each such $a$, we have $f_C(a)\not\equiv 0\pmod p$ for all $C\in\mathcal{C}$, and hence $u_a(C)\neq 0$ for all $C\in\mathcal{C}$. Since $p\ge sN/\epsilon$, we have $\frac{p-sN}{p}\ge 1-\epsilon$. Therefore, at least a $(1-\epsilon)$ fraction of the vectors in $\mathcal{U}$ satisfy $u(C)\neq 0$ for every $C\in\mathcal{C}$.
\end{proof}

Now we complete the proof of \cref{lem:alphaHittingset}.

\begin{proof}[Proof of \cref{lem:alphaHittingset}]
Let $t=\lceil \log_2 N\rceil$. Let $\mathcal{U}\subseteq \Z_{\geq 0}^N$ be given by \cref{cl:circulationWeights} with parameters $s=N^4$ and $\epsilon=\delta/(2t)$, and let $\mathcal{W}\subseteq (\Z_{\geq 0}^{t})^N$ be the corresponding collection from \cref{lem:isolatingw}. Let $S\subseteq \F$ be any set of size at least $\frac{8N^6t^2}{\delta^2}$. Such a set exists by the assumption on $|\F|$.

Define
\[
\mathcal{H}=\left\{ a_{w,v}\in \F^N \;\middle|\; w=(w_1,\dots,w_N)\in \mathcal{W},\ v\in S^t,\ a_{w,v}[i]=\prod_{j\in [t]} v_j^{\,w_i[j]} \right\}.
\]
We claim that $\mathcal{H}$ is a $\delta$-hitting set for $\VP{r}{N}{\F}$.

Let
\[
A(x)=\sum_{i\in [N]} x_i A_i
\]
with $A_i\in \F^{r\times r}$, $\rank(A_i)\le 1$ for all $i\in [N]$, and $\det(A)\neq 0$. By \cref{lem:isolatingw}, at least a $\left(1-\frac{\delta}{2t}\right)^t \ge 1-\frac{\delta}{2}$ fraction of the elements of $\mathcal{W}$ are isolating for $A$. Fix such a weight assignment $w=(w_1,\dots,w_N)$, where $w_i\in \Z_{\geq 0}^t$. Let $A_w(y)$ be the matrix obtained from $A$ by replacing each $x_i$ with $\prod_{j\in [t]} y_j^{\,w_i[j]}$. By \cref{cl:isolationImpliesnon-zero}, we have $\det(A_w(y))\neq 0$.

By the construction of $\mathcal{W}$ and the properties of $\mathcal{U}$ from \cref{cl:circulationWeights}, each entry of $A_w(y)$ is a polynomial in $y_1,\dots,y_t$ of total degree at most $t\max_{u\in \mathcal{U},\, j\in [N]} u[j]\le \frac{4N^5t^2}{\delta}$. Since $r\le N$, it follows that
\[
\deg(\det(A_w(y))) \le \frac{4N^6t^2}{\delta}.
\]

By construction of $\mathcal{H}$, for every $v\in S^t$ we have $\det(A)(a_{w,v})=\det(A_w(y))(v)$. Therefore, by the Schwartz--Zippel lemma, for at least a $1-\frac{\deg(\det(A_w(y)))}{|S|}\ge 1-\frac{\delta}{2}$ fraction of points $v\in S^t$, we have $\det(A_w(y))(v)\neq 0$.

Combining the two bounds, for at least a $\left(1-\frac{\delta}{2}\right)^2 \ge 1-\delta$ fraction of points in $\mathcal{H}$, the polynomial $\det(A)$ evaluates to a nonzero value. Hence $\mathcal{H}$ is a $\delta$-hitting set for $\VP{r}{N}{\F}$.

Finally, by construction, 
\[
|\mathcal{H}|=|\mathcal{W}|\cdot |S|^t = |\mathcal{U}|^t\cdot |S|^t
= \left(\frac{N}{\delta}\right)^{O(\log N)}. \qedhere
\]
\end{proof}

\end{document}